\pgfplotsset{compat=1.18}
\numberwithin{figure}{section}
\newcommand{\readernote}[1]{\todo[size=\scriptsize]{#1}}
\DeclareMathOperator{\Exp}{\mathbb{E}}
\DeclareMathOperator{\Prob}{\mathbb{P}}
\DeclareMathOperator{\argmax}{arg\,max}
\DeclareMathOperator{\indic}{\mathbbm{1}}
\theoremstyle{definition}
\declaretheorem[name=Proposition]{prop}
\declaretheorem[name=Lemma,sibling=prop]{lem}
\declaretheorem[name=Corollary,sibling=prop]{cor}
\theoremstyle{definition}
\declaretheorem[name=Definition]{defn}
\numberwithin{equation}{section}
\crefname{thm}{Theorem}{Theorems}
\Crefname{thm}{Theorem}{Theorems}
\crefname{lem}{Lemma}{Lemmas}
\Crefname{lem}{Lemma}{Lemmas}
\crefname{prop}{Proposition}{Propositions}
\Crefname{prop}{Proposition}{Propositions}
\crefname{cor}{Corollary}{Corollaries}
\Crefname{cor}{Corollary}{Corollaries}
\crefname{claim}{Claim}{Claims}
\Crefname{claim}{Claim}{Claims}
\crefname{defn}{Definition}{Definitions}
\Crefname{defn}{Definition}{Definitions}
\crefname{cond}{Condition}{Conditions}
\Crefname{cond}{Condition}{Conditions}
\crefname{rem}{Remark}{Remarks}
\Crefname{rem}{Remark}{Remarks}
\newtheorem*{claim*}{Claim}
\newcommand{\commentout}[1]{}
\newcommand{\quality}{\theta}
\newcommand{\typesig}{s}
\newcommand{\strat}{\sigma}
\newcommand{\prior}{\rho}
\newcommand{\signalstr}{\mathcal{E}}
\newcommand{\experiment}{\mathcal{E}}
\newcommand{\signalset}{\mathbf{S}}
\newcommand{\interim}{\psi}
\newcommand{\interimfcn}{\Psi}
\newcommand{\payoff}{\pi}
\newcommand{\sumpayoff}{\Pi}
\begin{document}

\vspace{15pt}

\title{The (Mis)use of Information in Decentralised Markets
\\
    }
\author{\href{https://www.carlosakkar.com}{D. Carlos Akkar}\footnote{Nuffield College and Department of Economics, Oxford. \href{mailto:akkarcarlos@gmail.com}{akkarcarlos@gmail.com}  \\
I thank Ian Jewitt, Margaret Meyer, Daniel Quigley, Ludvig Sinander, Paula Onuchic, and Péter Es\H{o} for long discussions and generous guidance. I also thank Inés Moreno de Barreda, Costas Cavounidis, Stephan Lauermann, Manos Perdikakis, Daniel Rappoport, Clara Schreiner, Alex Teytelboym, Andy Zapechelnyuk, seminar audiences at the Oxford Student Theory Workshop, NASMES 2024, EEA-ESEM 2024, the 2024 Paris Transatlantic Theory Workshop, and speakers at the Nuffield Economic Theory Seminar for feedback.} 
}
\date{\today}
\maketitle
\readernote{COMMENTS ON!}





\begin{abstract}
    \normalsize{A seller offers an asset in a decentralised market. 
    Buyers have private signals about their common value. 
    I study whether the market becomes allocatively more efficient with (i) more buyers, (ii) better-informed buyers. Both increase the information available about buyers' common value, but also the adverse selection each buyer faces. With more buyers, trade surplus eventually increases and converges to the full-information upper bound if and only if the likelihood ratios of buyers' signals are unbounded from above. Otherwise, it eventually decreases and converges to the no-information lower bound. With better information about trades buyers would have accepted,  trade surplus increases. With better information about trades they would have rejected, trade surplus decreases---unless adverse selection is irrelevant. For binary signals, a sharper characterisation emerges: stronger good news increase total surplus, but stronger bad news eventually decrease it.}
\end{abstract}

\newpage

\section{Introduction}

\looseness=-1
In this paper, I ask whether more information about buyers' common value for an asset improves or harms allocative efficiency in a decentralised market. 
The setting is parsimonious: the seller sequentially visits $n$ buyers until one accepts to trade at the seller's commonly known reservation value.\footnote{\looseness=-1 In Section \ref{section:ultimatumprices}, I show that this simplifying assumption is without loss in a setting where buyers extend take-it-or-leave-it offers to the seller, and the seller takes an offer unless he expects to secure greater surplus in later visits.} Negotiations are private: no buyer knows how many others the seller visited already. 
Trade is efficient when buyers' common value for the asset---its \textit{quality}---is High, but not when it is Low. 
Each buyer holds a private signal about the asset's quality; conditional on quality, these signals are IID. A buyer accepts trade when she expects it to yield positive surplus; otherwise she rejects. 
I \nolinebreak ask:
\begin{enumerate}[left=0pt, itemsep=-2pt]
    \item Does the expected surplus from trade increase with more buyers, each with an additional signal?
    \item Does the expected surplus from trade increase with better-informed buyers, i.e., each with a more informative signal?
\end{enumerate}

Both more and better-informed buyers increase the amount of information available in the market about the asset's quality. However, more information in the market---through either channel---is a double-edged sword for allocative efficiency. On the one hand, it might push buyers to better trades by helping them screen the asset's quality better. On the other hand, it might push them to worse trades by exposing them to greater adverse selection: when there are more buyers in the market, more might have already rejected the seller; when each buyer is better-informed, each rejection might owe to a worse signal. This paper shows that the \textit{kind} of information in the market determines how this trade-off is resolved.

My first main result, Theorem \ref{thm:extensive}, answers how increasing the number of buyers in the market influences allocative efficiency. This hinges on whether the likelihood ratios of buyers' signals are unbounded from above. If they are, the expected surplus from trade eventually increases in the number of buyers and converges to the full-information benchmark: a High quality seller almost surely trades, but a Low quality seller never does. This is the upper bound for equilibrium surplus in the market: all gains from trade are exhausted. If they are not, the expected surplus from trade eventually decreases in the number of buyers and converges to the no-information benchmark: either the seller almost surely trades regardless of his quality, or the expected surplus is zero when he trades. This is the lower bound for equilibrium surplus in the market (Proposition \ref{prop:selectivebetter}):
buyers' ability to screen the asset's value generates no additional gains from trade.

That the outcome (whether trade occurs) in a large market reveals buyers' common value for the asset if and only if the likelihood ratios of their signals are unbounded from above is reminiscent of a large sealed bid common value auction à la \textcite{wilson1977} and \textcite{milgrom1979}. There, too, the outcome (the winning bid) reveals the asset's quality if and only if the bidders' signal structure satisfies the same condition.\footnote{This is when bidders' common value for the item can assume two values. \textcite{milgrom1979} identifies ``distinguishability'' as a condition that generalises ``likelihood ratios unbounded from above'' when buyers' common value for the item can assume any number of finite values.} However, when this condition is violated, trade in a decentralised market either becomes completely uninformative about the asset's quality, or only reveals that the expected gains from trade are zero. In contrast, the winning bid in a large auction may still approximate buyers' common value well.\footnote{See, for instance, Section IV in \textcite{lauermann_wolinsky_auction}.}

\looseness=-1
This also offers an interesting contrast with \textcite{lauermann_wolinsky_searchadverseselection}. In a model like mine\footnote{See the Related Literature section for a more detailed discussion.} but where the gains from trade are always positive, they find that (generically) the outcome in a large market either fully reveals or is completely independent of the asset's quality. Theorem \ref{thm:extensive} establishes that another possibility emerges when, ex-ante, the expected gains from trade are negative: trade might be \textit{partially} informative about the asset's quality. However, trade only reveals the expected gains not to be negative but zero instead\footnote{Section \ref{section:supplements} illustrates this with a numerical example.}---this information has no bearing on total surplus.

\looseness=-1
In Theorems \ref{thm:intensive_binary} and \ref{thm:intensive}, I answer how giving better information to each existing buyer influences allocative efficiency. Theorem \ref{thm:intensive_binary} shows that, when buyers' signal structure is binary, stronger good news (higher likelihood ratio at the top) always increases surplus; but stronger bad news (lower likelihood ratio at the bottom) eventually decreases it. 
The former might prevent a seller from trading, but recovers surplus in doing so. The latter might help a seller trade, but this eventually destroys surplus due to adverse selection. Theorem \ref{thm:intensive} generalises this insight to arbitrary finite signal structures: additional information where a buyer would have accepted trade (a \textit{negative override}) increases surplus; but additional information where she would have rejected trade (a \textit{positive override}) decreases it---unless \textit{adverse selection is irrelevant} in the appropriate sense (Definition \ref{defn:irrelevance}).

\looseness=-1
To understand the main insight, consider buyers with a binary signal structure: each buyer receives either a good, or a bad signal. For simplicity, ignore equilibrium considerations; simply let buyers accept upon a good signal and reject upon a bad signal. Now, consider revealing additional information to each buyer---another binary signal. This additional information could serve two purposes. If it is revealed after an initial good signal, it could lead the buyer to revise her initial decision to a rejection. I call information that serves this purpose a \textit{negative override}.\footnote{I follow the language used in credit markets: a negative (downgrade) override occurs when a human evaluator revises a prospective borrower's algorithmic credit score downwards, in light of overlooked information. A positive (upgrade) override occurs when she revises it upwards. See Section 2.5 in \textcite{vangestel_baesens_credit_scoring}; as well as par. 110 in \textcite{ecb_internal} and pg. 140 in \textcite{autoabs}.} If it is revealed after an initial bad signal, it could lead her to revise her initial decision to an acceptance. I call such information a \textit{positive override}. In this simple binary-on-binary example, we can interpret a negative override as a strengthening of good news: a buyer can rely on two good signals to accept. A positive override strengthens bad news: a buyer can rely on two bad signals to reject.

A negative override increases surplus. It makes it harder for the seller to trade---a seller some buyer would have accepted before the negative override became available might now be rejected by every buyer. But when this happens, it reveals the expected surplus from trade to be negative: each buyer observed a bad signal and concluded that trading would reduce surplus, despite not knowing (but suspecting) that all buyers reached the same conclusion.  

In contrast, a positive override might decrease surplus. It makes it easier for the seller to trade---a seller every buyer would have rejected might now trade with some buyer. However, due to adverse selection, the expected surplus from such a trade might be negative: the buyer who trades with the seller does not observe how many others rejected him previously. If too many did, those buyers' bad signals might reveal expected surplus from trade to be negative despite her good signal. The buyer might find that she traded when she should not have.

I show that adverse selection severely limits positive overrides from raising surplus: unless \textit{adverse selection is irrelevant}, i.e., a buyer need not care about the number of previous refusals the seller received, a positive override reduces total surplus. 

This insight underpins Theorem \ref{thm:intensive_binary}'s sharp characterisation for binary signals. To extend it to arbitrary signal structures in Theorem \ref{thm:intensive}, I formalise a positive (and, negative) override as a \textit{local mean preserving spread}\footnote{See Definitions \ref{defn:local_mps} and \ref{defn:overrides}.}  of a signal upon which buyers reject (and, accept). Studying informativeness at the level of local spreads is essential to the tractability of my exercise but sacrifices no  generality: any Blackwell improvement is a combination of finitely many local spreads.

\looseness=-1
Theorems \ref{thm:intensive_binary} and \ref{thm:intensive} show that too much information can be detrimental for allocative efficiency. So, finally, I study how a regulator can coarsen buyers' information to maximise expected surplus. In Section \ref{section:regulator}, I 
show that through this policy tool, the regulator aims to prevent a buyer from trading unless adverse selection is irrelevant, i.e., unless she should trade even if everyone else rejected the seller. The implication is striking: the regulator wants buyers to base their decisions on the \textit{highest} number of rejections the seller may have received, not the expected number of rejections.

\subsection*{Contribution}
\label{section:contribution}

I view the main contribution of my paper to be twofold. First, I study a question that has been largely overlooked by the literature on information aggregation in markets. Most of this literature\footnote{Prominent and related work in this literature includes \textcite{wilson1977}, \textcite{milgrom1979}, \textcite{riordan1993} and \textcite{lauermann_wolinsky_auction} for auctions (centralised markets) and \textcite{wolinsky1990}, \textcite{zhu2012}, and \textcite{lauermann_wolinsky_searchadverseselection} for decentralised markets.}  asks whether the outcome in a large market reflects all information its participants have. Instead, I ask whether a \textit{finite} decentralised market can convert \textit{more information} among its participants to more efficient outcomes. I study two channels which increase information in the market. The first is through an additional buyer, bringing an additional signal to the market. This paper is the first to explore this channel in a decentralised market.\footnote{\looseness=-1 \textcite{riordan1993} studies how allocative efficiency in a common value auction changes with an additional bidder.} The second is through better-informed buyers. To the best of my knowledge, this channel has not been explored by previous work.\footnote{Notably, \textcite{glode_opp_2020} show that a decentralised OTC market provides buyers with greater information acquisition incentives than a centralised limit-order market, so might be more efficient than the latter. I discuss their work in light of my contribution under Related Literature.}

Second, my findings have important policy implications for markets where trades are negotiated bilaterally and with little to no trade transparency: such as over-the-counter markets,\footnote{\looseness=-1 OTC markets are characterised by sequential contacts and little transparency (\cite{duffie_dark}; \cite{zhu2012}). Liquidity providers typically make ultimatum offers that last ``as long as the breath is warm'' (\cite{besembinder_maxwell}).} credit markets,\footnote{
    In the US and the UK, credit scores mask borrowers' recent applications, and borrowers exercise little bargaining power against lenders (\cite{agarwal2024searching} and \cite{uk_rate_shopping}).} and the housing market.\footnote{
        In the housing market, 
        ``buyers and sellers must search for each other'' (\cite{han_strange_housing_microstructure}).
        Sellers frequently relist, making it difficult to infer how many viewings resulted in no trade:
        \textcite{remax2024} advises ``if a property has been sitting on the market and going stale, there is no harm in relisting it so that it appears fresh and new''.
    } 
Recent technological advances have allowed participants in these markets to enjoy increasingly greater access to information.\footnote{
    Hedge funds and broker-dealers use increasingly sophisticated data and algorithms to assess trades' profitability (\cite{fsb2017}); lenders use cutting-edge ML technology in credit scoring (\cite{fsb2017}); algorithmic traders in housing markets analyse and execute trades faster than traditional investors (\cite{Raymond_market_algorithms}).}
It is commonly presumed that the ``more efficient processing of information, for example in credit markets, financial markets, [...] contribute to a more efficient financial system'' (\cite{fsb2017}). My Theorems \ref{thm:intensive_binary} and \ref{thm:intensive} show that this presumption---which ignores the adverse selection problem in these markets---is misleading: adverse selection may claw back on market participants' ability to screen for efficient trades, and lead to \textit{lower} surplus when each can access better information.
My Theorem \ref{thm:extensive} shows that adverse selection might cause increased competition to hurt efficiency, too. This validates a concern empirically recognised by regulators and industry leaders.\footnote{Regulators (partially) blamed adverse selection for the collapse of a British bank, HBOS: ``the borrowers who came through its doors inevitably included many whom better established banks had turned away'' (\textcite{kay_ft_advsel}).} 

\looseness=-1
Existing regulation in credit markets already limits the information lenders can use to assess borrowers.\footnote{For instance, following the 2008 crash, the Basel III Accord severely limited the use of ``advanced internal ratings systems'' to determine credit risk exposure. This overturned the conventions set in Basel II. See \textcite{basel3_summary}.} Directly resonating with Theorems \ref{thm:intensive_binary} and \ref{thm:intensive}, ECB guidelines (\citeyear{ecb_internal}) state that ``institutions should be more restrictive with positive overrides than with negative ones''.
I offer a novel justification for such policies, rooted in adverse selection.\footnote{Currently, these policies are mostly justified by a distrust in the ``robustness and prudence'' of lenders' abilities to screen borrowers (\cite{basel3_summary}).} In Section \ref{section:regulator}, I study how a regulator can design restrictions on market participants' information to increase total surplus in the market.

\subsection*{Literature Review}
\label{section:lit}

\looseness=-1
The first question I ask is whether allocative efficiency in a decentralised market increases with more buyers. 
\textcite{riordan1993} asks this question in a first price auction with common values.\footnote{Relatedly, \textcite{ottaviani_sorensen_ditillio_2021} study whether the winning bid in a first price common value auction becomes more informative about the asset's value when there are more bidders. Efficiency is of no direct concern in their setting: trade is always efficient and always materialises.} There, the adverse selection problem is simply the winner's curse---the winner understands that she had the highest signal among all bidders. In contrast, here, a buyer who trades understands that she had the highest signal among those buyers the seller previously visited.
Consequently, the sufficient condition \textcite{riordan1993} identifies\footnote{Where $F_{\quality}(s)$ is the CDF of the signal distribution for quality $\quality \in \{L,H\}$, he finds that the sign of the expression $\frac{{f_H(.)}/{F_H(.)}}{{f_L(.)}/{F_L(.)}} - \frac{F_H(.)}{F_L(.)}$ over the support is a sufficient condition to determine this.} for surplus to be increasing or decreasing in the number of bidders differs from the necessary and sufficient condition Theorem \ref{thm:extensive} supplies for a decentralised market. 

My second question---whether efficiency increases when buyers are \textit{better-informed}---is novel in this literature. The closest papers, to the best of my knowledge, are \textcite{levin_2001} and \textcite{glode_opp_2020}.
\textcite{levin_2001} asks whether in a lemons market à la \textcite{akerlof70}---where the seller's private information is the root cause of inefficiency---a better-informed seller necessarily hurts efficiency. He finds that the right kind of information can increase efficiency. His environment, and therefore his characterisation, differs from mine. \textcite{glode_opp_2020} show that buyers might have greater incentives to acquire information in an OTC market than in a limit-order market; so, the former can be more efficient than the latter. They investigate a particular information technology: buyers invest in their probability of getting a fully revealing signal. My Theorems \ref{thm:intensive_binary} and  \ref{thm:intensive} show that in general, a market with better-informed buyers might be less efficient due to adverse selection.

My model 
is closest to \textcite{zhu2012} and \textcite{lauermann_wolinsky_searchadverseselection}. \textcite{zhu2012} assumes that trade is efficient regardless of the asset's quality;\footnote{Both buyers' common value and the seller's reservation value for a Low quality asset is 0.} otherwise his model is identical to mine. He shows that unless the likelihood ratios of buyers' signals are unbounded from above, a large market might fail to be efficient---a High quality seller might fail to trade. Where trading with a Low quality seller is inefficient, Theorem \ref{thm:extensive} offers a stronger conclusion: unless the same condition holds, surplus in a large market converges to the no-information lower bound. 

\looseness=-1
\textcite{lauermann_wolinsky_searchadverseselection}, too, study a decentralised market for a common value asset; but they focus on a large market where the seller (there, the buyer) (i) pays a small cost for each buyer (there, seller) he visits and (ii) has bargaining power. Importantly, there is no efficiency concern: trade is always efficient and executed. They find that generically, the transaction price carries either full or no information about the asset's common value. Furthermore, the seller's costly search further impedes the revelation of the asset's value: the condition necessary for the transaction price to be fully revealing is stronger than the unboundedness of likelihood ratios. 

\textcite{Chen_lauermann_veto} study a similar model to mine. In their model, an organiser sequentially contacts voters, searching for a veto. Each contact is costly. Voters wish the reform to be vetoed in state $\alpha$, but not in state $\beta$. The organiser knows the state, but the voters do not. \textcite{Chen_lauermann_veto} show that as the number of voters rises, voters' payoffs converge to the full-information benchmark: the reform is nearly always vetoed in state $\alpha$, but passed in state $\beta$. This is achieved through the organiser's incentives to continue sampling in state $\alpha$, but to stop sampling in state $\beta$. 

Besides having an organiser who samples at cost rather than a seller who samples freely, \textcite{Chen_lauermann_veto} differs crucially from this paper in their voters' behaviour. Their voters have common interests: they care about whether \textit{some} voter vetoes the policy. So, their behaviour maximises their aggregate payoffs: each voter conditions on being pivotal---the last voter who may refuse a veto. In my model, a buyer cares only about whether \textit{she} buys the asset. Her behaviour maximises her individual payoff: she conditions on the expected number of buyers who refused the seller before her. This acts as an additional friction which prevents the market from achieving higher efficiency. 

\textcite{Cavounidis_Chunky} also studies a model very similar to mine; but there, ``appraisers'' (replacing the buyers here) must acquire costly information about the state. Rather than focusing on the effect of increasing the amount of information available in the market, \textcite{Cavounidis_Chunky} asks how many appraisers the market can profitably accommodate, given those appraisers' information acquisition costs. He finds that when the cost of information is ``divisible'', the market can accommodate arbitrarily many appraisers; but when it is ``chunky'', the size of the market is constrained by the adverse selection problem this paper also highlights. 

\looseness=-1
My model also admits a social learning interpretation, in the tradition of \textcite{bhw_92}. It can be considered as a variant of the classic model: later decision makers (buyers) are called to decide only if those before them reject, and no one observes her position in the queue (as in \textcite{herrera_horner_2013}). Most work in this literature focuses on whether full learning attains with a large number of decision makers. Instead, my results speak to how more information, through more or better-informed decision makers, influences the welfare of finitely many decision makers. 

\looseness=-4 The remainder is organised as follows. Section \ref{section:model} presents the model. Section \ref{section:equilibria} presents preliminary analyses about equilibria and total surplus in the market. Section \ref{section:morebuyers} presents Theorem \ref{thm:extensive}. Section \ref{section:betterinfo} presents Theorems \ref{thm:intensive_binary} and \ref{thm:intensive} (in Subsections \ref{section:binary} and \ref{section:finite}). Section \ref{section:regulator} discusses how buyers' information can be coarsened to maximise total surplus. Section \ref{section:ultimatumprices} presents an extension where buyers offer take-it-or-leave it prices to the seller. Section \ref{section:supplements} presents a numerical example that supplements the discussion in Section \ref{section:morebuyers}. Section \ref{section:proofs} presents the proofs and results the main text omits.

\section{Model}
\label{section:model}

\looseness=-1
The seller (he) of an indivisible asset sequentially visits $n \in \mathbb{N}$ prospective buyers (she) in a uniformly random order. He sells to the first one who accepts to pay his reservation value $c \in [0,1]$. 
The asset's (seller's) quality $\quality$ is either High or Low, $\quality \in \{H,L\}$. When the buyer he visits rejects trade, no transaction takes place and the seller proceeds to visit the next buyer. If the buyer accepts, she pays the seller his reservation value and receives the asset. She enjoys a return of 1 if the asset's quality is High, but 0 if it is Low. The game ends when a buyer accepts the seller, or they all reject him.

At the outset of the game, the asset's quality is unknown;\footnote{Until Section \ref{section:ultimatumprices}, the seller's knowledge about the asset's quality is immaterial.} all players share the common prior that it is High with probability $\prior$ and Low otherwise. Each buyer obtains additional {private information} about the asset's quality through the outcome of a Blackwell experiment $\experiment = (\signalset, p_L, p_H)$. The outcome $\typesig$ of the experiment---the buyer's \textit{signal}---is drawn from the finite set $\signalset$ with a distribution $p_{\quality}$. Conditional on the asset's quality, buyers' signals are IID. The joint distribution of buyers' signals conditional on the asset's quality is common knowledge. 

The buyer visited by the seller receives no information about how many others the seller previously visited. Nonetheless, she deduces that all those buyers rejected the seller. Through this, she extracts additional information about the asset's quality.

\looseness=-1
The buyer forms her posterior belief about the asset's quality using the information conveyed by the seller's visit and her private signal. 
First, she uses the information conveyed by the seller's visit to revise her prior belief $\prior$ to an interim belief $\interim$. Then, she uses her private signal to revise her interim belief to a posterior belief $\Prob_{\interim} \left( \quality = H \mid \typesig \right)$.

A buyer's strategy $\strat: \signalset \to [0,1]$ maps every signal $\typesig \in \signalset$ she might observe to a probability $\strat(\typesig)$ with which she accepts to trade. Her strategy $\strat$ is \textit{optimal against the interim belief} $\interim$ if, given this interim belief and her signal, the buyer accepts (rejects) to trade whenever her expected payoff from trading with the seller is positive (negative):
\begin{equation*}
    \strat(\typesig) =
    \begin{cases}
        0  & \Prob_{\interim} \left( \quality = H \mid \typesig \right) < c \\
        1 & \Prob_{\interim} \left( \quality = H \mid \typesig \right) > c
    \end{cases}
\end{equation*}
She may accept to trade with any probability when she expects zero surplus from trading.

I focus on \textit{symmetric Bayesian Nash Equilibria} of this game. Hereafter, I reserve the term \textit{equilibrium} for such equilibria unless I state otherwise. An \textit{equilibrium} is a strategy and interim belief pair $(\strat^*, \interim^*)$ such that:
\begin{enumerate}[itemsep=-2pt]
    \item The interim belief $\interim^*$ is \textit{consistent} with the strategy $\strat^*$; i.e., it is the interim belief of a buyer who believes all other buyers use the strategy $\strat^*$.
    \item The strategy $\strat^*$ is optimal given the interim belief $\interim^*$.
\end{enumerate}
I call any strategy $\strat^*$ that constitutes part of an equilibrium an \textit{equilibrium strategy}.

\section{Buyers' Beliefs, Equilibria, and Total Surplus}
\label{section:equilibria}

This section lays the necessary groundwork to discuss my main results. First, I discuss how buyers form their interim beliefs, and the fundamental properties of the set of equilibria. Then, I discuss how the total surplus from trade varies across different equilibria.

\subsection{Buyers' Beliefs and Equilibria}

No buyer learns how many others the seller visited before her. But, she deduces that all those past visits resulted in rejections. How does she interpret this information?

When each buyer uses a strategy $\strat$, a seller of quality $\quality$ faces a probability $r_{\quality} (\strat; \experiment)$ of getting rejected in any of his visits:
\begin{equation*}
    r_{\quality} (\strat; \experiment) := 1 - \sum\limits_{j=1}^{m} p_{\quality} \left( s_j  \right) \times \strat \left( s_j \right)
\end{equation*}
\noindent 
\looseness=-1
Every buyer understands that the seller is equally likely to decide to visit any number $k \in \left\{ 0,1,2,..., n-1 \right\}$ of other buyers before her. She will receive the seller's visit if and only if he is rejected by all those $k$ buyers.
Therefore, she assigns a probability $\nu_{\quality} \left( \strat; \experiment \right)$ to being visited by the seller:
\begin{equation*}
    \nu_{\quality} \left( \strat; \experiment \right) := \frac{1}{n} \times \sum\limits_{k=0}^{n-1} r_{\quality} \left( \strat; \experiment \right)^{k}
\end{equation*}
When the seller \textit{does} visit her, the buyer uses this information to update her prior belief about the seller's quality to an interim belief $\interim$:
 \begin{align*}
    \interim = \Prob \left( \quality = H \mid \textrm{visit received}
    \right) 
    &= 
    \frac{
        \Prob \left( \textrm{visit received} \mid \quality = H \right) \times \Prob ( \quality = H )
    }{
        \Prob \left( \textrm{visit received} \right)
    }
    \\[3pt]
    &=
    \frac{
        \prior \times \nu_{H} \left( \strat; \experiment \right)
        }{
        \prior \times \nu_{H} \left( \strat; \experiment \right)
        +
        (1 - \prior) \times \nu_{L} \left( \strat; \experiment \right)
        }
\end{align*}
This is the unique interim belief that is \textit{consistent} with every buyer using the strategy $\strat$. 
The buyer then uses her private signal $\typesig \in \signalset$ about the seller's quality to update her interim belief to a posterior belief:
\begin{equation*}
    \Prob_{\interim} \left( \quality = H \mid \typesig \right) = 
    \frac{ \interim \times p_H (s) }{ \interim \times p_H (s) + ( 1 - \interim ) \times p_L(s) }
\end{equation*}
Note that the informational content of the buyer's signal $\typesig \in \signalset$ is distilled by the ratio $\frac{p_H(s)}{p_H(s) + p_L(s)}$. For notational convenience, I will use the signal's label, $\typesig$, to refer to this ratio:
\begin{equation*}
    \typesig:= \frac{p_H(\typesig)}{p_H(\typesig) + p_L(\typesig)} \in [0,1] \qquad \textrm{for all } \typesig \in \signalset
\end{equation*}
Under this notation, the ratio $\frac{s}{1-s}$ simply equals the signal's likelihood ratio, $\frac{p_H(s)}{p_L(s)}$. For further convenience, I also enumerate the signals $\signalset$ in order of increasing likelihood ratios; $\signalset:= \left\{ s_1, s_2, ..., s_m \right\}$ where $s_1 \leq s_2 \leq ... \leq s_m$. Note that, for the same interim belief, a buyer's posterior belief is increasing in her signal's index.  

\looseness=-1
Whenever a strategy $\strat^*$ is optimal against the unique interim belief $\interim^*$ consistent with it, the pair $\left( \strat^*, \interim^* \right)$ forms an equilibrium. In principle, there might be many such pairs, or none at all. Proposition \ref{prop:eqmexist} sets the ground by ruling the latter possibility out and characterising the set of equilibria.

\begin{restatable}{prop}{propeqmexist}
    \label{prop:eqmexist}
    Let $\Sigma$ be the set of equilibrium strategies. Then:
    \begin{enumerate}
        \item $\Sigma$ is non-empty and compact.
        \item Any equilibrium strategy $\strat^*$ is monotone: for any $\strat^* \in \Sigma$, $\strat^*(s) > 0$ for some $s \in \signalset$ implies that $\strat^* (s') = 1$ for every $ s' \in \signalset'$ such that $s' > s$. 
        \item All equilibria exhibit adverse selection:  $\interim^* \leq \prior$ for any interim belief $\interim^*$ consistent with an equilibrium strategy $\strat^* \in \Sigma$.
    \end{enumerate}
\end{restatable}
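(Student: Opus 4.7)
The plan is to treat the three parts essentially independently, relying on Kakutani's theorem for (1), monotonicity of the posterior for (2), and the monotone likelihood ratio property for (3).

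For part (1), I would set up the composite correspondence $\strat \mapsto \interim(\strat) \mapsto BR(\interim(\strat))$ on the compact convex set $[0,1]^{|\signalset|}$. The map $\strat \mapsto \interim(\strat)$ is continuous because $r_\quality(\strat;\experiment)$ and $\nu_\quality(\strat;\experiment)$ are polynomial in $\strat$, and the denominator $\prior\,\nu_H + (1-\prior)\,\nu_L$ is bounded away from zero. The correspondence $BR(\interim)$ sets $\strat(\typesig)=0$ when $\Prob_{\interim}(\quality=H\mid\typesig)<c$, $\strat(\typesig)=1$ when the inequality flips, and leaves $\strat(\typesig)\in[0,1]$ at indifference; this is a product of compact intervals, hence non-empty, convex, and upper hemicontinuous in $\interim$. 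Kakutani then delivers some $\strat^*\in\Sigma$, and compactness of $\Sigma$ follows because the fixed-point set of an upper hemicontinuous correspondence on a compact space is closed.

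For part (2), I would fix an equilibrium $(\strat^*,\interim^*)$. Under the labelling convention $\typesig = p_H(\typesig)/(p_H(\typesig)+p_L(\typesig))$, the posterior $\Prob_{\interim^*}(\quality=H\mid\typesig)=\interim^*\typesig/[\interim^*\typesig+(1-\interim^*)(1-\typesig)]$ is strictly increasing in $\typesig$ whenever $\interim^*\in(0,1)$. If $\strat^*(\typesig)>0$ then optimality rules out $\Prob_{\interim^*}(\quality=H\mid\typesig)<c$, so the posterior is at least $c$; any $\typesig'>\typesig$ then yields a posterior strictly above $c$, forcing $\strat^*(\typesig')=1$. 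The degenerate cases $\interim^*\in\{0,1\}$ can be handled separately and only trivialise the claim. For part (3), part (2) makes $\strat^*$ non-decreasing in the label $\typesig$, and since the labelling orders signals by increasing likelihood ratio $p_H/p_L$, the distribution of $\typesig$ under $H$ first-order stochastically dominates that under $L$. Hence $\Exp_H[\strat^*(\Typesig)] \geq \Exp_L[\strat^*(\Typesig)]$, i.e.\ $r_H(\strat^*;\experiment)\leq r_L(\strat^*;\experiment)$. Because $\nu_\quality$ is monotone in $r_\quality$ on $[0,1]$, this gives $\nu_H\leq\nu_L$, and a direct inspection of the consistency formula for $\interim^*$ yields $\interim^*\leq\prior$.

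The main obstacle I expect is part (1): applying Kakutani cleanly requires confirming that $BR$ is genuinely convex-valued (it is, because the only freedom occurs at indifferent signals, where the full interval $[0,1]$ is admissible) and that the composition $\strat\mapsto BR(\interim(\strat))$ is upper hemicontinuous jointly in $\strat$. Parts (2) and (3) then fall out mechanically once the identification between the signal label $\typesig$ and the likelihood ratio is in hand, with the MLRP-to-FOSD step doing all the work for adverse selection.
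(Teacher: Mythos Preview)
Your proposal is correct and follows essentially the same route as the paper: Kakutani on the composite best-response correspondence for existence and compactness, strict monotonicity of the posterior in $\typesig$ for part (2), and MLRP $\Rightarrow$ FOSD $\Rightarrow$ $r_H \leq r_L \Rightarrow \nu_H \leq \nu_L \Rightarrow \interim^* \leq \prior$ for part (3). The only cosmetic difference is that the paper establishes (2) and (3) first and then uses monotonicity of best responses inside a direct sequential argument for upper hemicontinuity, whereas you invoke the product-of-intervals structure of $BR(\interim)$ to get UHC more abstractly; both arguments are valid and deliver the same conclusions.
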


\begin{proof}[Proof outline:]
    To establish the existence of an equilibrium, I construct a best response correspondence: $\Phi$ for buyers. $\Phi$ maps any strategy $\strat$ to the set of strategies that are optimal against the unique interim belief consistent with $\strat$; i.e. those that maximise a buyer's expected payoff when her all her peers use the strategy $\strat$. Note that $\strat^*$ is an equilibrium strategy if and only if it is a fixed point of this best response correspondence; i.e., $\strat^* \in \Phi (\strat^*)$. Through a routine application of Kakutani's Fixed Point Theorem, I show that $\Phi$ indeed has a fixed point. In the process, I prove that $\Phi$ is upper semicontinuous; this also establishes that the set of equilibrium strategies is compact.

    Monotonicity is a straightforward necessity for a strategy to be optimal: higher signals induce higher posterior beliefs, so buyers (weakly) prefer to trade upon higher signals. A crucial consequence of monotonicity is that a Low quality seller is likelier to be rejected in any of his visits, as buyers are likelier to observe lower signals for him. Thus, the seller is adversely selected through his past rejections, and buyers' interim beliefs always lie below their prior beliefs. 
\end{proof}

An equilibrium is guaranteed to exist, but it need not be unique. The following example, which I will modify and revisit on occasion, illustrates this. Let there be two buyers who share the prior belief $\prior = 0.5$, and the seller's reservation value be $c = 0.2$. Furthermore, let buyers' experiment $\experiment$ be binary, $\signalset = \left\{ 0.2, 0.8  \right\}$, and its outcome have the conditional distribution:
\begin{align*}
    p_L(\typesig) &=
    \begin{cases}
       0.8 & \typesig = 0.2 \\
       0.2 & \typesig = 0.8
    \end{cases}
    &
    p_H(\typesig) &=
    \begin{cases}
        0.2 & \typesig = 0.2 \\
        0.8 & \typesig = 0.8
    \end{cases}
\end{align*}

There are two equilibrium strategies in this example,\footnote{There are no other equilibria. Under any monotone strategy that assigns a positive probability to trade, buyers' interim belief lies between $0.5$ and $0.4$---so buyers must always trade upon the high signal $0.8$. Lemma \ref{lem:binary_nolowmixing} in Section \ref{section:omitted_proofs} shows that in equilibrium, buyers either always or never trade upon the low signal.} which I denote as $\hat{\strat}$ and $\check{\strat}$. Table \ref{tab:eg_comparing_eqa} summarises these strategies, and how buyers form their interim and posterior beliefs under them.

\begin{table}[h]
    \centering
    \setlength{\tabcolsep}{12pt} 
    \renewcommand{\arraystretch}{1.5} 
    \begin{tabular}{|l|c|c|}
        \hline
        & {$\mathbf{\check{\strat}}$} & {$\mathbf{\hat{\strat}}$} \\ \hline
         \textbf{Prob. the buyer accepts upon} {$\mathbf{s = 0.8}$}
         & $\check{\strat}(0.8) = 1$ & $\hat{\strat}(0.8) = 1$  \\ \hline
         \textbf{Prob. the buyer accepts upon} $\mathbf{s = 0.2}$
         & $\check{\strat}(0.2) = 1$ & $\hat{\strat}(0.2) = 0$ \\ \hline
         \textbf{{Buyers' interim belief}} & $\interim = 0.5$ & $\interim = 0.4$ \tablefootnote{The interim belief in this case is easily calculated as: $\interim = \frac{1 + r_H (\strat; \experiment)}{(1 + r_H (\strat; \experiment)) + (1 + r_L (\strat; \experiment))} = \frac{1.2}{1.2 + 1.8} = 0.4$.} \\ \hline
         {\textbf{Buyers' posterior belief upon} $\mathbf{s = 0.8}$}
         & $0.8$ & $0.7$ \\ \hline
         {\textbf{Buyers' posterior belief upon} $\mathbf{s = 0.2}$}
         & 0.2 & $\approx 0.14$ \\ \hline
    \end{tabular}
    \caption{Running Example: Comparing Equilibrium Strategies $\check{\strat}$ and $\hat{\strat}$}
    \label{tab:eg_comparing_eqa}
\end{table}

\looseness=-1
Under the strategy $\check{\strat}$, a buyer accepts trade regardless of her signal. This eliminates adverse selection: since the first buyer the seller visits accepts trade, whoever he visits is certain that she is the first one he visited; and so, buyers' interim belief $\interim$ equals their prior $\prior$. In this equilibrium, a buyer finds trade optimal even if she receives the low signal $0.2$; given the posterior belief this signal induces, she expects zero net surplus from trade.

\looseness=-1
Under the strategy $\hat{\strat}$, a buyer accepts trade if only if she receives the high signal $0.8$. Buyers' selectivity triggers adverse selection: each buyer understands that she need not be the first one the seller visited. So, buyers' interim belief $\interim$ falls below their prior belief.
A buyer no longer finds trade optimal when she receives the low signal $0.2$: this signal induces a posterior belief $\approx 0.14$, so she expects a loss. She does, however, find it optimal when she receives high signal $s = 0.8$: this signal induces a posterior belief of $0.7$, so she expects positive net surplus from trade.

In the equilibrium where buyers use the strategy $\hat{\strat}$, the seller is likelier to be rejected in any of his visits. The buyers are \textit{more selective}---they are (weakly) likelier to reject trade at any signal they might observe. 

\begin{defn}
    Where $\strat'$ and $\strat$ are two strategies, $\strat'$ is \textit{more selective than} $\strat$ (or, $\strat$ is \textit{less selective than} $\strat'$) if $\strat' ( s ) \leq \strat (s)$ for all $s \in \signalset$.
\end{defn}

Selectivity offers a natural way to order buyers' equilibrium strategies. Proposition \ref{prop:selectivityorder} shows that it is also a complete order over them.

\begin{restatable}{prop}{propselectivityorder}
    \label{prop:selectivityorder}
    Selectivity is a complete order over the set of equilibrium strategies $\Sigma$. 
    Moreover, $\Sigma$ contains a \textit{most} and \textit{least }selective strategy, $\hat{\strat} \in \Sigma$ and  $\check{\strat} \in \Sigma$ respectively:
    \vspace{-0.4cm}
    \begin{equation*}
        \hat{\strat} (s) \leq \strat^* (s) \leq
        \check{\strat} (s)
        \qquad
        \textrm{for all } s \in \signalset \textrm{ and } \strat^* \in \Sigma
    \end{equation*}
\end{restatable}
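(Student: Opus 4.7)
My plan is to reduce the ordering question to a one-dimensional comparison between the interim beliefs induced by equilibrium strategies. By Proposition \ref{prop:eqmexist}, every $\strat^* \in \Sigma$ is monotone and is thus fully described by the interim belief $\interim(\strat^*)$ it induces together with a mixing probability at (at most) one indifference signal. Because the posterior $\Prob_\interim(\quality = H \mid \typesig)$ is strictly increasing both in $\interim$ and in the signal index $\typesig$, solving $\Prob_\interim(\quality = H \mid \typesig) = c$ yields a unique cut-off $s^*(\interim)$ that is strictly decreasing in $\interim$; a strategy is a best response to $\interim$ if and only if it rejects every $\typesig < s^*(\interim)$, accepts every $\typesig > s^*(\interim)$, and is free to mix at any signal equal to $s^*(\interim)$.

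Using this, I would take $\strat_1^*, \strat_2^* \in \Sigma$ with associated interim beliefs $\interim_1 \leq \interim_2$ and show $\strat_1^*(\typesig) \leq \strat_2^*(\typesig)$ for every $\typesig \in \signalset$. If $\interim_1 < \interim_2$, then $s^*(\interim_1) > s^*(\interim_2)$, so for each $\typesig_j \in \signalset$ either (i) $\typesig_j > s^*(\interim_1)$ and both strategies prescribe acceptance with probability one, (ii) $\typesig_j < s^*(\interim_2)$ and both reject with probability zero, or (iii) $\typesig_j$ lies in the closed interval $[s^*(\interim_2), s^*(\interim_1)]$, where $\strat_1^*$ rejects while $\strat_2^*$ accepts---the only subtlety being at boundary signals, where the strategy for which $\typesig_j$ is strictly interior to the ``reject'' or ``accept'' region is pinned down while the other is free to mix, preserving the inequality. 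If $\interim_1 = \interim_2$, the two strategies share the same best-response cut-off and can differ only through a mixing probability at a common indifference signal, so one is trivially more selective than the other. This establishes that selectivity is a complete order on $\Sigma$.

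Existence of $\hat\strat$ and $\check\strat$ would then follow from compactness. The map $\strat \mapsto \interim(\strat)$ is continuous from $\Sigma$ into $[0,\prior]$, so its image has a minimum $\underline\interim$ and a maximum $\overline\interim$; by the previous paragraph the most selective equilibrium strategy must induce interim belief $\underline\interim$ and the least selective must induce $\overline\interim$. Within each of these slices, strategies are parameterised by a single mixing probability at the cut-off, from which I would extract the extremal $\hat\strat$ and $\check\strat$ by taking the appropriate infimum and supremum of that mixing probability; compactness of $\Sigma$ ensures the extremizer is attained.

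The step I expect to require the most care is the second. I must verify that the ordering of cut-offs translates into a pointwise comparison even when a signal $\typesig_j$ happens to coincide with one of the cut-offs. The key observation is that each of $s^*(\interim_1)$ and $s^*(\interim_2)$ can be indifferent under only one of the two interim beliefs (by strict monotonicity of $s^*$), so at most one of $\strat_1^*, \strat_2^*$ is free to mix there; the other is pinned down in the direction that preserves $\strat_1^*(\typesig_j) \leq \strat_2^*(\typesig_j)$.
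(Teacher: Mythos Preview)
Your argument is correct, but it takes a considerably more elaborate route than the paper does. The paper's proof never touches interim beliefs: it simply observes that the set of \emph{all} monotone strategies is already a chain under the selectivity order. Indeed, if $\strat$ and $\strat'$ are monotone and $\strat'(s) > \strat(s)$ at some $s$, then $\strat'(s) > 0$ forces $\strat'(s') = 1 \geq \strat(s')$ for every $s' > s$, while $\strat(s) < 1$ forces $\strat(s_\cdot) = 0 \leq \strat'(s_\cdot)$ for every $s_\cdot < s$; so $\strat' \geq \strat$ pointwise. Since $\Sigma$ sits inside a chain, it is itself a chain, and compactness plus a general Extreme Value Theorem immediately delivers the most and least selective elements.

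Your approach instead orders equilibria through the interim beliefs they induce, exploiting that $s^*(\interim)$ is strictly decreasing. This works, but it imports equilibrium machinery (consistency of $\interim$, the cut-off structure of best responses) into a question that is really just about the shape of monotone step functions. The cost is the careful case analysis at boundary signals and the two-stage existence argument (first extremise $\interim$, then the mixing probability within a slice); note also that the ``slices'' with a common interim belief are in fact singletons whenever a strategy genuinely mixes, since distinct mixing probabilities at the indifference signal yield distinct rejection rates and hence distinct $\interim$, so that part of your plan is doing less work than it appears. What your approach does buy is a by-product you do not state: it shows that the selectivity order on $\Sigma$ coincides with the order on induced interim beliefs, which is useful intuition elsewhere in the paper but not needed for the proposition itself.
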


\begin{proof}
    By Proposition \ref{prop:eqmexist}, the set of equilibrium strategies $\Sigma$ is a subset of the set of monotone strategies. The latter is a chain under the selectivity order: for any signal $\typesig \in \signalset$ and two monotone strategies $\strat$ and $\strat'$, we have:
    \begin{equation*}
        \strat'(s) > \strat(s) \implies 
        \begin{aligned}
        &1 = \strat'(s^{\cdot}) \geq \strat(s^{\cdot})
        \qquad
        \textrm{for any } s^{\cdot} > s \in \signalset
        \\
        &\strat'(s_{\cdot}) \geq \strat(s_{\cdot}) = 0
        \qquad
        \textrm{for any } s_{\cdot} < s \in \signalset
        \end{aligned}
    \end{equation*}
    Since any subset of a chain is also a chain, $\Sigma$ is a chain too.

    By Proposition \ref{prop:eqmexist}, $\Sigma$ is a compact set. Since it is also a chain, by applying a suitably general Extreme Value Theorem (see Theorem 27.4 in \textcite{munkres_topology}) 
    to the identity mapping on $\Sigma$, one verifies that $\Sigma$ has a minimum and maximum element with respect to this order; i.e. there are two strategies $\hat{\strat}, \check{\strat} \in \Sigma$ such that for any other strategy $\strat^* \in \Sigma$ we have $\hat{\strat} (s) \leq \strat^* (s) \leq
        \check{\strat} (s)$ for all $s \in \signalset$.
\end{proof}

\noindent I call an equilibrium ``more selective than another'' whenever buyers use a more selective strategy in the former. 

\subsection{Total Surplus}

Trading generates a surplus of $1-c$ when the asset's quality is High, but destroys a surplus of $c$ when the asset's quality is Low. 
So, the expected surplus from trade in the market---total surplus, for short---depends on how well buyers can screen the asset's quality before they decide whether to trade. Given buyers' experiment $\experiment$ and strategy $\strat$, total surplus equals:
\begin{align*}
    \sumpayoff (\strat; \experiment) &:= (1-c) \times \Prob_{\strat; \experiment} \left( \quality = H \cap \textrm{some buyer trades} \right) - c \times \Prob_{\strat; \experiment} \left( \quality = L \cap \textrm{some buyer trades} \right)
    \\
    &= (1-c) \times \prior \times \left[ 1 - r_H (\strat; \experiment)^n \right] - c \times (1 - \prior) \times \left[ 1 - r_L (\strat; \experiment)^n \right]
\end{align*}
Buyers fully appropriate this surplus---the seller is only paid his reservation value when he trades. 

Two benchmarks are natural to consider. If buyers had full information about the asset's quality, they would trade whenever the asset's quality is High, but never when it is Low; all gains from trade would be realised. Total surplus in this \textit{full-information benchmark} would equal $\sumpayoff^f:= \prior \times (1-c)$. 

If instead, buyers had no information about the asset's quality, their decisions would be guided solely by their prior belief. There would be no adverse selection: previous rejections would convey no private information since no buyer has any. A buyer would trade if, given her prior belief, she expected positive surplus from doing so, and reject otherwise. Total surplus in this \textit{no-information benchmark} would equal $\sumpayoff^n:= \max \left\{ 0, \prior - c \right\}$.

Proposition \ref{prop:selectivebetter} establishes that, in equilibrium, total surplus is bounded by these benchmarks; moreover, total surplus is always higher in more selective equilibria.

\begin{restatable}{prop}{propselectivebetter}
    \label{prop:selectivebetter}
    \looseness=-1
    Equilibrium total surplus is bounded above by the full-information benchmark $\sumpayoff^f$ and below by the no-information benchmark $\sumpayoff^{\emptyset}$. 
    Furthermore, it is higher under more selective equilibrium strategies:
    \vspace{-0.3cm}
    \begin{equation*}
        \max \left\{ 0, \prior - c \right\} = \sumpayoff^{\emptyset} \leq \sumpayoff ( \strat^{**}; \experiment ) \leq \sumpayoff ( \strat^*; \experiment ) \leq \sumpayoff^f = \prior \times (1-c)
        \vspace{-0.3cm}
    \end{equation*}
    where $\strat^*$ and $\strat^{**}$ are two equilibrium strategies such that $\strat^{**}$ is more selective than $\strat^*$.
\end{restatable}

\begin{proof}
    The second part of the Proposition follows as a corollary to Lemma \ref{lem:selectivebetter} in Section \ref{section:omitted_results}, where I prove that total surplus decreases whenever buyers jointly deviate from an equilibrium strategy to a less selective monotone strategy.
    
    For the first part of Proposition \ref{prop:selectivebetter}, note that all gains from trade is realised when buyers have full information; hence $\sumpayoff^f$ bounds total surplus from above. 
    Since total surplus equals buyers' surplus from trade, it is bounded below by $0$ in any equilibrium---a buyer can always reject. Thus, when $\prior \leq c$, $\sumpayoff^{\emptyset} = 0$ bounds total surplus from below. Now let $\prior > c$, and assume for contradiction that there is an equilibrium strategy $\strat^*$ such that $\sumpayoff ( \strat^*; \experiment ) < \sumpayoff^{\emptyset}$. Then:
    \begin{align*}
        \Prob_{\strat^*; \experiment} \left( \textrm{some buyer trades} \right) \times &\left[ \Prob_{\strat^*; \experiment} \left( \quality = H \mid \textrm{some buyer trades} \right) - c \right] < \\
        \Prob_{\strat^*; \experiment} \left( \textrm{some buyer trades} \right) \times &\left[ \Prob_{\strat^*; \experiment} \left( \quality = H \mid \textrm{some buyer trades} \right) - c \right] +
        \\
        &\Prob_{\strat^*; \experiment} \left( \textrm{no buyer trades} \right) \times \left[ \Prob_{\strat^*; \experiment} \left( \quality = H \mid \textrm{no buyer trades} \right) - c \right] = \sumpayoff^{\emptyset}
    \end{align*}
    So, $\Prob_{\strat^*; \experiment} \left( \quality = H \mid \textrm{no buyer trades} \right) > c$. However, $\strat^*$ then cannot be an equilibrium strategy; each buyer has a profitable deviation to trade with the seller whenever he visits. 
\end{proof}

\looseness=-1
Buyers' private information helps them raise surplus by avoiding trade when the asset's quality is Low, and executing it when it is High. Equilibrium multiplicity presents a trade-off: in a more selective equilibrium, trade is less likely---this conserves surplus when the asset's quality is Low. In a less selective equilibrium, trade is more likely---this raises surplus when the asset's quality is High. Proposition \ref{prop:selectivebetter} establishes that this trade-off is always resolved in favour of more selective equilibria.

The intuition is illustrated by our running example. There, we identified two equilibrium strategies, $\hat{\strat}$ and $\check{\strat}$. These are the only equilibrium strategies; so, $\hat{\strat}$ is the most selective equilibrium strategy in that example, and $\check{\strat}$ is the least selective. Table \ref{tab:eg_surplus} summarises how the probability that the seller trades and the total surplus vary across these equilibria, and compares them to the full- and no-information benchmarks.

\begin{table}[h]
    \centering
    \setlength{\tabcolsep}{8pt} 
    \renewcommand{\arraystretch}{1.5} 
    \begin{tabular}{|l|c|c|c|c|}
        \hline
         & \textbf{no-info.}& $\mathbf{\check{\strat}}$ & $\mathbf{\hat{\strat}}$ & \textbf{full-info.}\\
         \hline
        \textbf{Prob. seller trades when} $\mathbf{\quality = H}$ & 1 & 1 & 0.96 \tablefootnote{This is the probability that at least one buyer will receive the high signal $0.8$ when the seller has High quality: $1 - p_H^2(0.2) = 0.96$. For a seller of Low quality, this probability is: $1 - p_L^2(0.2) = 0.36$.} & 1 \\
        \hline
        \textbf{Prob. seller trades when} $\mathbf{\quality = L}$ & 1& 1 & 0.36 & 0\\
        \hline
        \textbf{Total surplus} & 0.3 & 0.3 & 0.348 & 0.4 \\
        \hline
    \end{tabular}
    \caption{Running Example: Comparing Surplus Across Equilibria}
    \label{tab:eg_surplus}
\end{table}

Unless both buyers observe the low signal $0.2$, the seller trades under both equilibria. If both observe the low signal, both buyers reject the seller in the most selective equilibrium: each fears that the other buyer might have already rejected him, which would indicate that that buyer also observed a low signal. The fact that neither buyer wants to trade in the event that both observe a low signal reveals that the expected surplus from trading with such a seller is negative. Nonetheless, this seller trades in the least selective equilibrium: now, each buyer understands that the seller visits her only if he has not visited the other yet, so she becomes more optimistic about the other buyer's signal. This results in lower total surplus.

\section{Efficiency with More Buyers}
\label{section:morebuyers}

In this section, I discuss how total surplus changes as the number of buyers in the market increases. %

\begin{restatable}{thm}{thmextensive}
    \label{thm:extensive}
    \looseness=-1
    Let $\sumpayoff^n \left( \hat{\strat}; \experiment \right)$ be total surplus under the most selective equilibrium in a market with $n$ buyers. If $\experiment$ has an outcome that fully reveals High quality ($s_m = 1$), the sequence $\left\{ \sumpayoff^n \left( \hat{\strat}; \experiment \right)  \right\}_{n=1}^{\infty}$ is eventually increasing and converges to surplus in the full-information benchmark. Otherwise, it is eventually decreasing and converges to surplus in the no-information benchmark. 
\end{restatable}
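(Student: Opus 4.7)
The plan is to analyse how the most selective equilibrium strategy $\hat{\strat}^n$ behaves as $n$ grows, and then compute total surplus in the limit. The overarching idea is that adverse selection compounds in $n$, so the most selective equilibrium progressively tightens its acceptance set until, for $n$ large, only the highest signal $s_m$ can trigger trade. This reduces the problem to a one-parameter analysis in $q_n := \hat{\strat}^n(s_m)$, after which a case split on $s_m = 1$ versus $s_m < 1$ delivers the conclusion.

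For the tightening step, I would establish that for any monotone strategy $\strat$ with positive acceptance, the ratio
\begin{equation*}
    \frac{\nu_H(\strat; \experiment)}{\nu_L(\strat; \experiment)}
    = \frac{(1-r_H^n)(1-r_L)}{(1-r_L^n)(1-r_H)}
\end{equation*}
is strictly decreasing in $n$ and converges to $(1-r_L)/(1-r_H) < 1$; strictness uses $r_L > r_H$, which follows from $p_H(s_m) > p_L(s_m)$ for any informative signal structure. This pushes $\interim^*_n$ down and raises the likelihood-ratio threshold for acceptance. A compactness-plus-contradiction argument on convergent subsequences of $\hat{\strat}^n$ then shows that $\hat{\strat}^n(s_j) = 0$ for every $j < m$ once $n$ is sufficiently large: otherwise, the limit strategy would induce an interim belief too adverse to support acceptance at $s_j$.

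If $s_m = 1$, then $p_L(s_m) = 0$, so once $\hat{\strat}^n$ accepts only at $s_m$, a Low-quality seller never trades; the posterior at $s_m$ is $1 > c$ for any interim belief, forcing $q_n = 1$. Hence for $n$ large,
\begin{equation*}
    \sumpayoff^n(\hat{\strat}; \experiment) = \prior(1-c)\bigl(1-(1-p_H(s_m))^n\bigr),
\end{equation*}
which is strictly increasing in $n$ and converges to $\sumpayoff^f = \prior(1-c)$. If $s_m < 1$, then $r_\quality = 1 - p_\quality(s_m) q_n$ and the identity $s_m/(1-s_m) = p_H(s_m)/p_L(s_m)$ yield $\nu_H/\nu_L \to (1-s_m)/s_m$ whenever $q_n$ is bounded away from zero, so the limiting posterior at $s_m$ equals $\prior$. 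Three exhaustive configurations follow: $q_n = 1$ for large $n$ (requires $\prior \geq c$, limit surplus $\prior - c$); $q_n \in (0,1)$ in the limit (indifference forces zero expected surplus per trade); or $q_n \to 0$ (``never trade'' prevails, zero surplus). In each configuration the limit equals $\max\{0, \prior - c\} = \sumpayoff^\emptyset$.

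The principal obstacle is the eventual-monotonicity claim when $s_m < 1$. In the ``indifference'' and ``never-trade'' configurations surplus is eventually zero, which is trivially (weakly) decreasing. In the $q_n = 1$ configuration,
\begin{equation*}
    \sumpayoff^{n+1} - \sumpayoff^n = \prior(1-c)\, r_H^n (1-r_H) - (1-\prior)c\, r_L^n (1-r_L),
\end{equation*}
and since $r_H < r_L$ here, the ratio $(r_H/r_L)^n$ vanishes as $n \to \infty$, so the difference is eventually negative.
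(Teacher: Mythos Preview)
Your reduction step has a genuine gap. You claim that ``$\hat{\strat}^n(s_j) = 0$ for every $j < m$ once $n$ is sufficiently large,'' arguing that ``otherwise, the limit strategy would induce an interim belief too adverse to support acceptance at $s_j$.'' This is false. The limit of $\nu_H/\nu_L$ under a fixed strategy $\strat$ is $(1-r_L)/(1-r_H)$, and while this is indeed strictly less than $1$, the relevant comparison for optimality at $s_j$ is whether $\frac{\prior}{1-\prior}\cdot\frac{1-r_L}{1-r_H}\cdot\frac{s_j}{1-s_j}$ exceeds $\frac{c}{1-c}$. Nothing forces this to fail: take $\prior = 0.9$, $c = 0.1$, and a binary experiment with $s_1 = 0.4$, $s_2 = 0.6$. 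You can check that ``accept always'' is the unique (hence most selective) equilibrium for every $n$; the limiting interim likelihood ratio under $\strat_2$ is $9 \cdot \tfrac{2}{3} = 6$, and the posterior likelihood ratio at $s_1$ is $4 \gg \tfrac{1}{9} = \tfrac{c}{1-c}$, so rejecting $s_1$ is never optimal. Your subsequent one-parameter analysis in $q_n$ then computes the wrong rejection probabilities $r_\quality = 1 - p_\quality(s_m) q_n$, and the displayed difference $\sumpayoff^{n+1}-\sumpayoff^n$ is incorrect for such cases.

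The paper avoids this trap by \emph{not} claiming the acceptance set eventually collapses to $\{s_m\}$. Instead it argues that the most selective equilibrium $\hat{\strat}_n$ is weakly increasing in selectivity as $n$ grows (because each $\interim_{j:n}$ is decreasing in $n$), so the rejection probabilities $r_\quality(\hat{\strat}_n;\experiment)$ form a monotone, hence Cauchy, sequence converging to some $\bar r_\quality < 1$. Eventual decrease is then obtained by freezing the strategy at its limit and comparing $\sumpayoff^m$ across $m$ directly, using $r_H < r_L$ to show the positive term is eventually dominated. Your $q_n = 1$ computation is essentially this argument specialised to the case $\bar r_\quality = 1 - p_\quality(s_m)$; what you are missing is the prior step that justifies working with a \emph{convergent} pair $(r_H, r_L)$ without pinning down which signals are accepted in the limit. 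For the $s_m = 1$ case your approach is fine in spirit, though note your formula $\nu_H/\nu_L \to (1-r_L)/(1-r_H)$ is indeterminate when $r_L = 1$; the paper handles this directly by showing $\interim_{m:n} \to 0$.
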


Each additional buyer brings an additional signal about the asset's quality to the market. 
As the market becomes arbitrarily large, buyers' collective information becomes sufficient to fully reveal the asset's quality---unless buyers' signals carry no information.
Whether the market outcome incorporates this information and reaches full allocative efficiency depends on the \textit{kind} of information each buyer has. 

If buyers have a signal which fully reveals High quality, total surplus reaches its full-information upper bound as the number of buyers grows arbitrarily large. Beyond a threshold number of buyers, adverse selection forces each buyer to reject the seller unless she observes that fully-revealing high signal. Therefore, a Low quality seller never trades. A High quality seller, however, might. Moreover, a High quality seller is likelier to trade in a larger market, as it is likelier that at least one buyer will observe the fully-revealing high signal. 

If, however, buyers do not have such a signal, the market experiences a surplus breakdown as the number of buyers grows large---total surplus reaches its no-information lower bound. How market outcomes evolve as the number of buyers grows depends on whether the expected gains from trade are positive, $\prior > c$, or weakly negative, $\prior \leq c$:
\begin{itemize}
    \item When the expected gains from trade are positive, no matter how large the market, a buyer who observes the highest possible signal trades when the seller visits. The larger the market, the likelier that some buyer will observe this signal---regardless of the seller's quality. Beyond a threshold number of buyers, this hurts total surplus through the increased probability that a Low quality seller trades. As the number of buyers grows arbitrarily large, the seller almost surely trades. This yields the level of surplus in the no-information benchmark, $\prior - c$. 
    \item When the expected gains from trade are negative, a buyer accepts the seller with an arbitrarily small probability in a large market, if at all. Adverse selection, however, ensures that she expects zero surplus from doing so. The seller may trade with positive probability, but total surplus is equal to that in the no-information benchmark: $0$. 
\end{itemize}

Notably, even when buyers have a signal that fully reveals High quality, total surplus need not converge to the full-information benchmark in \textit{every} equilibrium. To see this, consider a slightly modified version of our running example. As before, the common prior is $\prior = 0.5$, and the seller's reservation value is $c = 0.2$. But we modify buyers' experiment; they now observe the outcome of $\experiment^g = \left( \signalset^g, p_L^g, p_H^g \right)$:
\begin{align*}
    p_L^{g} (s) &= 
    \begin{cases}
        1 & s = 0.2 \\
        0 & s = 1
    \end{cases}
    &
    p_H^{g} (s) &= 
    \begin{cases}
        0.25 &  s = 0.2 \\
        0.75 &  s = 1
    \end{cases}
\end{align*}
The signal $s = 1$ fully reveals High quality. However, buyers trade regardless of the signal they receive in the least selective equilibrium---irrespective of the number of buyers in the market. Thus, the seller always trades. Total surplus always equals that in the no-information benchmark, $\sumpayoff^{\emptyset} = 0.3$. On the other hand, buyers only trade upon the high signal $s = 1$ in the most selective equilibrium. Hence, a Low quality seller never trades. A High quality seller trades with probability $1 - (0.25)^n$ in a market with $n$ buyers. In an arbitrarily large market, he trades almost surely; total surplus converges to the full-information benchmark, $\sumpayoff^f = 0.5 \times [1 - 0.2]$.

Theorem \ref{thm:extensive} helps compare a decentralised market with a first price auction where the seller simultaneously solicits buyers' bids and sells to the highest bidder. There too, total surplus may decrease with an additional bidder (\textcite{riordan1993}), due to an increased probability of trade when it is inefficient. Similar to a decentralised market, adverse selection is the culprit: the winner's curse is more severe in an auction with more bidders. However, adverse selection in a decentralised market differs from the winner's curse in an auction: a buyer who trades in a decentralised market understands that her signal was the highest among the \textit{previous} buyers to be visited; a bidder who wins in an auction understands that her signal was the highest among \textit{all} bidders. Thus, \textcite{riordan1993}'s sufficient condition for an additional participant to decrease total surplus differs from the necessary and sufficient condition Theorem \ref{thm:extensive} recovers for total surplus to eventually increase with the number of buyers in a decentralised market. 

This condition is also necessary and sufficient for buyers' common value to be revealed through trade in a decentralised market. Strikingly, the same condition is also necessary and sufficient for a first price common value auction to reveal bidders' common value through the winning bid.\footnote{See \textcite{wilson1977} and \textcite{milgrom1979} for this classic result: in a setting where the item's common value may take countably many values, \textcite{milgrom1979} recovers ``distinguishability'' as a necessary and sufficient condition. When the item's value is binary, ``distinguishability'' corresponds to an unbounded likelihood ratio from above.} 
However, when this condition is violated, a common value auction with an arbitrarily large number of bidders may nonetheless aggregate information ``well''.\footnote{See, for instance, Section IV in \textcite{lauermann_wolinsky_auction}.} This is not the case in a decentralised market with an arbitrarily large number of buyers. Instead, three possibilities emerge: 

\needspace{2\baselineskip} 
\begin{enumerate}
    \item The expected gains from trade are positive, $\prior > c$, and the seller almost surely trades. The fact that she trades is completely uninformative
    about the asset's quality.
    \item The expected gains from trade are negative, $\prior < c$, and the seller never trades\footnote{For instance, if buyers' experiment is uninformative about the asset's value.}. The fact that she does not trade is completely uninformative about the asset's quality. 
    \item The expected gains from trade are negative, $\prior < c$, and if a buyer trades, she expects zero surplus from doing so. Trade---when it happens---reveals that the asset has High quality with probability $c$, i.e., that the expected surplus from trade is zero\footnote{Section \ref{section:supplements} illustrates this scenario with a numerical example.}.
\end{enumerate}

So, whenever the likelihood ratio of buyers' signals is \textit{not} unbounded at the top, trade is at most partially informative about the asset's quality, unlike in an auction. The information it reveals has no bearing on market participants' surplus: at most, trade is revealed to be to be no worse than no trade in expectation. 

This offers an interesting contrast with \cite{lauermann_wolinsky_searchadverseselection}, too\footnote{\textcite{lauermann_wolinsky_auction} find a similar result in a first price auction where the number of bidders varies with the common value of the auctioned item.}. They assume that trade is always efficient, and find that (generically) the outcome in a large market either fully reveals or is completely uninformative about buyers' common value for the asset. Theorem \ref{thm:extensive} echoes their finding where the expected gains from trade are positive, $\prior \geq c$. However, it also shows that another possibility arises when the expected gains from trade are negative, $\prior < c$: trade in a large market may be partially informative about the asset's quality.

\section{Efficiency with Better-Informed Buyers}
\label{section:betterinfo}

\looseness=-1
In this section, I discuss how giving each buyer better information---a Blackwell more informative experiment---affects total surplus. Throughout, I take the number of buyers $n$ to be a primitive rather than a parameter.
Equilibrium surplus attains its extremes in the most and least selective equilibria. My main results describe the comparative statics of total surplus under both of these equilibria. For brevity, I write \textit{equilibrium$^\star$} wherever the reader may read \textit{the most} or \textit{least selective equilibrium}. I denote the equilibrium$^{\star}$ strategy under an experiment $\experiment$ as $\strat^{\star}_{\experiment}$.

If there were a single buyer in the market---exposed to no adverse selection---a Blackwell improvement of her experiment would be necessary and sufficient for total surplus to rise regardless of the seller's reservation value.\footnote{In general, a Blackwell improvement is sufficient, but not necessary for a decision maker to extract higher value from a decision problem (see \textcite{blackwell53}). However, it is necessary for the decision maker to extract higher value from \textit{any} decision problem where the unknown state is binary---such our buyers' screening problem. I present a self contained proof of this fact in Section \ref{section:omitted_results}, Lemma \ref{lemma:blackwell_necessity} for completeness. } The reason is simple: better information improves her ability to screen the asset's quality and target efficient trades.

In a market with multiple buyers, buyers' ability to screen the seller is shaped both by the quality of their private signals and the extent of adverse selection each face in the market. So, better information becomes a double-edged sword. On the one hand, it allows each buyer to screen the seller more effectively. On the other hand, it might exacerbate adverse selection: previous rejections might become likelier, and each might carry worse news about the quality of the asset. This latter channel pushes the buyer to worse trades. If it overwhelms buyers' increased ability to screen the asset's quality, it might lead to lower total surplus. How this trade-off is resolved is determined by the \textit{kind} of improvement in buyers' information. 

\subsection{Binary Experiments}
\label{section:binary}

I start by restricting buyers to binary signals. In this setting, I obtain a sharper characterisation and uncover the main insights which drive my result for general signal structures, in Theorem \ref{thm:intensive}.

A binary experiment $\experiment$ has two possible outcomes $s_1, s_2 \in \signalset$, which I relabel as $s_L, s_H \in \signalset$ for convenience.\footnote{Recall from Section \ref{section:equilibria} that, to ease notation, I use labelling convention $s := \frac{p_H(s)}{p_H(s) + p_L(s)}$.}
The low outcome $s_L \in [0, 0.5]$ decreases a buyer's interim belief about the quality of the asset, while a high outcome $s_H \in [0.5, 1]$ increases it. 

Ranking two binary experiments $\experiment'$ and $\experiment$ in their (Blackwell) informativeness is a simple exercise. Where the former has the possible outcomes $s_L', s_H' \in \signalset'$, the experiment $\experiment'$ is Blackwell more informative than $\experiment$ if and only if:\footnote{See Section 12.5 in \textcite{blackwell_girshick} for a textbook exposition of this classic result.}
\begin{itemize}[itemsep=-2pt]
    \item it has a lower likelihood ratio at the bottom, $s_L' \leq s_L$; i.e., delivers \textit{stronger bad news}, and
    \item it has a greater likelihood ratio at the top, $s_H' \geq s_H$; i.e. delivers \textit{stronger good news}.
\end{itemize}
I illustrate these two improvements in information in Figures \ref{fig:strongergood} and \ref{fig:strongerbad}.

\looseness=-1
Theorem \ref{thm:intensive_binary} (illustrated in Figure \ref{fig:thmbinary}) answers how total surplus evolves when buyers' binary experiment becomes Blackwell more informative.

\begin{restatable}{thm}{thmintensivebinary}
    \label{thm:intensive_binary}
    Let buyers' experiment $\experiment$ be binary. Then, equilibrium$^{\star}$ total surplus is increasing in the strength of good news ($s_H$) but is quasiconcave and eventually decreasing in the strength of bad news ($s_L$).
\end{restatable}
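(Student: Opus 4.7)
The plan is to exploit the very restrictive structure of binary equilibria. By Proposition \ref{prop:eqmexist} and Lemma \ref{lem:binary_nolowmixing}, every equilibrium strategy is of the form $(1,1)$ or $(0,q)$ with $q \in [0,1]$. A key observation is that under any interior mixing equilibrium $(0,q)$ with $q \in (0,1)$, the indifference condition at $s_H$ fixes the posterior on $s_H$ (at the consistent interim belief) to exactly $c$; consequently every realised trade breaks even in expectation and, since buyers capture all gains from trade, total surplus vanishes. Writing $\alpha := 1 - p_H(s_H)$ and $\beta := 1 - p_L(s_H)$ for the per-visit rejection probabilities, the pure equilibria give $\sumpayoff_{(1,1)} = \prior - c$, $\sumpayoff_{(0,0)} = 0$, and
\begin{equation*}
    \sumpayoff_{(0,1)}(\alpha,\beta) = (1-c)\prior\bigl[1 - \alpha^n\bigr] - c(1-\prior)\bigl[1 - \beta^n\bigr].
\end{equation*}

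For the good-news statement, I reparametrize with $k := s_L/(1-s_L)$ held constant, so $\alpha = k\beta$ and strengthening good news corresponds to increasing $\beta$. Then $\sumpayoff_{(0,1)} = (\prior - c) + \beta^n\bigl[c(1-\prior) - (1-c)\prior k^n\bigr]$, affine in $\beta^n$ with constant-sign derivative. The equilibrium condition for $(1,1)$ depends only on $s_L$, so the least-selective equilibrium's type is invariant in $s_H$; whenever it is $(1,1)$, surplus is the constant $\prior - c$. For the most-selective equilibrium, as $s_H$ rises the regime transitions through $(0,0) \to (0,q) \to (0,1)$; the first two regimes yield zero surplus, and the transition to $(0,1)$ is continuous because the $(0,q)$–$(0,1)$ boundary coincides precisely with $\sumpayoff_{(0,1)} = 0$. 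Inside $(0,1)$, I rule out a negative coefficient of $\beta^n$: negativity forces $k^n > c(1-\prior)/[(1-c)\prior]$, which (since $k\leq 1$) implies $k$ exceeds the same threshold, so $(1,1)$ is an equilibrium; then Proposition \ref{prop:selectivebetter} demands $\sumpayoff_{(0,1)} \geq \prior - c$, contradicting the negativity. Hence whenever the most-selective equilibrium is $(0,1)$, the coefficient is non-negative and $\sumpayoff_{(0,1)}$ is non-decreasing in $\beta$, and thus in $s_H$.

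For the bad-news statement I instead fix $K := s_H/(1-s_H)$ and write $\alpha = 1 - K(1-\beta)$, so $\alpha/\beta = s_L/(1-s_L) = k$. Direct differentiation gives
\begin{equation*}
    \frac{d\sumpayoff_{(0,1)}}{d\beta} = n\bigl[c(1-\prior)\beta^{n-1} - (1-c)\prior K \alpha^{n-1}\bigr] = n\beta^{n-1}\bigl[c(1-\prior) - (1-c)\prior K k^{n-1}\bigr],
\end{equation*}
and stronger bad news corresponds to decreasing $\beta$, which decreases $k$ as well. The bracket is positive when $k$ is small (strong bad news) and negative when $k$ is large (mild bad news), crossing zero at a unique threshold. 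So $\sumpayoff_{(0,1)}(s_L)$ is single-peaked in $s_L \in [0,\tfrac12]$, increasing on a left interval and decreasing on a right one: as $s_L$ falls from $\tfrac12$ surplus first rises (screening Low quality effectively) and eventually falls (positive overrides enabled by the sharper bad signal aggravate adverse selection), which is exactly quasiconcavity in the strength of bad news together with the eventual-decrease claim. The remaining regimes contribute surplus that is either constant in $s_L$ (the mixing regime gives $0$) or piecewise constant across an eligibility threshold (the $(1,1)$ regime), with continuous splicing through the indifference identity $\sumpayoff_{(0,1)} = 0$.

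The main obstacle will be organising the case analysis around equilibrium transitions: showing that the most- and least-selective selections move continuously in $s_H$ (resp. $s_L$), that the within-regime sign computations for $d\sumpayoff_{(0,1)}/d\beta$ remain compatible with the equilibrium-existence conditions in each regime, and that the regime boundaries are always crossed at points where the direction of change of surplus matches the theorem's claim. The algebraic content — the two derivative computations above — is routine once the right parametrisation is fixed; it is the bookkeeping across regimes that requires care.
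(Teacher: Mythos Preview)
Your approach is correct and genuinely different from the paper's. The paper never writes $\sumpayoff_{(0,1)}$ as an explicit function of rejection probabilities and differentiates; instead it reinterprets the improvement as an \emph{auxiliary signal} appended after the initial $s_H$ (for good news) or $s_L$ (for bad news), identifies the set of ``marginal admits'' or ``marginal rejects'' whose outcome changes, and computes the sign of their expected surplus via a limiting argument as the auxiliary signal's probability vanishes. Your direct calculus route is more elementary and recovers exactly the same threshold --- your bracket $c(1-\prior) - (1-c)\prior K k^{n-1}$ vanishes precisely at the paper's $s_L^{\textrm{as}}$. What the paper's construction buys is generality: the marginal-reject reasoning is what lifts cleanly to arbitrary finite experiments in Theorem~\ref{thm:intensive}, whereas your parametrisation is specific to the binary case. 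What your route buys is transparency about the shape of $\sumpayoff_{(0,1)}$ as a function of $s_L$, which makes the quasiconcavity claim almost immediate once you have the derivative.

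One genuine gap: your claimed regime sequence for the most-selective equilibrium, $(0,0)\to(0,q)\to(0,1)$ as $s_H$ rises, omits the possibility that $(1,1)$ is the \emph{unique} equilibrium and hence also the most selective. This happens when $s_L \geq s_L^{\textrm{mute}}$ and the condition $\frac{\Psi(\sigma_1;\experiment)}{1-\Psi(\sigma_1;\experiment)}\cdot\frac{s_L}{1-s_L}\leq\frac{c}{1-c}$ fails (see Lemma~\ref{lem:binary_eqmregions}); then no $(0,\cdot)$ equilibrium exists at all. The fix is easy --- surplus is constant at $\prior-c$ there, and at the threshold where a $(0,\cdot)$ equilibrium first appears the most-selective surplus jumps \emph{up} by Proposition~\ref{prop:selectivebetter} --- but you should state it. A second smaller point: your splicing argument for bad news (``continuous splicing through $\sumpayoff_{(0,1)}=0$'') covers the $(0,1)\leftrightarrow(0,q)$ boundary correctly, but you should also verify that the upward jump at the $(1,1)\to(0,1)$ boundary (at $s_L=s_L^\dagger$ in the paper's notation) does not spoil quasiconcavity; it does not, because $\sumpayoff_{(0,1)}(s_L^\dagger)\geq\prior-c$ and the upper level sets remain intervals, but this needs a line.
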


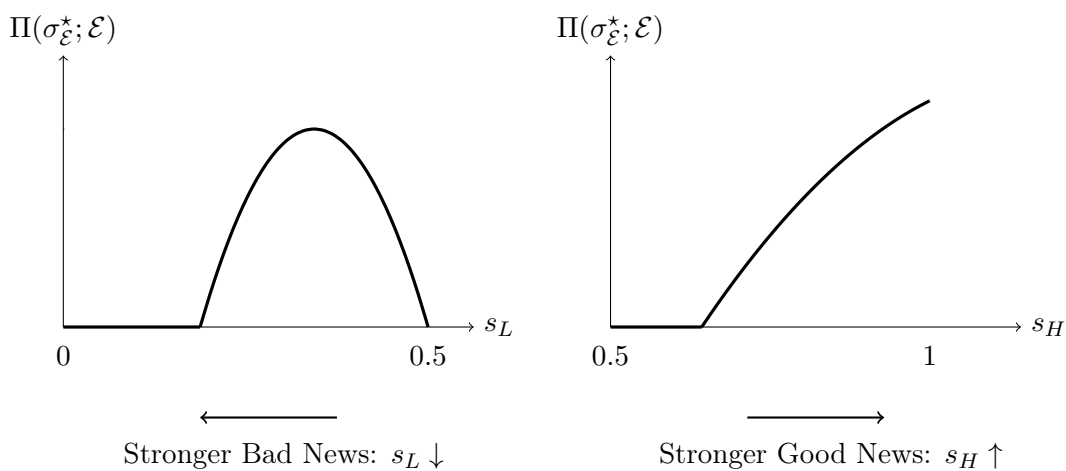
\begin{figure}[h]
        \centering
        \begin{tikzpicture}[scale = 0.6]
            \draw[->] (0,0) -- (9,0) node[right] {$s_L$};
            \draw[->] (0,0) -- (0,6) node[above] {$ \sumpayoff ( \strat^{\star}_{\experiment}; \experiment ) $};
            \draw[scale=1, domain=3:8, smooth, variable=\x, very thick] plot ({\x}, {(0.7)*(3 - \x)*(\x - 8) });
            \coordinate (crd_end) at (8,0);
            \coordinate (crd_top) at (5.5,0);
            \draw[scale=1, domain=0:3, smooth, variable=\x, very thick] plot ({\x}, {0});
            %
            \draw[->, thick, black] (6,-2) -- (3,-2) node[below, pos=0.5, xshift=0.2cm, yshift = -0.2cm] {Stronger Bad News: $s_L \downarrow$};
            \draw[->] (12,0) -- (21,0) node[right] {$ s_H $};
            \draw[->] (12,0) -- (12,6) node[above] {$\sumpayoff ( \strat^{\star}_{\experiment}; \experiment )$};;
            \node[below] at (12,-0.2) {$0.5$};
            \node[below] at (19,-0.2) {$1$};
            \coordinate (xs_start) at (12,0) ;
            \draw[scale=1, domain=14:19, smooth, variable=\x, very thick] plot ({\x}, { (0.1)*(\x-14)*(29 - \x) });
            \draw[scale=1, domain=12:14, smooth, variable=\x, very thick] plot ({\x}, { 0 });
            \coordinate (brace1) at (1.4,-2.2);
            \coordinate (brace11) at (1.4,-3);
            \coordinate (brace2) at (5.5,-2.2);
            \coordinate (brace21) at (5.5,-3);
            %
            \node[below] at (0,-0.2) {$0$};
            \node[below] at (8,-0.2) {$0.5$};
            %
            \draw[dashed] (0,4.375) -- (0,4.375);
            %
            \draw[->, thick, black] (15,-2) -- (18,-2) node[below, pos=0.5, xshift=0.2cm, yshift = -0.2cm] {Stronger Good News: $s_H \uparrow$};
        \end{tikzpicture}  
        \caption{Theorem \ref{thm:intensive_binary} illustrated}
        \label{fig:thmbinary}
    \end{figure}

To understand the intuition behind Theorem \ref{thm:intensive_binary}, let us start from the case of stronger good news. 
Instead of the experiment $\experiment$, buyers now observe the outcome of $\experiment' = (\signalset', p_L', p_H')$, which delivers stronger good news than $\experiment$, $s_H' > s_H$, but the same strength of bad news as $\experiment$, $s_L = s_L'$. 
Brushing equilibrium considerations aside, simply assume that, under both experiments, a buyer accepts upon the high signal, $s_H$ or $s_H'$, and rejects upon the low signal, $s_L$ or $s_L'$. How does, then, this improvement in buyers' information affect the seller's chances of trading?

The answer is clearest when we reinterpret the additional information a buyer can extract from $\experiment'$ as an additional signal she might observe. 
Instead of replacing the original experiment $\experiment$ with $\experiment'$, imagine that a buyer who observes an initial high signal $s_H$ from the experiment $\experiment$ then observes the outcome of an \textit{additional} binary experiment $\experiment^a = \left( \{ s_L^a, s_H^a \}, p_L^a, p_H^a \right)$. 
Conditional on the asset's quality, the outcomes of these additional signals are IID across buyers, and independent of the first signal they observe. Their conditional distributions are such that when appended to $\experiment$ as such, the experiment $\experiment^a$ mimics the improvement in information that $\experiment'$ offers:
\begin{align*}
    \frac{s_H}{1 - s_H} \times \frac{p_H^a \left( s_H^a \right)}{p_L^a \left( s_H^a \right)} &= \frac{s_H'}{1 - s_H'}
    &
    \frac{s_H}{1 - s_H} \times \frac{p_H^a \left( s_L^a \right)}{p_L^a \left( s_L^a \right)} &= \frac{s_L'}{1 - s_L'}
\end{align*}
A buyer who observes the initial low signal $s_L$ receives no further information. I illustrate this construction in 
Figure \ref{fig:strongergood_additional}.

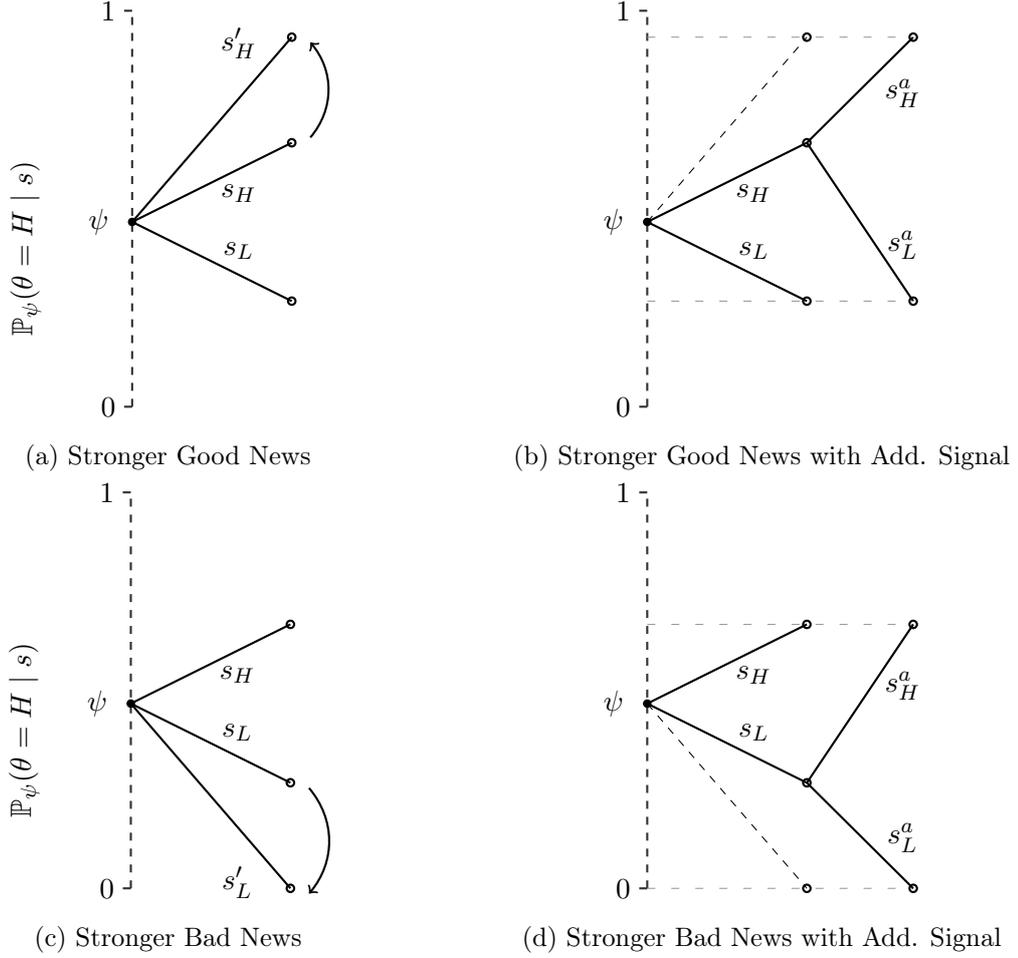
\begin{figure}[h]
    \centering
    \begin{tabular}{cc}
        \begin{subfigure}[t]{0.45\textwidth}
            %
            \centering
            \begin{tikzpicture}[scale=0.7]

                \node[rotate=90] at (-2,4.5) {$\Prob_{\interim} ( \quality = H \mid \typesig )$};
                \draw[dashed, thick, black!80] (0,1.5) -- (0,9);
                \draw[dashed, thick, black!80] (0,1.5) -- (-0.25,1.5);
                \draw[dashed, thick, black!80] (0,9) -- (-0.25,9);
                \node at (-0.45,1.5) {$0$};
                \node at (-0.45,9) {$1$};
                
                \filldraw[black] (0,5) circle (2pt);
                
                \draw[thick, black] (0,5) -- (3,3.5);
                \draw[thick, black] (0,5) -- (3,6.5);
                \draw[thick, black] (0,5) -- (3,8.5);
                
                \draw[thick] (3,3.5) circle (2pt);
                \draw[thick] (3,6.5) circle (2pt);
                \draw[thick] (3,8.5) circle (2pt);
                
                \node[left] at (-0.25,5) {$\psi$};
                
                \draw[->, thick] (3.35,6.6) to[bend right=40] (3.35,8.4);
                
                \node[below] at (2,5.9) {$\typesig_H$};
                \node[above] at (2,4.1) {$\typesig_L$};
                \node[above] at (2,7.9) {$\typesig_H'$};

            \end{tikzpicture}
            \caption{Stronger Good News}
            \label{fig:strongergood}
        \end{subfigure} &
        
        
        \begin{subfigure}[t]{0.45\textwidth}
            \centering
            \begin{tikzpicture}[scale=0.7]

                \draw[dashed, thick, black!80] (0,1.5) -- (0,9);
                \draw[dashed, thick, black!80] (0,1.5) -- (-0.25,1.5);
                \draw[dashed, thick, black!80] (0,9) -- (-0.25,9);
                \node at (-0.45,1.5) {$0$};
                \node at (-0.45,9) {$1$};
                
                \filldraw[black] (0,5) circle (2pt);
                
                \draw[thick, black] (0,5) -- (3,3.5);
                \draw[thick, black] (0,5) -- (3,6.5);
                \draw[dashed] (0,5) -- (3,8.5);
                
                \draw[thick] (3,3.5) circle (2pt);
                \draw[thick] (3,6.5) circle (2pt);
                \draw[thick] (3,8.5) circle (2pt);

                \draw[thick, black] (3,6.5) -- (5,8.5);
                \draw[thick] (5,8.5) circle (2pt);
                \draw[loosely dashed,black!40] (0,8.5) -- (5,8.5);
                \draw[thick, black] (3,6.5) -- (5,3.5);
                \draw[thick] (5,3.5) circle (2pt);
                \draw[loosely dashed,black!40] (0,3.5) -- (5,3.5);
                
                \node[left] at (-0.25,5) {$\psi$};
                
                \node[below] at (2,5.9) {$\typesig_H$};
                \node[above] at (2,4.1) {$\typesig_L$};
                \node[below] at (4.8,7.9) {$\typesig_H^a$};
                \node[above] at (4.8,4.1) {$\typesig_L^a$};

            \end{tikzpicture}
            \caption{Stronger Good News with Add. Signal}
            \label{fig:strongergood_additional}
        \end{subfigure} \\

        \begin{subfigure}[t]{0.45\textwidth}
            \centering
            \begin{tikzpicture}[scale=0.7]

                \node[rotate=90] at (-2,4.5) {$\Prob_{\interim} ( \quality = H \mid \typesig )$};
                \draw[dashed, thick, black!80] (0,1.5) -- (0,9);
                \draw[dashed, thick, black!80] (0,1.5) -- (-0.25,1.5);
                \draw[dashed, thick, black!80] (0,9) -- (-0.25,9);
                \node at (-0.45,1.5) {$0$};
                \node at (-0.45,9) {$1$};
                
                \filldraw[black] (0,5) circle (2pt);
                
                \draw[thick, black] (0,5) -- (3,3.5);
                \draw[thick, black] (0,5) -- (3,1.5);
                \draw[thick, black] (0,5) -- (3,6.5);
                
                \draw[thick] (3,3.5) circle (2pt);
                \draw[thick] (3,1.5) circle (2pt);
                \draw[thick] (3,6.5) circle (2pt);
                
                \node[left] at (-0.25,5) {$\psi$};
                
                \draw[->, thick] (3.35,3.4) to[bend left=40] (3.35,1.4);
                
                \node[below] at (2,5.9) {$\typesig_H$};
                \node[above] at (2,4.1) {$\typesig_L$};
                \node[below] at (2,2.1) {$\typesig_L'$};

            \end{tikzpicture}
            \caption{Stronger Bad News}
            \label{fig:strongerbad}
        \end{subfigure} &
        
        \begin{subfigure}[t]{0.45\textwidth}
            \centering
            \begin{tikzpicture}[scale=0.7]

                \draw[dashed, thick, black!80] (0,1.5) -- (0,9);
                \draw[dashed, thick, black!80] (0,1.5) -- (-0.25,1.5);
                \draw[dashed, thick, black!80] (0,9) -- (-0.25,9);
                \node at (-0.45,1.5) {$0$};
                \node at (-0.45,9) {$1$};
                
                \filldraw[black] (0,5) circle (2pt);
                
                \draw[thick, black] (0,5) -- (3,3.5);
                \draw[dashed] (0,5) -- (3,1.5);
                \draw[thick, black] (0,5) -- (3,6.5);
                
                \draw[thick] (3,3.5) circle (2pt);
                \draw[thick] (3,1.5) circle (2pt);
                \draw[thick] (3,6.5) circle (2pt);
                
                \node[left] at (-0.25,5) {$\psi$};
                
                \node[below] at (2,5.9) {$\typesig_H$};
                \node[above] at (2,4.1) {$\typesig_L$};

                \draw[loosely dashed,black!40] (0,6.5) -- (5,6.5);
                \draw[loosely dashed,black!40] (0,1.5) -- (5,1.5);

                \draw[thick, black] (3,3.5) -- (5,6.5);
                \draw[thick] (5,6.5) circle (2pt);
                \draw[thick, black] (3,3.5) -- (5,1.5);
                \draw[thick] (5,1.5) circle (2pt);
                \node[below] at (4.8,5.8) {$\typesig_H^a$};
                \node[above] at (4.8,2) {$\typesig_L^a$};

            \end{tikzpicture}
            \caption{Stronger Bad News with Add. Signal}
            \label{fig:strongerbad_additional}
        \end{subfigure}
    \end{tabular}
    \caption{Blackwell Improvements of a Binary Signal}
    \label{figure:binaryimprovement}
\end{figure}

\looseness=-1
So, observing the sequence $\left( s_H, s_H^a \right)$ conveys the same information as the signal $s_H'$ from $\experiment'$ does. 
This information leads to an acceptance. Either the sequence $\left( s_H, s_L^a \right)$ or the signal $s_L$ convey the same information as the signal $s_L'$ from $\experiment'$ does. This information leads to a rejection.

Our reinterpretation reveals how this additional information affects trade. 
A buyer who observes an initial low signal $s_L$ receives no additional information. She rejects the seller, as before. However, a buyer who receives an initial high signal $s_H$ receives additional information through $\experiment^a$. Absent this additional information, she would have traded. But a low signal $s_L^a$ from $\experiment^a$ ``negatively overrides'' that initial verdict: now, she rejects the seller.

Thus, stronger good news jeopardises trade: a seller who previously would have traded with some buyer may now be rejected by every buyer. This raises total surplus: the seller is rejected by every buyer because each observe a low signal, either $s_L$ or $s_L^a$. Trading with such a seller would yield negative surplus in expectation: if a buyer with a low signal expects a negative surplus from trade when she merely \textit{suspects} others to have received low signals, she would certainly expect a negative surplus if she \textit{knew} all others had received low signals too.

Now, let us turn to the case of stronger bad news. 
To shed light on the threshold beyond which stronger bad news hurts total surplus, let us focus on a an experiment $\experiment'$ that delivers ``marginally'' stronger bad news than $\experiment$, i.e., $s_L' = s_L - \delta$ for a vanishingly small $\delta > 0$, but the same strength of good news as $\experiment$, $s_H' = s_H$.

As before, let us reinterpret the additional information a buyer can extract from $\experiment'$ as an additional signal she can observe. This time, she observes this additional signal only after an initial low signal $\typesig_L$. As before, the conditional distribution of this additional signal is such that when appended to $\experiment$, it mimics the improvement $\experiment'$ offers over $\experiment$:
\begin{align*}
    \frac{\typesig_L}{1 - \typesig_L} \times \frac{p_H^a \left(\typesig_H^a \right)}{p_L^a \left(\typesig_H^a \right)} &= \frac{\typesig_H}{1 - \typesig_H}
    &
    \frac{\typesig_L}{1 - \typesig_L} \times \frac{p_H^a \left(\typesig_L^a \right)}{p_L^a \left(\typesig_L^a \right)} &= \frac{\typesig_L'}{1 - \typesig_L'} = \frac{\typesig_L - \delta}{1 - (\typesig_L - \delta)}
\end{align*}
I illustrate this construction in Figure \ref{fig:strongerbad_additional}.

A buyer who receives an initial high signal $\typesig_H$ observes no additional information. 
She trades with the seller. However, a buyer who receives an initial low signal $\typesig_L$ receives additional information through $\experiment^a$. Absent this additional information, she would have rejected the seller. But a high signal $\typesig_H^a$ from $\experiment^a$ ``positively overrides'' her initial verdict: now, she trades. 

So, stronger bad news encourages trade: a seller who previously would have been rejected by every buyer might now trade with one. The effect this has on total surplus is less clear. What can we infer about the quality of this seller?

The seller who would have been rejected by every buyer absent the additional information $\experiment^a$ supplies; so, initially all buyers observed low signals, $\typesig_L$. Upon the additional information conveyed by $\experiment^a$, some buyer revises her decision to an acceptance; so, at least one buyer must have observed a high signal $\typesig_H^a$ from this additional experiment, but the rest must have observed the low signal $s_L^a$. However, the rest observed the low signal $s_L^a$ from this additional experiment. Whether the expected surplus from trade is positive depends on \textit{how many} did so:
\begin{equation*}
    \Prob \left( \quality = H \mid \geq 1 \textrm{ buyer observed } s_H^a  \right) 
    = \sum\limits_{k=1}^{n} 
    \left[
        \begin{aligned}
            &\Prob \left( \quality = H \mid k \textrm{ buyers observed } s_H^a \right) \\
            \times &\Prob \left( k \textrm{ buyers observed } s_H^a \mid  \geq 1 \textrm{ buyer obs. } s_H^a \right)
        \end{aligned}
    \right]
\end{equation*}

However, we may deduce that almost surely only one buyer observed the high signal $s_H^a$. To see this, observe the likelihood ratios of the signals a buyer may observe from the experiment $\experiment^a$:
\begin{align*}
    \frac{s_L^a}{1 - s_L^a} &= \frac{
        \frac{s_L - \delta}{1 - (s_L - \delta)}
    }{
        \frac{s_L}{1 - s_L}
    } 
    &
    \frac{s_H^a}{1 - s_H^a} = \frac{
        \frac{s_H}{1 - s_H}
    }{
        \frac{s_L}{1 - s_L}
    }
\end{align*}
While the likelihood ratio for the high signal $s_H^a$ is constant, that for the low signal $s_L^a$ converges to $1$ as $\delta \downarrow 0$ and the improvement in buyers' information becomes ``smaller''. Due to the martingale property of likelihood ratios, these likelihood ratios must average to $1$; so, the probability that buyers will observe the additional high signal $s_H^a$ must vanish as $\delta \downarrow 0$. 

So, the expected surplus from this trade is non-negative if and only if:
\begin{align*}
    \lim\limits_{\delta \downarrow 0} \Prob \left( \quality = H \mid \geq 1 \textrm{ buyer observed } s_H^a  \right) - c &= \Prob \left( \quality = H \mid 1 \textrm{ buyer observed } s_H^a  \right) - c \geq 0 \\
    &\iff \frac{\prior}{1 - \prior} \times \left[ \frac{s_L}{1 - s_L} \right]^{n-1} \times \frac{s_H}{1 - s_H} \geq  \frac{c}{1-c}
\end{align*}

Marginally stronger bad news allows the most adversely selected seller to trade. Had buyers' not received additional information, he would not have traded. Because they do, one buyer accepts him. The RHS of the expression above reflects this: the expected surplus from trading with this seller is positive if and only if the high signal $s_H^a$ observed by that one buyer overpowers the low signals $s_L^a$ observed by the remaining $n-1$ buyers.\footnote{Note that $s_L^a \to s_L$ as $\delta \downarrow 0$.}

This is a stark condition; when it holds, we may say that \textit{adverse selection is irrelevant}: a buyer who observes the high signal need not be concerned about any previous rejections the seller may have received. It is also closely linked to the threshold Theorem \ref{thm:intensive_binary} identifies, which Proposition \ref{prop:thresholdbinary} characterises. 

\begin{restatable}{prop}{propthresholdbinary}
    \label{prop:thresholdbinary}
    Let buyers' experiment be binary. Then, equilibrium$^{\star}$ total surplus weakly decreases with stronger bad news (lower $s_L$) when:
    \begin{equation*}
        \frac{\prior}{1 - \prior} \times \max \left\{ \frac{s_L}{1 - s_L}, \left[ \frac{s_L}{1 - s_L} \right]^{n-1} \times  \frac{s_H}{1 - s_H} \right\} \leq \frac{c}{1-c}
    \end{equation*}
    This condition is also necessary in the least selective equilibrium. 
\end{restatable}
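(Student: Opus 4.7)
My plan has two halves: sufficiency (for both the most and least selective equilibria) and necessity (only in the least selective equilibrium).

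For sufficiency, I would first use the first clause $\frac{\prior}{1-\prior}\cdot\frac{s_L}{1-s_L} \leq \frac{c}{1-c}$ to pin down the equilibrium strategy. By Proposition \ref{prop:eqmexist}, any equilibrium interim belief is at most $\prior$, so this inequality forces the posterior upon $s_L$ to be at most $c$ in any equilibrium. Hence buyers (weakly) reject $s_L$ in every equilibrium, and one may restrict attention to the equilibrium strategy $\strat^{\star}(s_L)=0$, $\strat^{\star}(s_H)=1$ (if buyers also reject $s_H$ no trade ever occurs and the claim is trivial). Both clauses of the max condition are monotone in $s_L$, so the condition persists for every $s_L' \in [0, s_L]$, and this equilibrium structure remains valid on the whole comparison interval. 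Under this strategy, equilibrium surplus is $\sumpayoff(s_L) = (1-c)\prior[1-p_H(s_L)^n] - c(1-\prior)[1-p_L(s_L)^n]$, where the marginal probabilities $p_{\quality}(s_L)$ are pinned down by the binary-experiment normalisation.

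The core step is then a derivative. Using $p_H(s_L) = (2s_H-1)s_L/(s_H-s_L)$ and $p_L(s_L) = (1-s_L)(2s_H-1)/(s_H-s_L)$, with likelihood ratio $p_H(s_L)/p_L(s_L) = s_L/(1-s_L)$, one differentiates and factors to obtain
\begin{equation*}
    \frac{d\sumpayoff}{ds_L} = \frac{n(2s_H-1)\,p_L(s_L)^{n-1}}{(s_H-s_L)^2}\left[c(1-\prior)(1-s_H) - (1-c)\prior\, s_H\left(\frac{s_L}{1-s_L}\right)^{n-1}\right].
\end{equation*}
The bracket is non-negative exactly when the second clause of the max condition holds, so $\sumpayoff$ is non-decreasing in $s_L$, equivalently weakly decreasing as $s_L$ falls---giving sufficiency.

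For necessity in the least selective equilibrium, the failure of either clause breaks weak monotonicity. If the first clause fails strictly, then at $s_L$ the strategy ``accept on both signals'' is an equilibrium (no adverse selection, since all visits end in trade), yielding surplus $\prior-c$. A small decrease in $s_L$ destroys this equilibrium, leaving the least selective one to be accept-only-$s_H$, whose surplus strictly exceeds $\prior-c$ by Proposition \ref{prop:selectivebetter} (assuming $\prior > c$; otherwise the no-information benchmark is $0$ and surplus is constant until the second clause also begins to fail). If instead the first clause holds but the second fails, the derivative above gives $\frac{d\sumpayoff}{ds_L} < 0$, so surplus strictly increases as $s_L$ decreases.

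The derivative calculation itself is mechanical; the main obstacle is identifying the equilibrium$^{\star}$ strategy at each $s_L$ and verifying that this structure persists over the comparison interval $[0,s_L]$. This is where the first clause combines with the adverse-selection bound of Proposition \ref{prop:eqmexist} to do the real work; boundary cases (equality in either clause, or pathological configurations where the buyer rejects even $s_H$) are handled separately but reduce to trivialities.
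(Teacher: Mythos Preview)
Your sufficiency argument is correct and in fact more direct than the paper's. The paper routes through the Claim inside the proof of Theorem~\ref{thm:intensive_binary}, which computes the marginal effect of lowering $s_L$ via an auxiliary-signal (``marginal reject'') construction; you instead differentiate $\sumpayoff(\strat_1;\experiment)$ explicitly using the closed-form $p_\quality(s_L)$, and the bracket you obtain is exactly the second clause. Both routes compute the same derivative; yours is cleaner for the binary case, while the paper's probabilistic construction is what generalises to Theorem~\ref{thm:intensive}. One minor patch: when $\strat_1$ is not itself an equilibrium (buyers mix on $s_H$), equilibrium surplus is $0$, so the object you should be tracking is $\max\{0,\sumpayoff(\strat_1;\experiment)\}$ rather than $\sumpayoff(\strat_1;\experiment)$. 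This does not change the sign conclusion since $\max\{0,\cdot\}$ preserves weak monotonicity, but you should say so explicitly.

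Your necessity argument for the first clause has a genuine gap. You write that if $\frac{\prior}{1-\prior}\cdot\frac{s_L}{1-s_L}>\frac{c}{1-c}$ strictly, then ``a small decrease in $s_L$ destroys this equilibrium.'' That is false: a small decrease keeps $s_L$ strictly above $s_L^{\textrm{mute}}$, so ``accept both'' remains the least selective equilibrium and surplus stays flat at $\prior-c$. What actually drives necessity in the paper's argument is the \emph{jump} at $s_L^{\textrm{mute}}$: as $s_L$ crosses below it, the least selective equilibrium switches to $\strat_1$, and by Lemma~\ref{lem:selectivebetter} surplus jumps \emph{up} to $\max\{0,\sumpayoff(\strat_1;\experiment)\}\geq\prior-c$. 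So the correct statement is that whenever the first clause fails at $s_L$, there exists $s_L'<s_L$ (any $s_L'$ just below $s_L^{\textrm{mute}}$) at which surplus is strictly higher --- not that a marginal decrease from $s_L$ itself raises surplus. Your argument for the second clause (first holds, second fails, derivative negative) is fine.
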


\begin{cor}
    \label{cor:thresholdbinary}
    Where buyers' experiment is binary and $\prior \leq c$, equilibrium$^{\star}$ total surplus weakly decreases with stronger bad news (lower $s_L$) when $\left( \frac{s_L}{1 - s_L} \right)^{n-1} \times \frac{s_H}{1 - s_H} \leq \frac{c}{1-c}$.
\end{cor}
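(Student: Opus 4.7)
I would deduce the Corollary from Proposition \ref{prop:thresholdbinary}, so I first sketch the Proposition's proof.

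By Proposition \ref{prop:eqmexist}, any equilibrium's interim belief satisfies $\interim^*\leq\prior$; combined with the first term $\tfrac{\prior}{1-\prior}\tfrac{s_L}{1-s_L}\leq\tfrac{c}{1-c}$ in the max condition, the posterior on $s_L$ lies weakly below $c$ in any equilibrium, so $\strat^\star(s_L)=0$ for both the most and least selective equilibrium strategies. Monotonicity then restricts $\strat^\star$ to $\hat\strat$ (accept only on $s_H$), a mixed strategy on $s_H$, or ``reject all''.

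For the $\hat\strat$ case, I would parameterise the binary experiment by $(s_L,s_H)$: solving the normalisation constraints $p_\quality(s_L)+p_\quality(s_H)=1$ together with the likelihood-ratio definitions gives $p_H(s_L)=(2s_H-1)s_L/(s_H-s_L)$ and $p_L(s_L)=(2s_H-1)(1-s_L)/(s_H-s_L)$, so the rejection probabilities are $r_\quality=p_\quality(s_L)$ and
\[
    \sumpayoff(\hat\strat;\experiment)=\prior(1-c)(1-r_H^n)-(1-\prior)c(1-r_L^n).
\]
Differentiating in $s_L$, using $r_L/r_H=(1-s_L)/s_L$, and cancelling the positive prefactor $n(2s_H-1)/(s_H-s_L)^2$ yields
\[
    \frac{\partial\sumpayoff(\hat\strat;\experiment)}{\partial s_L}\geq 0 \;\;\iff\;\; \frac{\prior}{1-\prior}\Bigl(\tfrac{s_L}{1-s_L}\Bigr)^{n-1}\tfrac{s_H}{1-s_H}\leq\tfrac{c}{1-c},
\]
exactly the second term in the max. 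In the mixed-on-$s_H$ and reject-all cases, each buyer's ex-ante payoff is zero (by indifference upon $s_H$ or by never trading), so $\sumpayoff\equiv 0$---constant in $s_L$. Combining the cases yields sufficiency of the Proposition's condition; necessity in the least selective equilibrium follows from the iff above, restricted to the regime where $\hat\strat$ is the least selective equilibrium (i.e., where it gives strictly positive surplus).

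For the Corollary itself, under $\prior\leq c$ the first term in the Proposition's max is automatic ($\tfrac{\prior}{1-\prior}\leq\tfrac{c}{1-c}$ and $\tfrac{s_L}{1-s_L}\leq 1$ jointly give $\tfrac{\prior s_L}{(1-\prior)(1-s_L)}\leq\tfrac{c}{1-c}$), and the Corollary's condition $(\tfrac{s_L}{1-s_L})^{n-1}\tfrac{s_H}{1-s_H}\leq\tfrac{c}{1-c}$ implies the Proposition's second-term condition after using $\tfrac{\prior}{1-\prior}\leq\tfrac{c}{1-c}$; the Proposition then delivers the claim. The main obstacle is piecing together $\sumpayoff$ across equilibrium-regime transitions as $s_L$ varies: I would verify continuity by showing that at the boundary where $\hat\strat$ ceases to be an equilibrium---the defining indifference $\prior\nu_H s_H(1-c)=c(1-\prior)\nu_L(1-s_H)$---total surplus $\sumpayoff(\hat\strat;\experiment)$ equals zero, matching the value in the adjacent mixed/reject-all regime.
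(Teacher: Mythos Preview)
Your sketch of Proposition~\ref{prop:thresholdbinary} takes a different and more direct route than the paper's. Where the paper represents the improvement $\experiment\to\experiment'$ via an auxiliary experiment, identifies the ``marginal reject'' whose trade status flips, and passes to a limit, you parameterise $(r_H,r_L)$ explicitly in $(s_L,s_H)$ and differentiate $\sumpayoff(\hat\strat;\experiment)$ in $s_L$. Your computation of the iff threshold $\tfrac{\prior}{1-\prior}\bigl(\tfrac{s_L}{1-s_L}\bigr)^{n-1}\tfrac{s_H}{1-s_H}\leq\tfrac{c}{1-c}$ is correct and more elementary; the paper's construction is heavier but is what generalises to the non-binary Theorem~\ref{thm:intensive}.

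Your deduction of the Corollary from the Proposition, however, has a genuine gap. You assert that the Corollary's hypothesis $\bigl(\tfrac{s_L}{1-s_L}\bigr)^{n-1}\tfrac{s_H}{1-s_H}\leq\tfrac{c}{1-c}$, together with $\tfrac{\prior}{1-\prior}\leq\tfrac{c}{1-c}$, implies the Proposition's second-term condition $\tfrac{\prior}{1-\prior}\bigl(\tfrac{s_L}{1-s_L}\bigr)^{n-1}\tfrac{s_H}{1-s_H}\leq\tfrac{c}{1-c}$. But from $X\leq K$ and $Y\leq K$ one cannot conclude $XY\leq K$; the implication you need holds only when $\tfrac{\prior}{1-\prior}\leq 1$, i.e.\ $\prior\leq\tfrac12$, which $\prior\leq c$ does not guarantee. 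Concretely, with $n=2$, $\prior=0.6$, $c=0.7$, $s_L=0.45$, $s_H=0.7$, the Corollary's hypothesis holds ($0.818\times 2.33\approx 1.91\leq 2.33$) yet the Proposition's second term fails ($1.5\times 1.91\approx 2.86>2.33$), so by your own iff $\partial\sumpayoff(\hat\strat;\experiment)/\partial s_L<0$ and surplus \emph{rises} with stronger bad news in the $\hat\strat$ regime. The paper's implicit argument---that under $\prior\leq c$ the first term of the max is automatic, leaving only the second---retains the factor $\tfrac{\prior}{1-\prior}$ in that second term (cf.\ the ``irrelevance of adverse selection'' condition discussed just before the Proposition); the Corollary as printed drops that factor, and your argument does not bridge the gap.
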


\noindent \looseness=-1 Proposition \ref{prop:thresholdbinary} reveals that total surplus falls with stronger bad news once bad news is strong enough to:
\vspace{-2pt}
\begin{itemize}[itemsep=-2pt]
    \item violate the condition we informally identified as the ``irrelevance of adverse selection'', and
    \item for buyers to reject the seller with positive probability in their equilibrium$^{\star}$ strategies.
\end{itemize}
Under some parameter constellations, even when adverse selection is no longer irrelevant, bad news might need to get stronger before, in equilibrium, buyers start rejecting the seller upon a low signal. Before the strength of bad news hits this critical mark, total surplus equals the no-information benchmark. Once it does, total surplus experiences a one time upward jump. Thereafter, stronger bad news decreases total surplus. Corollary \ref{cor:thresholdbinary} reveals, however, that this is not a concern when $\prior < c$: then, buyers always reject upon a low signal, so stronger bad news decreases total surplus whenever the condition we dubbed ``the irrelevance of adverse selection'' is satisfied. 

\subsection{Finite Experiments}
\label{section:finite}

Here, I show how the ideas we through Theorem \ref{thm:intensive_binary} generalise to Blackwell improvements of experiments with an arbitrary finite number of outcomes. 
We cannot deploy those ideas immediately: first, for non-binary experiments, the ideas of stronger good and bad news lose meaning; second, Blackwell improvements of such experiments are complex---they cannot be described by simple changes in signals' likelihood ratios. Nonetheless, the core idea behind Theorem \ref{thm:intensive_binary} supplies the answer: whether total surplus improves depends on whether an improvement is a \textit{positive override}---information about a seller who would be rejected---or a \textit{negative override}---information about a seller who would be approved. 

Before I state Theorem \ref{thm:intensive}, I establish two definitions that will provide useful langugage: Definition \ref{defn:local_mps} formalises the notion of a positive and negative override, and Definition \ref{defn:irrelevance} formalises the notion of the ``irrelevance of adverse selection''.

\begin{defn}
    \label{defn:local_mps}
    Enumerate the joint outcome set of the experiments $\experiment = (\signalset, p_L, p_H)$ and $\experiment' = (\signalset', p_L', p_H')$ as $ \signalset \cup \signalset' = \left\{ s_1, s_2, ..., s_M \right\}$. The experiment $\experiment'$ differs from $\experiment$ by a \textit{local mean preserving spread} (or, \textit{local spread}) \textit{at} $s_j \in \signalset$ if:
    \begin{align*}
        p_{\quality}' (s_j) &= 0 & 
        p_{\quality} (s_{j+1}) = p_{\quality} (s_{j-1}) &= 0  &
        p_{\quality}'(s_{j+1}) + p_{\quality}' (s_{j-1}) &= p_{\quality} (s_j)
    \end{align*}
    and $p_{\quality}'(s) = p_{\quality} (s)$ for any $s \in \signalset' \cup \signalset \setminus \{s_{j-1}, s_j, s_{j+1} \}$.
\end{defn}

A local spread moves all the probability mass experiment $\experiment$ places on some signal $s_j$ to two new signals---one better news about the asset's quality, $s_{j+1}$, one worse, $s_{j-1}$. In this sense, we can think of it as providing additional information to a buyer who observes the original signal $s_j \in \signalset$. 

Every local spread is an ordinary mean preserving spread.\footnote{Specifically, a ``3-part MPS'', in the language of \textcite{rasmusen_petrakis_3ptmps_92}. Mean preserving spreads are originally due to \textcite{muirhead_1900} and were popularised in economics by \textcite{rothschild_stiglitz_1970}.}. The converse is not true; local spreads must move \textit{all} the probability mass on a signal, and they must move it to its neighbouring signals: the mass on $s_j$ is spread to $s_{j+1}$ and $s_{j-1}$. I visualise the construction of a local spread in Figure \ref{fig:localspread}.\footnote{For clarity, I fix $\interim = 0.5$ in this illustration.}

\looseness=-1
Restricting to local spreads is without loss for finite experiments\footnote{However, local mean preserving spreads are only defined for finite experiments; see \textcite{muller_scarsini} and \textcite{muller_stoyan_comparison}.}---every Blackwell improvement, and \textit{a fortiori}, ordinary mean preserving spread can be constructed through a finite number of local spreads.

\begin{restatable}{rem}{remlocalmps}[\textcite{muller_stoyan_comparison}, Theorem 1.5.29]
    \label{rem:localmps}
    An experiment $\experiment'$ is Blackwell more informative than another, $\experiment$, if and only if there is a finite sequence of experiments $\signalstr_1, \signalstr_2, ..., \signalstr_k$ such that $\signalstr_1 = \signalstr$, $\signalstr_k = \signalstr'$, and $\signalstr_{i+1}$ differs from $\signalstr_i$ by a local spread.
\end{restatable}

\begin{center}
           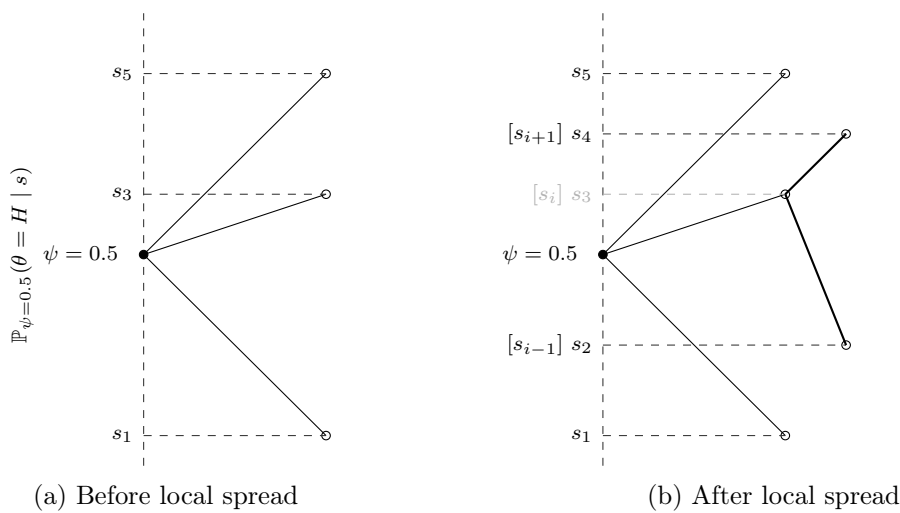
\begin{figure}[h]
                \centering
                \hspace{0pt}
                \begin{subfigure}[b]{0.45\textwidth}
                    \centering
                    \begin{tikzpicture}[scale=0.8]
                        \begin{scope}[every node/.style={font=\scriptsize}]
                        \node[rotate=90] at (-2,5) {$\Prob_{\interim = 0.5} ( \quality = H \mid \typesig )$};
                        \draw[dashed, thin,black!80] (0,1.5) -- (0,9);
                        
                        \filldraw[black] (0,5) circle (2pt);
                        
                        \draw (0,5) -- (3,2);
                        \draw (0,5) -- (3,8);
                        \draw (0,5) -- (3,6);
                        
                        \draw (3,2) circle (2pt);
                        \draw (3,8) circle (2pt);
                        \draw (3,6) circle (2pt);
                        
                        \draw[dashed,thin,black!80] (0,2) -- (3,2);
                        \draw[dashed,thin,black!80] (0,8) -- (3,8);
                        \draw[dashed,thin,black!80] (0,6) -- (3,6);
                        
                        \node[left] at (0,2) {$s_1$};
                        \node[left] at (0,8) {$s_5$};
                        \node[left] at (0,6) {$s_3$};
                        
                        \node[left] at (-0.25,5) {$\interim = 0.5$};    
                        \end{scope}
                    \end{tikzpicture}
                    \caption{Before local spread}
                \end{subfigure}
                \hspace{10pt}
                \begin{subfigure}[b]{0.45\textwidth}
                    \begin{tikzpicture}[scale=0.8]
                        \begin{scope}[every node/.style={font=\scriptsize}]
                        \draw[dashed, thin,black!80] (0,1.5) -- (0,9);
                        
                        \filldraw[black] (0,5) circle (2pt);
                        
                        \draw (0,5) -- (3,2);
                        \draw (0,5) -- (3,8);
                        \draw[black] (0,5) -- (3,6);
                        
                        \draw (3,2) circle (2pt);
                        \draw (3,8) circle (2pt);
                        \draw (3,6) circle (2pt);
                        \draw[color=black] (4,3.5) circle (2pt);
                        \draw[color=black] (4,7) circle (2pt);
                        
                        \draw[dashed,thin,black!80] (0,2) -- (3,2);
                        \draw[dashed,thin,black!80] (0,8) -- (3,8);
                        \draw[dashed,thin,black!30] (0,6) -- (3,6);
                        \draw[dashed,thin,black!80] (0,7) -- (4,7);
                        \draw[dashed,thin,black!80] (0,3.5) -- (4,3.5);
                        
                        \node[left] at (0,2) {$s_1$};
                        \node[left] at (0,8) {$s_5$};
                        \node[left,black!30] at (0,6) {$[s_i]\textrm{ } s_3$};
                        \node[left,black] at (0,7) {$[s_{i+1}] \textrm{ } s_4$};
                        \node[left,black] at (0,3.5) {$[s_{i-1}] \textrm{ } s_2$};
            
                        \draw[thick,black] (3,6) -- (4,7);
                        \draw[thick,black] (3,6) -- (4,3.5);
                                    
                        \node[left] at (-0.25,5) {$\interim = 0.5$};      
            
                        \end{scope}
                    \end{tikzpicture}    
                    \caption{After local spread}
                \end{subfigure}
                \caption{A Local Mean Preserving Spread 
                }
                \label{fig:localspread}
            \end{figure}
\end{center}

\begin{defn}
    \label{defn:overrides}
    Let experiment $\experiment'$ differ from $\experiment$ by a local spread at $s_j$ and $\strat_{\experiment}^{\star}$ denote the equilibrium$^{\star}$ strategy under experiment $\experiment$. This local spread is a \textit{negative override} under equilibrium$^{\star}$ if $\strat_{\experiment}^{\star} (s_j) = 1$, and a \textit{positive override} if $\strat_{\experiment}^{\star} (s_j) = 0$.
\end{defn}

Both positive and negative overrides are local spreads of a buyer's experiment, but they differ in \textit{which} seller they inform the buyer about. A negative override is a local spread of a signal upon which a seller would have traded with the buyer.
A positive override is a local spread of a signal upon which the seller would have been rejected by the buyer. 

\begin{defn}
    \label{defn:irrelevance}
    Let $\strat$ be a monotone strategy for a fixed experiment $\experiment$. \textit{Adverse selection is} $\strat$\textit{-irrelevant for signal} $s \in \signalset$ if:
    \begin{equation*}
        \frac{\prior}{1 - \prior} \times \left[ \frac{r_H \left( \strat; \experiment \right)}{r_L \left( \strat; \experiment \right)} \right]^{n-1} \times \frac{\typesig}{1 - \typesig} \geq \frac{c}{1-c}
    \end{equation*}
\end{defn}

When adverse selection is $\strat$-irrelevant for a signal $\typesig \in \signalset$, a buyer finds it optimal to trade upon the signal $\typesig \in \signalset$ even if all other buyers rejected the seller---provided those buyers used the strategies $\strat$. 

\begin{restatable}{thm}{thmintensive}
    \label{thm:intensive}
    Let the experiment $\experiment'$ differ from $\experiment$ by a local spread at $s_j \in \signalset$. Equilibrium$^{\star}$ total surplus is:
    \begin{enumerate}
        \item weakly greater under $\experiment'$ if the local spread is a negative override under equilibrium$^{\star}$.
        \item weakly less under $\experiment'$ if the local spread is a positive override under equilibrium$^{\star}$, unless adverse selection is $\strat_{\experiment}^{\star}$-irrelevant for signal $s_{j+1}$.
    \end{enumerate}
\end{restatable}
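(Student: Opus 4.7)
\noindent\textit{Proof plan.} The starting point mirrors the binary analysis of Section \ref{section:binary}: a local spread of $\experiment$ at $s_j$ is informationally equivalent to granting each buyer an additional binary experiment $\experiment^a$ whose outcome $s^a_H$ (resp.\ $s^a_L$) is revealed only upon the initial signal $s_j$ and induces the same posterior as $s_{j+1}$ (resp.\ $s_{j-1}$); outcomes of $\experiment^a$ are IID across buyers conditional on quality. Under $\experiment'$, I introduce the ``mimicking-with-override'' strategy $\tilde{\strat}$ defined by $\tilde{\strat}(s) = \strat^{\star}_{\experiment}(s)$ for $s \in \signalset \setminus \{s_j\}$, $\tilde{\strat}(s_{j+1}) = 1$, and $\tilde{\strat}(s_{j-1}) = 0$. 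This is a monotone strategy on $\experiment'$, and the bulk of the proof compares the surplus under $\tilde{\strat}$ on $\experiment'$ to that under $\strat^{\star}_{\experiment}$ on $\experiment$, then lifts the comparison to equilibrium$^{\star}$ surplus under $\experiment'$.

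For the \textbf{negative override} case, the ``naive mimic'' on $\experiment'$ that always matches $\strat^{\star}_{\experiment}(s_j)=1$ at both $s_{j-1}$ and $s_{j+1}$ generates identical trade probabilities and hence identical surplus to $\strat^{\star}_{\experiment}$ on $\experiment$. The override strategy $\tilde{\strat}$ differs only by rejecting at $s_{j-1}$, and because $p^a_L(s^a_L) \geq p^a_H(s^a_L)$ by MLR, these additional rejections fall more often on the Low-quality seller, so $\sumpayoff(\tilde{\strat};\experiment') \geq \sumpayoff(\strat^{\star}_{\experiment};\experiment)$. For the \textbf{positive override} case, the naive mimic now rejects at both $s_{j-1}$ and $s_{j+1}$, and $\tilde{\strat}$ differs by \emph{adding} trades at $s_{j+1}$. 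The marginal contribution of these new trades can be decomposed across $k \in \{0,\dots,n-1\}$, the number of other buyers the seller has already been rejected by under $\strat^{\star}_{\experiment}$; the expected surplus of a trade in the worst case $k=n-1$ is non-negative precisely when adverse selection is $\strat^{\star}_{\experiment}$-irrelevant for $s_{j+1}$ (Definition \ref{defn:irrelevance}). When irrelevance fails, $\strat^{\star}_{\experiment}(s_j)=0$ implies that the interim posterior at $s_j$ under $\strat^{\star}_{\experiment}$ lies weakly below $c$; combining this bound with a martingale identity on the likelihood ratios of $s_{j-1}, s_j, s_{j+1}$ shows that the total marginal surplus from the added trades is non-positive.

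The \textbf{main obstacle} is the \emph{lifting step}: $\tilde{\strat}$ need not be an equilibrium under $\experiment'$, and equilibrium$^{\star}$ under $\experiment'$ is endogenous. My plan is to chain the local-spread comparisons via Remark \ref{rem:localmps}, handling each spread as follows. For the most selective equilibrium, I invoke Lemma \ref{lem:selectivebetter} together with the fact that the interim belief induced by $\tilde{\strat}$ on $\experiment'$ is (weakly) more pessimistic than under the naive mimic, so rejecting at $s_{j-1}$ remains incentive compatible and $\hat{\strat}$ on $\experiment'$ is at least as selective as $\tilde{\strat}$. For the least selective equilibrium, the argument runs in parallel but uses continuity of the least-selective equilibrium correspondence (from the upper semi-continuity established in Proposition \ref{prop:eqmexist}). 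The delicate point, mirroring the boundary analysis in Proposition \ref{prop:thresholdbinary}, is to verify that at any parameter at which the equilibrium$^{\star}$ strategy jumps discretely, the signed change in equilibrium$^{\star}$ surplus still respects the override classification; here the role of Definition \ref{defn:irrelevance} at $s_{j+1}$ is pivotal.
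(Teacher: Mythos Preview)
Your plan captures the right intuition from Section \ref{section:binary}, but the execution has a structural gap that the paper's proof avoids by design.

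\textbf{The monotonicity problem.} Your override strategy $\tilde{\strat}$ need not be monotone. For a negative override with $\strat^{\star}_{\experiment}(s_{j-2})>0$ you get $\tilde{\strat}(s_{j-2})>0=\tilde{\strat}(s_{j-1})$; for a positive override with $\strat^{\star}_{\experiment}(s_{j+2})<1$ you get $\tilde{\strat}(s_{j+1})=1>\tilde{\strat}(s_{j+2})$. In either case $\tilde{\strat}$ cannot be an equilibrium, and neither Lemma \ref{lem:selectivebetter} nor Lemma \ref{lem:overapprove} applies to it. This breaks both your intermediate surplus comparison and your lifting step.

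\textbf{The surplus comparison is not an MLR fact.} In the negative-override leg you claim $\sumpayoff(\tilde{\strat};\experiment')\geq\sumpayoff(\strat^{\star}_{\experiment};\experiment)$ because ``rejections at $s_{j-1}$ fall more often on the Low-quality seller.'' That is a statement about the composition of \emph{rejections}, not about the sign of the surplus of the \emph{lost trades}. What you need is that the sellers who trade under the naive mimic but not under $\tilde{\strat}$ have conditional High-probability below $c$; this requires either that $\tilde{\strat}$ be an equilibrium (so Lemma \ref{lem:selectivebetter} applies) or a direct computation---neither of which you supply. Symmetrically, in the positive-override leg your ``martingale identity on $s_{j-1},s_j,s_{j+1}$'' does not pin down the sign of the \emph{aggregate} marginal surplus from accepting at $s_{j+1}$ with probability one: the marginal sellers there may have any number of $s_{j+1}$ draws, not only one.

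\textbf{How the paper closes the gap.} The paper never routes through $\tilde{\strat}$. It works only with the \emph{naive mimic} $\strat'$ (which sets $\strat'(s_{j\pm1})=\strat^{\star}_{\experiment}(s_j)$, hence trivially monotone and outcome-equivalent to $\strat^{\star}_{\experiment}$) and the \emph{actual} new equilibrium $\strat^{\star'}$. Via replication arguments and Lemma \ref{lem:eqm_algorithm} it shows directly that $\strat^{\star'}$ is more selective than $\strat'$ for a negative override and less selective for a positive override. The surplus comparison then follows from Lemma \ref{lem:selectivebetter} (whose hypothesis---that the \emph{more selective} strategy is an equilibrium---is met by $\strat^{\star'}$ itself) in the negative case, and from a $\delta\downarrow0$ limiting comparison at the strategy $\strat^{\delta}_{\experiment'}$ (accepting at $s_{j+1}$ with vanishing probability) together with Lemma \ref{lem:overapprove} in the positive case. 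The $\delta\downarrow0$ device is precisely what makes your ``worst case $k=n-1$'' intuition rigorous: in the limit almost surely only one buyer observes $s_{j+1}$, so the marginal surplus sign coincides with the $\strat^{\star}_{\experiment}$-irrelevance condition at $s_{j+1}$.
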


Theorem \ref{thm:intensive} shows that the effect of a local spread on total surplus depends on the \textit{kind} of the spread. Negative overrides always increase total surplus. Positive overrides decrease it---unless adverse selection is irrelevant for a buyer who receives the override. 

To prove Theorem \ref{thm:intensive}, I show that a negative override indeed pushes buyers to reject the seller more often in the new equilibrium$^{\star}$. This always increases total surplus. A positive override pushes buyers to trade more often with him. This decreases total surplus unless adverse selection is $\strat^{\star}_{\experiment}$-irrelevant for signal $s_{j+1}$---the new signal upon which the buyer may approve the seller.

This exercise is complicated by the fact that identifying how buyers' interim beliefs change when their experiment changes is infeasible beyond the simplest cases. 
Studying ``local'', not ordinary, mean preserving spreads is crucial for tractability; this allows me to identify how equilibria respond to improvements in information without needing to pinpoint changes in interim beliefs.

Unlike Theorem \ref{thm:intensive_binary}, an analyst who wishes to use Theorem \ref{thm:intensive} to identify the effect of an improvement on total surplus needs knowledge of buyers' equilibrium$^{\star}$ strategies. In practice, the analyst might want to remain agnostic about equilibrium$^{\star}$ strategies. To alleviate this concern, Proposition \ref{prop:sufficient_harm} offers a sufficient condition for a positive override to decrease total surplus in the most selective equilibrium.

\begin{restatable}{prop}{propsufficientharm}
    \label{prop:sufficient_harm}
    Let the experiment $\signalstr'$ differ from $\signalstr$ by a local spread at $s_j$. Total surplus in the most selective equilibrium is lower under $\experiment'$ if the following conditions hold:
\begin{align*}
    \frac{\prior}{1 - \prior} \times \left( \frac{s_j}{1 - s_j} \right) &\leq \frac{c}{1-c}
    &
    &\textrm{and}
    &
    \frac{\prior}{1 - \prior}
            \times
            \left(
                \frac{s_j}{1 - s_j}
            \right)^{n-1}
            \times
            \frac{s_{j+1}}{1 - s_{j+1}}
            &\leq
            \frac{c}{1-c}
\end{align*}

\end{restatable}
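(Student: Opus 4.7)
The plan is to invoke Theorem~\ref{thm:intensive}(2) after verifying its two hypotheses under the most selective equilibrium: (a) the local spread at $s_j$ is a positive override, and (b) adverse selection is not $\hat{\strat}_{\experiment}$-irrelevant for the signal $s_{j+1}$. Conditions~(1) and~(2) of the proposition target these two requirements in turn.

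For~(a), Proposition~\ref{prop:eqmexist} gives $\interim^{\ast}\leq\prior$ for any interim belief consistent with the most selective equilibrium. Together with condition~(1),
\[
    \frac{\interim^{\ast}}{1-\interim^{\ast}}\cdot\frac{s_j}{1-s_j} \,\leq\, \frac{\prior}{1-\prior}\cdot\frac{s_j}{1-s_j} \,\leq\, \frac{c}{1-c},
\]
so the posterior induced by $s_j$ is at most $c$ and $\hat{\strat}_{\experiment}(s_j)=0$. The spread at $s_j$ is therefore a positive override by Definition~\ref{defn:overrides}.

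For~(b), I need a bound on $\rho:=r_H(\hat{\strat}_{\experiment};\experiment)/r_L(\hat{\strat}_{\experiment};\experiment)$. Monotonicity of $\hat{\strat}_{\experiment}$ together with step~(a) implies that $\hat{\strat}_{\experiment}$ rejects every signal indexed $\leq j$. I split on the value of $\hat{\strat}_{\experiment}(s_{j+1})$. If $\hat{\strat}_{\experiment}(s_{j+1})=1$, the rejected signals are precisely those indexed $\leq j$; since the signals are enumerated in increasing order of $s_i/(1-s_i)$, each $p_H(s_i)/p_L(s_i)\leq s_j/(1-s_j)$ for $i\leq j$, and taking the ratio of sums inherits this bound to give $\rho\leq s_j/(1-s_j)$. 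Substituting into condition~(2) yields $\frac{\prior}{1-\prior}\,\rho^{n-1}\,\frac{s_{j+1}}{1-s_{j+1}}\leq\frac{c}{1-c}$. If instead $\hat{\strat}_{\experiment}(s_{j+1})<1$, optimality at $s_{j+1}$ requires $\frac{\prior}{1-\prior}\cdot\frac{\nu_H}{\nu_L}\cdot\frac{s_{j+1}}{1-s_{j+1}}\leq\frac{c}{1-c}$. Writing $\nu_H/\nu_L = \left(\sum_{k=0}^{n-1}\rho^k r_L^k\right)/\left(\sum_{k=0}^{n-1}r_L^k\right)$ exhibits this ratio as an $r_L^k$-weighted average of $\{\rho^k\}_{k=0}^{n-1}$; since $\rho\leq 1$ (adverse selection) makes this sequence non-increasing in $k$, the average is bounded below by $\rho^{n-1}$, and the same conclusion $\frac{\prior}{1-\prior}\,\rho^{n-1}\,\frac{s_{j+1}}{1-s_{j+1}}\leq\frac{c}{1-c}$ follows.

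In either sub-case, adverse selection fails to be strictly $\hat{\strat}_{\experiment}$-irrelevant for $s_{j+1}$, so Theorem~\ref{thm:intensive}(2) applies and yields the weak decrease in equilibrium total surplus under $\experiment'$. The main obstacle is producing a usable bound on $\rho$ without knowing the exact cutoff of $\hat{\strat}_{\experiment}$; the case split above is the cleanest way I see to cover both the pure-cutoff configuration (where MLRP and condition~(2) do the work) and the configuration in which buyers mix at or reject through $s_{j+1}$ (where the equilibrium condition at $s_{j+1}$ delivers the bound via $\nu_H/\nu_L\geq\rho^{n-1}$). A secondary care point is that when the displayed inequalities are tight, the conclusion degenerates to an equality of surpluses, which remains consistent with the ``weakly less'' claim.
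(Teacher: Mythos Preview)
Your case split is on the wrong signal. By Definition~\ref{defn:local_mps}, the indices $s_{j-1}$ and $s_{j+1}$ are the two \emph{new} outcomes created by the spread and satisfy $p_{\quality}(s_{j-1})=p_{\quality}(s_{j+1})=0$; they lie in $\signalset'\setminus\signalset$. Consequently $\hat{\strat}_{\experiment}(s_{j+1})$ is not determined by any equilibrium condition under $\experiment$, and there is no optimality constraint at $s_{j+1}$ for you to invoke in your second sub-case. The next signal in the support of $\experiment$ above $s_j$ is $s_{j+2}$, and that is where the split must occur.

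With that correction your argument goes through and sits close to the paper's. When $\hat{\strat}_{\experiment}(s_{j+2})=1$, the rejection set under $\experiment$ is exactly the signals of index $\le j$ in $\signalset$, so MLRP gives $r_H/r_L\le s_j/(1-s_j)$ and condition~(2) yields the failure of $\hat{\strat}_{\experiment}$-irrelevance at $s_{j+1}$; this is precisely the paper's second case. When $\hat{\strat}_{\experiment}(s_{j+2})<1$, the paper does \emph{not} verify the irrelevance condition; it instead shows directly that the replicating strategy $\strat'$ remains an equilibrium under $\experiment'$ and (reaching back into the proof of Theorem~\ref{thm:intensive}) that $\hat{\strat}'=\strat'$, so surplus is unchanged. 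Your alternative route---use optimality at $s_{j+2}$ to obtain $\tfrac{\interim^{\ast}}{1-\interim^{\ast}}\tfrac{s_{j+2}}{1-s_{j+2}}\le\tfrac{c}{1-c}$, combine with $\nu_H/\nu_L\ge\rho^{\,n-1}$ and $s_{j+1}<s_{j+2}$ to deduce the irrelevance failure at $s_{j+1}$, then apply Theorem~\ref{thm:intensive}(2)---is also valid and has the merit of not digging into that earlier proof. But as written, the appeal to ``optimality at $s_{j+1}$'' is vacuous and the argument does not stand.
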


The sufficient conditions in Proposition \ref{prop:sufficient_harm} strengthen the necessary and sufficient conditions supplied by Theorem \ref{thm:intensive}. The condition on the left ensures that the local spread is a negative override: a buyer rejects the seller upon $s_j$ in any equilibrium since interim beliefs always lie below the prior belief, $\prior$. The condition on the right strengthens the irrelevance condition for adverse selection: it requires that a rejection be optimal even if the $n-1$ rejections the seller received were due to the best signals below $s_{j+1}$, $s_j$.

\section{Maximising Surplus by Coarsening Information}
\label{section:regulator}

Section \ref{section:betterinfo} showed that efficiency might be lower in a market where buyers are better informed: more information in the market accentuates the adverse selection problem each buyer faces, counteracting each buyer's improved ability to screen the asset's quality. This invites a question: if a regulator could coarsen buyers' information---perhaps through banning the use and dissemination of certain data---could this raise efficiency? How should a regulator who can use this tool to maximise total surplus in the market go about this exercise?

In this section, I consider the problem of a regulator who wishes to garble buyers' experiment $\experiment$ in order to maximise total surplus in (the most selective) equilibrium. As in the previous section, I take the number of buyers $n$ to be a primitive. The regulator can choose any finite garbling $\experiment^G = \left( \signalset^G, p_L^G, p_H^G \right)$ of buyers' experiment $\experiment$; i.e. any finite set of outcomes $\signalset^G = \left\{ \typesig_1^G, \typesig_2^G, ..., \typesig_R^G \right\}$ and conditional distributions $p_{\quality}^G(.)$ over it such that for some Markov matrix $\mathbf{T}_{m \times R}$:
\begin{equation*}
   \underbrace{
        \begin{bmatrix}
            p_L(\typesig_1) & \cdots & p_L(\typesig_m) \\
            p_H(\typesig_1) & \cdots & p_H(\typesig_m)
        \end{bmatrix}
   }_{= \mathbf{P}}
    \times
    \mathbf{T}
    =
    \underbrace{
        \begin{bmatrix}
            p_L^G(\typesig_1^G) & \cdots & p_L^G(\typesig_R^G) \\
            p_H^G(\typesig_1^G) & \cdots & p_H^G(\typesig_R^G)
        \end{bmatrix}
    }_{= \mathbf{P^G}}
\end{equation*}
We can interpret this as a coarsening of the original data available to each buyer. This original data is generated with the process that matrix $\mathbf{P}$ describes. The regulator garbles this data into a set of summary statistics through the process described by the Markov matrix $\mathbf{T}$. The buyer only observes this set of summary statistics, whose data generating process conditional on the asset's quality can now described by the matrix $\mathbf{P^G}$.

Once the regulator chooses the garbled experiment $\experiment^G$, the game proceeds as before; only, buyers' experiment $\experiment$ is replaced by $\experiment^G$. An equilibrium, as before, is a pair $\left( \strat^G, \interim^G \right)$ such that the strategy $\strat^G: \signalset^G \to [0,1]$ is optimal given the interim belief $\interim^G$, and the interim belief $\interim^G$ is consistent with the strategy $\strat^G$. I call this the game induced by the garbling $\experiment^G$. 
I call a garbling $\experiment^G$ regulator-optimal if total surplus in the most selective equilibrium of the game it induces weakly exceeds equilibrium total surplus in the game induced by any other garbling of $\experiment$.

A particularly simple class of garblings are the \textit{monotone binary garblings}, which (i) have two possible outcomes, $\lvert \signalset^G \rvert = 2$, and (ii) for a cutoff signal $\typesig^* \in \signalset$, the entries $\{ t_{ij} \}$ of matrix $\mathbf{T}_{m \times 2}$ for which $\mathbf{P} \times \mathbf{T} = \mathbf{P^G}$ are given by:
\begin{align*}
    t_{i1} &= 
    \begin{cases}
        1 & i < i^* \\
        \vspace{-25pt} \\
        \in [0,1] & i = i^*  \\
        \vspace{-25pt} \\
        0  & i > i^* 
    \end{cases}
    &
    t_{i2} &= 1 - t_{i1}
\end{align*}
I refer to the signal $s_{i^*} \in \signalset$ as the \textit{threshold signal} of the garbling $\experiment^G$. 

A monotone binary garbling gives the buyer an ``acceptance recommendation'', $s_H^G$, when her original signal realises above a threshold signal $\typesig_{i^*} \in \signalset$, and a ``rejection recommendation'', $s_L^G$, whenever it lies below it. Following these recommendations---accepting trade upon the signal $s_H^G$ and rejecting when it upon the signal $s_L^G$---need not be an equilibrium strategy in the game induced by the coarsened experiment $\experiment^G$. When it does, I say that the garbling $\experiment^G$ \textit{produces incentive compatible recommendations}.

\begin{defn}
    A monotone binary garbling $\experiment^G$ \textit{produces incentive compatible (IC) recommendations} if the strategy $\strat^G$, defined below, is an equilibrium strategy in the game induced by $\experiment^G$:
    \begin{equation*}
        \strat^G (s^G) := 
        \begin{cases}
            0 & s^G = s_L^G \\
            1 & s^G = s_H^G
        \end{cases}
    \end{equation*}
\end{defn}

\begin{restatable}{lem}{lemdesignmonotonebinary}
        \label{lem:design_monotone_binary_garblings}
        Where it exists, the regulator-optimal garbling is {monotone binary} and {produces IC recommendations}.
    \end{restatable}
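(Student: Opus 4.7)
I plan to show that any regulator-optimal garbling can be replaced by a monotone binary one that produces IC recommendations and achieves at least the same surplus. The argument proceeds in two reductions: first to a binary IC garbling, then to a monotone threshold one.

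Let $\experiment^*$ be a regulator-optimal garbling with Markov matrix $T$, and let $(\strat^*, \interim^*)$ be the most selective equilibrium of the game it induces. Define $q(s) := \sum_{s' \in \signalset^G} T(s,s') \strat^*(s')$ for each $s \in \signalset$, and let $\tilde{\experiment}$ be the binary garbling of $\experiment$ that outputs ``accept'' upon $s$ with probability $q(s)$ and ``reject'' otherwise. Transitivity of garbling makes $\tilde{\experiment}$ well-defined as a garbling of $\experiment$. A single buyer rejects conditional on quality $\quality$ with the same probability in both games, since $\sum_s p_{\quality}(s) q(s) = \sum_{s'} p_{\quality}^*(s') \strat^*(s') =: A_{\quality}$. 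Hence the interim belief in the game induced by $\tilde{\experiment}$ under the strategy $\strat^G$ (``accept on $s_H^G$, reject on $s_L^G$'') is the same $\interim^*$. Aggregating the original IC inequalities $(1-c)\interim^* p_H^*(s') \geq c(1-\interim^*) p_L^*(s')$ with weights $\strat^*(s')$ (and symmetrically with weights $1 - \strat^*(s')$ for the reverse direction) yields the IC conditions for $\strat^G$ in $\tilde{\experiment}$. So $\strat^G$ is an equilibrium of $\tilde{\experiment}$, and by Proposition \ref{prop:selectivebetter} the most selective equilibrium of $\tilde{\experiment}$ yields surplus at least that of $\experiment^*$. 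Thus $\tilde{\experiment}$ is a regulator-optimal binary garbling producing IC recommendations.

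To monotonize, observe that equilibrium total surplus, as a function of $(A_L, A_H)$, equals $(1-c)\prior[1-(1-A_H)^n] - c(1-\prior)[1-(1-A_L)^n]$, which is strictly increasing in $A_H$ and decreasing in $A_L$. Rearrange $q$ to a threshold rule $\tilde{q}$ on likelihood ratios: $\tilde{q}(s_i) = 1$ for $i > i^*$, $\tilde{q}(s_i) = 0$ for $i < i^*$, and $\tilde{q}(s_{i^*}) \in [0,1]$ chosen so that $A_L$ is preserved. Among all $q$ that fix $A_L$, this choice maximizes $A_H = \sum_s p_H(s) q(s)$; let $A_H' \geq A_H$ denote the new acceptance probability under $H$. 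The induced monotone binary garbling $\experiment^{**}$ therefore yields weakly higher surplus than $\tilde{\experiment}$.

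It remains to check that $\strat^G$ is still an equilibrium of $\experiment^{**}$. Writing $r_{\quality}' := 1 - A_{\quality}'$ (so $r_L' = r_L$ and $r_H' \leq r_H$) and using the identity $\nu_{\quality} A_{\quality} = (1 - r_{\quality}^n)/n$, the odds of the posterior upon ``accept'' in $\experiment^{**}$ become
\[
\frac{P'}{1-P'} = \frac{\prior \, (1 - (r_H')^n)}{(1-\prior)(1 - r_L^n)} \; \geq \; \frac{\prior \, (1 - r_H^n)}{(1-\prior)(1 - r_L^n)} = \frac{P}{1-P} \; \geq \; \frac{c}{1-c},
\]
since $r_H' \leq r_H$. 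Symmetrically, using $\nu_{\quality} r_{\quality} = (1/n) \sum_{k=1}^n r_{\quality}^k$, the odds of the posterior upon ``reject'' are weakly smaller in $\experiment^{**}$ than in $\tilde{\experiment}$ (because $\sum_{k=1}^n (r_H')^k \leq \sum_{k=1}^n r_H^k$), hence remain $\leq c/(1-c)$. So $\experiment^{**}$ is monotone binary, produces IC recommendations, and is regulator-optimal. The main subtlety is precisely this last step: the monotonizing rearrangement lowers $r_H$ without changing $r_L$, which lowers the interim belief $\interim$ (worsening adverse selection), so it is not a priori obvious that the posterior upon ``accept'' stays above $c$; the telescoping identity for $\nu_{\quality} A_{\quality}$ is what makes the competing effects align in the right direction.
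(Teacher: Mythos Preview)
Your proof is correct and follows the same core strategy as the paper: reduce an arbitrary garbling together with its equilibrium to a monotone binary garbling that preserves $r_L$ and, by a Neyman--Pearson/FOSD argument, can only lower $r_H$; then verify that the obedient strategy remains an equilibrium. The paper collapses the reduction into a single step (directly constructing the least selective monotone binary garbling with $r_L^{G*}=r_L^G$ and invoking FOSD), while you split it into ``first binary, then monotone,'' which is a harmless reorganisation. The one substantive difference is the IC verification after monotonising. The paper handles rejection IC via $\interim^*\le\interim$ together with $r_H^*/r_L^*\le r_H/r_L$, and then deals with acceptance IC indirectly by noting that total surplus is non-negative, hence the expected payoff from accepting must be. Your route is more direct: the identities $\nu_\theta A_\theta=(1-r_\theta^n)/n$ and $\nu_\theta r_\theta=\tfrac{1}{n}\sum_{k=1}^n r_\theta^k$ let you write the posterior odds on each recommendation explicitly and see that lowering $r_H$ while fixing $r_L$ moves both in the required direction. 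This treats the two IC constraints symmetrically and avoids the detour through total surplus, which is arguably cleaner; the paper's route, on the other hand, makes the role of the interim belief more visible.
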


The reason that the regulator can restrict herself to binary garblings is closely connected to a fundamental principle in information design. Ultimately, a buyer distils the information relayed by the garbled experiment into which action it recommends: an acceptance, or a rejection. The regulator can distil information herself---revealing only a recommended action to a buyer.\footnote{Note that coarsening buyers' experiment $\experiment$ may also create new equilibrium outcomes, some of which yield lower payoff than the previous least selective equilibrium. Our focus on the most selective equilibrium---besides being the appropriate focus for this exercise---frees us from the need to worry about this complication and utilise this fundamental principle.} The regulator wishes trade to be as likely as possible with a High quality seller but not with a Low quality one, so she prefers monotone recommendations---those which recommend an acceptance above a threshold signal. Lemma \ref{lem:design_monotone_binary_garblings} establishes that monotone recommendations are not at the expense of incentive compatibility: every garbling that supplies IC recommendations is outperformed by a monotone binary garbling that supplies IC recommendations.

We can adopt the selectivity order for strategies to a selectivity order for monotone binary garblings, too.

\begin{defn}
    A monotone garbling $\experiment^G$ of $\experiment$ is \textit{more selective} than another, $\experiment^{G'}$, if $p_{\quality}^{G} \left( s_H^{G} \right) \leq p_{\quality}^{G'} \left( s_H^{G'} \right)$ for all $\quality \in \{L, H\}$.
\end{defn}
Like monotone strategies, and for the same reason, selectivity is a complete order over the set of monotone binary garblings. Similarly, we can adapt the ``irrelevance of adverse selection'' notion to monotone binary garblings as well.

\begin{defn}
    \label{defn:irrelevant_garbling}
    Let $\experiment^G$ be a monotone binary garbling of $\experiment$, with the threshold signal $\typesig^* \in \signalset$. 
    \textit{Adverse selection is irrelevant} under $\experiment^G$ either if:
    \begin{equation*}
        \frac{\prior}{1 - \prior} \times \frac{p_H (s^*)}{p_L(s^*)} \times \left( 
            \frac{p_{H}^G (s_L^G)}{p_L^G (s_L^G)}
        \right)^{n-1} \geq \frac{c}{1-c}
    \end{equation*}
    or either of the following two conditions hold:
    \begin{enumerate}
        \item $\experiment^G$ recommends no acceptances; i.e. $p_{\quality} (s_H^G) = 0$.
        \item $\experiment^G$ recommends no rejections; i.e. $p_{\quality} (s_L^G) = 0$ and $\frac{\prior}{1 - \prior} \times \left( \frac{s_1}{1 - s_1} \right)^n \geq \frac{c}{1-c}$.
    \end{enumerate}
\end{defn}

Adverse selection is irrelevant under a garbling if a buyer who receives an ``acceptance'' recommendation need not be concerned about the number of buyers who received ``rejection'' recommendations---even if she believes that her acceptance recommendation is based on the lowest signal that might have triggered it.  This condition is also satisfied if the garbling never recommends an ``acceptance'', or if, even if all $n$ buyers were to observe the lowest signal under experiment $\experiment$, the expected gains from trade would be positive. When this condition is violated, I say that adverse selection is \textit{not} irrelevant under $\experiment^G$.

\begin{restatable}{prop}
    {propgarbling}
    \label{prop:optimalgarbling}
    If the least selective monotone binary garbling under which adverse selection is irrelevant produces IC recommendations, it is the regulator-optimal garbling. Otherwise, the regulator-optimal garbling is either: 
    \vspace{-2pt}
    \begin{itemize}[itemsep=-2pt]
        \item the least selective garbling under which adverse selection is irrelevant, or
        \item the most selective garbling under which adverse selection is not irrelevant
    \end{itemize}
    among monotone binary garblings which produce IC recommendations.
\end{restatable}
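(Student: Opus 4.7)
The plan is to reduce the problem, via \Cref{lem:design_monotone_binary_garblings}, to a one-dimensional search along the chain of monotone binary garblings producing IC recommendations, and then to show that total surplus along this chain is single-peaked, with peak at the ``crossover'' between the AS-irrelevant and AS-not-irrelevant regions. The key identification is that each IC monotone binary garbling $\experiment^G$ corresponds to a threshold strategy $\strat^G$ on the original signals (accept above the threshold $s_{i^*}$, mix at $s_{i^*}$, reject below), and surplus in the IC equilibrium of the induced game equals $\sumpayoff(\strat^G;\experiment)$.

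The heart of the argument is a marginal comparative statics claim: decreasing selectivity along the chain --- raising $\strat^G(s_{i^*})$ by an infinitesimal amount --- increases surplus if and only if adverse selection is irrelevant at the current garbling. One derivation is a direct differentiation of $\sumpayoff(\strat;\experiment)=\prior(1-c)[1 - r_H^n] - (1-\prior)c[1 - r_L^n]$ with respect to $\strat(s_{i^*})$, using $\partial r_{\quality}/\partial \strat(s_{i^*}) = -p_{\quality}(s_{i^*})$; after rearranging, the sign of the derivative is governed by
\begin{equation*}
    \frac{\prior}{1-\prior} \cdot \left(\frac{r_H(\strat^G;\experiment)}{r_L(\strat^G;\experiment)}\right)^{n-1} \cdot \frac{s_{i^*}}{1-s_{i^*}} - \frac{c}{1-c},
\end{equation*}
which, since $r_{\quality}(\strat^G;\experiment) = p^G_{\quality}(s_L^G)$ at the IC equilibrium, is exactly the AS-irrelevance condition of \Cref{defn:irrelevant_garbling}. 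A shorter derivation is to view the marginal change as a positive override of $\experiment^G$ at its threshold signal and apply \Cref{thm:intensive}.

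I then observe that the LHS of the irrelevance condition is monotone increasing in the garbling's selectivity: both $s_{i^*}/(1-s_{i^*})$ and the ratio $p_H^G(s_L^G)/p_L^G(s_L^G)$ rise as the threshold signal rises, the latter because each newly pooled signal has a likelihood ratio above the current average inside $s_L^G$. Hence the chain of \emph{all} monotone binary garblings splits into a low-selectivity half where AS is not irrelevant and a high-selectivity half where it is, and total surplus along the chain is single-peaked, attaining its maximum at the least selective garbling under which AS is irrelevant. If this ``ideal'' garbling is IC-admissible, it is the regulator-optimum. Otherwise, single-peakedness forces the constrained optimum to an admissible garbling closest in selectivity to the ideal --- on one side, the most selective admissible garbling under which AS is not irrelevant; on the other, the least selective admissible garbling under which AS is irrelevant --- giving the two candidates in the proposition.

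The main difficulty will be handling the piecewise structure of the chain: the threshold signal $s_{i^*}$ jumps whenever the acceptance probability at the current threshold reaches $0$ or $1$, and I must verify that the sign-flip of the derivative across these kinks is a single event located exactly at the ideal garbling, not at some earlier or later jump. A related bookkeeping step is that the IC-admissible subset may be disconnected; however, single-peakedness of the underlying surplus function is strong enough to deliver the two-candidate conclusion regardless of the topology of that subset.
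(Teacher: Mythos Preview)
Your proposal is correct and follows essentially the same route as the paper's proof. Both reduce to a one-dimensional chain via \Cref{lem:design_monotone_binary_garblings}, compute that the marginal change in surplus along this chain has the sign of the adverse-selection-irrelevance expression, exploit monotonicity of that expression in selectivity to obtain a single crossover, and then restrict to the IC-admissible subset. The paper packages the global step through \Cref{lem:overapprove} rather than stating single-peakedness explicitly, and it spends a separate Step~1 on well-definedness of the two candidate garblings (using upper semicontinuity of the irrelevance expression in the parametrization $d(\experiment^G)$), which is precisely the kink-handling you flag as your ``main difficulty.'' Your explicit argument for why $p_H^G(s_L^G)/p_L^G(s_L^G)$ rises with selectivity is something the paper simply asserts.

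One small caution: the ``shorter derivation'' via \Cref{thm:intensive} is not a literal application of that theorem, which concerns local mean-preserving spreads of the buyers' experiment rather than threshold shifts in a garbling; lowering the threshold simultaneously strengthens bad news and weakens good news in $\experiment^G$, so the two garblings are not even Blackwell-comparable. What you can legitimately borrow is the \emph{marginal-reject} computation from the proofs of \Cref{lem:selectivebetter} and \Cref{thm:intensive}---and indeed the paper's own proof of this proposition invokes exactly that when it writes ``From our earlier reasoning about the impact of making evaluators' strategies marginally more (less) selective.'' Your direct-differentiation route already delivers this cleanly, so the appeal to \Cref{thm:intensive} is dispensable.
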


\begin{cor}
    \label{cor:optimalgarbling}
    When the seller's reservation value $c$ weakly exceeds the prior belief $\prior$, the regulator-optimal garbling is the least selective monotone binary garbling under which adverse selection is irrelevant.
\end{cor}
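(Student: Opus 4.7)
The plan is to reduce the problem to a one-dimensional optimisation over the chain of IC-producing monotone binary garblings, and to show that the marginal change in surplus along this chain (as selectivity decreases) has the same sign as the irrelevance condition of Definition~\ref{defn:irrelevant_garbling}.

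First, I would invoke Lemma~\ref{lem:design_monotone_binary_garblings} to restrict attention to monotone binary garblings producing IC recommendations. By an argument analogous to Proposition~\ref{prop:selectivityorder}, this set is a chain under the selectivity order. Each such garbling $\experiment^G$ is parameterised by a threshold signal $s^{*} = s_{i^{*}} \in \signalset$ and a mixing weight $\alpha \in [0,1]$. Under the IC recommendation-following strategy $\strat^G$ — which one can verify is the most selective equilibrium of the induced game, since any further reduction in acceptance probability at $s_H^G$ would raise the interim belief (less adverse selection) and push the posterior at $s_H^G$ strictly above $c$ — total surplus takes the closed form
\begin{equation*}
    \sumpayoff(\strat^G; \experiment^G) = (1-c)\prior\bigl[1 - r_H^n\bigr] - c(1-\prior)\bigl[1 - r_L^n\bigr],
\end{equation*}
where $r_\quality$ is the conditional probability that quality $\quality$ generates the rejection recommendation $s_L^G$.

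Second, I would compute the marginal change in $\sumpayoff$ along the chain as selectivity decreases — either by raising $\alpha$ at the current threshold or, at its boundaries, by lowering the threshold to $s_{i^{*}-1}$. A direct calculation shows this change is proportional (with a positive constant) to
\begin{equation*}
    \frac{\prior}{1-\prior}\left(\frac{r_H}{r_L}\right)^{n-1}\frac{p_H(s^{*})}{p_L(s^{*})} - \frac{c}{1-c},
\end{equation*}
which is non-negative precisely when adverse selection is irrelevant under the current garbling, in the sense of Definition~\ref{defn:irrelevant_garbling}. The key monotonicity step is to show that this expression weakly decreases along the chain: (i) within a block of constant threshold, raising $\alpha$ weakly lowers $r_H/r_L$, because $p_H(s^{*})/p_L(s^{*})$ weakly exceeds $r_H/r_L$ (the latter being a probability-weighted average of likelihood ratios of signals in the rejection bin, each weakly below $p_H(s^{*})/p_L(s^{*})$); and (ii) at each threshold shift from $s_{i^{*}}$ to $s_{i^{*}-1}$, the factor $p_H(s^{*})/p_L(s^{*})$ drops weakly while $r_H/r_L$ is unchanged. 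Hence the sub-chain of garblings under which adverse selection is irrelevant is a prefix of the chain at its most-selective end.

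Combining these two facts, $\sumpayoff$ is continuous and quasiconcave along the chain — weakly increasing on the irrelevance prefix and weakly decreasing thereafter. Absent IC constraints, the least selective garbling under which adverse selection is irrelevant therefore attains the maximum, yielding the first case of the proposition. When this garbling fails IC, restricting to the IC sub-chain and re-invoking quasiconcavity forces the IC-restricted maximum to lie at one of two adjacent candidates within the IC sub-chain: the right edge of the irrelevance region or the left edge of the non-irrelevance region. For Corollary~\ref{cor:optimalgarbling}, when $c \geq \prior$, IC at $s_L^G$ is automatic (the posterior at $s_L^G$ lies below the interim belief, which by Proposition~\ref{prop:eqmexist} lies below the prior, and hence below $c$), while IC at $s_H^G$ is implied by irrelevance; so the least selective irrelevance garbling always produces IC recommendations, placing us in the first case. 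The main obstacle throughout is the careful treatment of the two edge cases of Definition~\ref{defn:irrelevant_garbling} — garblings that never recommend acceptance or never recommend rejection — where the marginal argument degenerates and the conclusion must be verified by direct comparison against the relevant benchmark; a secondary technicality is that the marginal surplus is only piecewise smooth, with kinks at each threshold shift, so the quasiconcavity conclusion has to be assembled block-by-block and glued through the continuity of $\sumpayoff$.
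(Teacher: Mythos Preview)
Your proposal is correct and follows essentially the same route as the paper: reduce to the monotone-binary chain via Lemma~\ref{lem:design_monotone_binary_garblings}, compute the marginal surplus and identify its sign with the irrelevance expression of Definition~\ref{defn:irrelevant_garbling}, observe that this expression is monotone along the chain (the paper asserts this as ``$F\circ d^{-1}$ is decreasing''; you supply the two-part justification), and for the Corollary verify that $c\geq\prior$ forces both IC constraints to hold at the least selective irrelevance garbling, landing in the first case of Proposition~\ref{prop:optimalgarbling}. Your explicit decomposition of the monotonicity into the within-block and between-block pieces, and your derivation of IC at $s_H^G$ from the irrelevance inequality via $\frac{\interim^G}{1-\interim^G}\geq\frac{\prior}{1-\prior}\bigl(\frac{r_H}{r_L}\bigr)^{n-1}$ and $\frac{p_H^G(s_H^G)}{p_L^G(s_H^G)}\geq\frac{p_H(s^*)}{p_L(s^*)}$, fill in details the paper leaves implicit.
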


Proposition \ref{prop:optimalgarbling} shows that the regulator wishes to recommend a rejection following every signal a buyer could observe, unless adverse selection is irrelevant at that signal. Although the regulator wishes to maximise a buyer's \textit{expected} contribution to trade surplus, she focuses on the ``worst case'' where a buyer is the last to receive the seller.

The intuition is intimately connected to the insight the previous section delivers: unless ``adverse selection is irrelevant'', information which pushes buyers to accept trade more often can harm efficiency. The regulator wishes to censor such information by coarsening buyers' experiment: if she were not bound by buyers' incentives to follow her recommendations, her optimal garbling would bundle every outcome of the original experiment $\experiment$ into a ``rejection recommendation'' unless adverse selection is irrelevant at that outcome. 
Corollary \ref{cor:optimalgarbling} establishes that when the seller's reservation value $c$ weakly exceeds buyers' prior belief $\prior$, such recommendations are IC, and hence are adopted by the regulator.

\section{Ultimatum Price Offers by Buyers}
\label{section:ultimatumprices}

In this section, I relax the assumption that the seller trades with the first buyer willing to pay his reservation value, $c$. Instead, each buyer he visits offers the seller a take-it-or-leave-it (ultimatum) price. The seller then decides whether to trade at this price or move on to the next buyer in search of a better offer. I show that in this extended game, there exists an equilibrium where a buyer never offers a price above the seller's value $c$. Furthermore and crucially, any level of total surplus that can be achieved in some equilibrium can be achieved in an equilibrium where a buyer never offers a price above the seller's value $c$. Thus, restricting to such equilibria is without loss when studying the possible levels of equilibrium total surplus.

As before, the seller visits the $n$ prospective buyers sequentially and in a uniformly random order, and each buyer holds a private signal about the asset's quality. The seller holds a private signal about the asset's quality---this signal is the outcome of a Blackwell experiment $\dot{\experiment}$. 
Conditional on the asset's quality, the seller's and buyers' signals are mutually independent. The seller enjoys a surplus of $o - c$ if he trades with some buyer at price $o$, and a surplus of $0$ if he does not trade. A buyer enjoys a surplus of $\indic \left\{ \quality = H \right\} - o$ if she trades with the seller at price $o$, and a surplus of $0$ if she does not trade. So, trade generates a total surplus of $\indic \left\{ \quality = H \right\} - c$.

Each buyer he visits offers the seller a take-it-or-leave-it price for the asset. A buyer's strategy $\omega : \signalset \to \Delta \left( \{ 0 \} \cup [c,1] \right)$ maps every signal she might observe to a distribution from which her offer is drawn.\footnote{I exclude offers in the interval $(0,c)$ without loss of generality: such offers will always be rejected by the seller since they are below his reservation value; the buyer might as well offer $0$ instead.}

\looseness=-1
Once the seller receives an offer of $o_k \in [0,1]$ from the $k$\textsuperscript{th} buyer he visits, he updates the probability $q_{k-1}$ he places on the asset having High quality to a probability $q_k$. He does so with a belief updating rule described by $q_k = \zeta_{\omega} \left( q_{k-1}, o_k \right)$, where $\zeta_{\omega} \left( q_{k-1}, o_k \right): [0,1]^2 \to [0,1]$ is a measurable function that is continuous in its first argument and can depend on the strategy $\omega$ the seller believes buyers to be using.\footnote{For instance, this updating rule might restrict the seller to update using Bayes' Rule wherever possible.} This belief updating procedure is commonly known to all players. 

The seller then decides whether to take or refuse the buyer's offer; his strategy $\chi_{k} (q_k, o_k): [0,1]^2 \to [0,1]$ is a measurable mapping from his belief that the asset has High quality, $q_k$, and the $k$\textsuperscript{th} buyer's offer, $o_k$, to a probability that he takes this $k$\textsuperscript{th} offer. 

Given a strategy $\omega^*$ for buyers and the seller's belief $q_k$ that the asset has High quality, denote the maximum expected surplus the seller can achieve after he rejects his $k$\textsuperscript{th} offer as $V_k^* (q_k)$. 
The seller secures zero surplus when he does not trade with any buyer; so, $V_{n}^*(q_n)= 0$ for all $q_n \in [0,1]$. We can then calculate each $V_k^*(q_k)$ recursively:
\vspace{-0.5cm}
\begin{equation*}
    V_k^* (q_k) = \sum\limits_{\typesig \in \signalset} \left( q_k \times p_H(s) + (1 - q_k) \times p_L(s) \right) \times \int\limits_{0}^{1} \max \left\{ V_{k+1}^* \left( \zeta_{\omega} (q_k, o_{k+1}) \right), o_{k+1} - c \right\} d \omega^*(s)\left( o_{k+1} \right)
\end{equation*}

Given a strategy $\omega^*$ for buyers and $\left\{ \chi_k^* \right\}_{k=1}^{n}$ for the seller, each buyer uses Bayes' Rule to calculate the probability that (i) she is the $k$\textsuperscript{th} buyer to be visited given the seller has not traded yet, (ii) given the seller is in his $k$\textsuperscript{th} visit, he will take an offer of $o_k$, and (iii) given the seller takes that offer in his $k$\textsuperscript{th} visit and the private signal $\typesig \in \signalset$ the buyer observed, the quality is High. Denote these probabilities as $\kappa^*(k)$, $\tau^*_{k,o_k}$, and $\small{\mathcal{H}}^*_{\normalsize{k,o_k,s}}$.  

The strategies $\omega^*$ and $\left\{ \chi_k^* \right\}_{k=1}^{n}$ form an equilibrium of this extended model if and only if:
\begin{enumerate}
    \item A buyer's offer maximises her expected surplus given her beliefs about the quality and behaviour of the seller:
    \begin{equation*}
        \textrm{supp } \omega^*(s) \subseteq \underset{o \in [0,1]}{\argmax} \sum\limits_{k=1}^{n} \kappa^*(k) \times \tau^*_{k,o} \times \left[ \small{\mathcal{H}}^*_{\normalsize{k,o,s}} - c \right]
    \end{equation*}
    \item The seller takes an offer if and only if his surplus from doing so weakly exceeds the expected surplus from refusing it and continuing his visits:
    \begin{equation*}
        \chi_k^* (q_k, o_k) = \indic \left\{ o_k - c \geq V_{k}^*(q_k) \right\}
    \end{equation*}
\end{enumerate}

\begin{restatable}{prop}{propultimatum}
    \label{prop:ultimatum}
    There exists an equilibrium where the seller trades whenever he is offered any price above $c$, and a buyer offers the seller either a price of $c$, or $0$. Furthermore, any level of total surplus that can be achieved in equilibrium can be achieved by one such equilibrium. 
\end{restatable}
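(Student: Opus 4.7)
The plan is to prove the proposition in two stages: construct the claimed equilibrium by lifting a baseline equilibrium to the extended game, and then show that the set of total-surplus levels realised across simple-form lifts exhausts the set realised across all equilibria of the extended game.

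\textbf{Step 1 (construction).} Fix any symmetric baseline equilibrium $(\strat^*, \interim^*)$, which exists by \cref{prop:eqmexist}. Define extended-game strategies as follows. Each buyer with signal $\typesig \in \signalset$ mixes between offering price $c$ with probability $\strat^*(\typesig)$ and offering $0$ with probability $1-\strat^*(\typesig)$. The seller takes any offer $o \geq c$ and refuses any offer $o<c$. Off-path beliefs $\zeta_{\omega}(\cdot,\cdot)$ can be set to any continuous Bayes-consistent extension; only on-path consistency matters.

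\textbf{Step 2 (incentive verification).} Because every on-path offer lies in $\{0,c\}$, a simple backward induction on $k$ gives $V_k^*(q_k) = 0$ for all $k$ and every belief $q_k$: each future offer yields the seller surplus of either $0$ (at price $c$) or $-c<0$ (at price $0$, if taken), and refusal guarantees $0$. Hence refusing offers below $c$ is strictly optimal, accepting offers above $c$ is strictly optimal, and accepting at $c$ is weakly optimal---so the seller's strategy is a best response. For a buyer, the on-path pattern of refusals is identical to the baseline (a refusal occurs exactly when a buyer offers $0$), so her interim belief upon being visited coincides with $\interim^*$. Given this belief, any offer $o>c$ has the same acceptance probability as $c$ but yields a strictly smaller payoff, and any offer in $(0,c)$ is weakly dominated by offering $0$ since it is surely refused. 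Her problem therefore reduces to the baseline trade/no-trade choice at price $c$, for which $\strat^*$ is optimal by hypothesis. This confirms the profile is an equilibrium.

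\textbf{Step 3 (surplus equivalence).} Total surplus depends only on the pair $(\pi_H, \pi_L)$ of conditional trade probabilities: $\sumpayoff = \rho(1-c)\pi_H - (1-\rho)c\pi_L$. Hence it suffices to show that every $(\pi_H, \pi_L)$ sustained in some equilibrium $(\omega^*, \{\chi_k^*\})$ of the extended game is sustained in a simple-form equilibrium. To do so, for each signal $\typesig \in \signalset$ I would define $\tilde\strat(\typesig)$ as the unconditional probability that a buyer observing $\typesig$ triggers trade when visited by the seller, obtained by aggregating $\omega^*(\typesig)$ against the seller's position-dependent acceptance rule weighted by the induced distribution over positions. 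Monotonicity of the seller's continuation value $V_k^*$ in $k$ (each additional visit weakly lowers the optimism about remaining offers), combined with monotone comparative statics on the buyer's best-response problem, should deliver that $\tilde\strat$ is monotone and constitutes a baseline equilibrium strategy. Lifting $\tilde\strat$ as in Step 1 then produces a simple-form equilibrium inducing the same $(\pi_H, \pi_L)$ as $(\omega^*, \{\chi_k^*\})$, and hence the same total surplus.

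The main obstacle is Step 3: while the lift direction is routine, the reverse direction---showing that the reduced-form $\tilde\strat$ is a baseline equilibrium and reproduces the same conditional trade probabilities---requires carefully handling the seller's position-dependent acceptance thresholds and verifying that buyers' interim beliefs under the lift coincide with the interim beliefs induced in the original extended equilibrium when restricted to the trade/no-trade margin. This is where the martingale structure of beliefs and the fact that the seller's continuation surplus weakly decreases across rejections (so his acceptance set can only expand) play the key roles.
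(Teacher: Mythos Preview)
Your Steps 1--2 are correct and match the paper's argument for existence of a simple-form equilibrium.

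The gap is Step 3, which you correctly flag as the main obstacle but do not resolve. The paper takes a completely different route here, and the difference is instructive. Rather than trying to collapse an arbitrary extended equilibrium into a baseline one via your averaged strategy $\tilde\strat$, the paper first proves a structural restriction: if any buyer offers a price strictly above $c$ in equilibrium, then for some quality $\quality \in \{L,H\}$ buyers must offer the \emph{same} price $V^{\max}+c$ after every signal in $\textrm{supp } p_{\quality}$. The argument runs through the maximum continuation value $V^{\max} := \max_{k,q} V_k^*(q)$: no buyer can offer more than $V^{\max}+c$ (any seller would accept, so she would deviate down), hence a seller who attains $V^{\max}$ must be certain his next offer is exactly $V^{\max}+c$. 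Once equilibria with above-$c$ offers are pinned down to this degenerate form, a short case analysis (three cases, depending on which support contains the constant-offer signals) exhibits a simple-form equilibrium with the same trade probabilities.

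Your averaging approach faces real difficulties that this structural argument sidesteps. The probability that a buyer observing $\typesig$ triggers trade depends on the seller's position $k$ \emph{and} his private belief $q_k$, which in turn depends on his own signal about $\quality$ and the entire history of offers. So the ``unconditional'' acceptance probability you aggregate into $\tilde\strat(\typesig)$ is itself $\quality$-dependent, and there is no reason the resulting per-visit rejection probabilities $r_{\quality}(\tilde\strat)$ should reproduce the extended equilibrium's (position-varying) rejection probabilities, let alone the overall $(\pi_H,\pi_L)$. Your monotonicity heuristic---that $V_k^*$ decreases in $k$ so acceptance sets expand---controls only one axis; the seller's belief $q_k$ moves simultaneously with each observed offer, so the acceptance threshold need not move monotonically along the realised path. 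Nor is it clear that the interim belief a buyer would hold under the lifted $\tilde\strat$ coincides with the one she holds in the original extended equilibrium, which is what you would need to verify that $\tilde\strat$ is optimal. The paper's $V^{\max}$ argument avoids all of this by showing that nontrivial above-$c$ offers simply cannot arise except in a degenerate corner.
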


\begin{proof}
The first part of the Proposition is seen easily: unless no buyer ever offers a price above $c$, no buyer has an incentive to offer such a price---the seller will accept the first offer weakly above $c$. 
By Proposition \ref{prop:eqmexist}, then, the existence of such an equilibrium is guaranteed. 

I prove the second part in two steps. I first show that buyers can never be offering a price above $c$ in equilibrium, unless a seller of some quality receives the same price offer from every buyer. Then, I show that if a seller of some quality does receive the same price offer from every buyer, the total surplus in this equilibrium equals total surplus in another equilibrium where the seller is never offered a price above $c$. 

\noindent \textit{Step 1:} Observe that the seller's value function $V_k^*(.)$ is continuous: $V_k^* (.)$ is continuous provided $V_{k+1}^*(.)$ is, and $V_n^*(.)$ is a constant function. So, choose $k^{\textrm{max}} \in \{1,2,...,n\}$ and $q^{\textrm{max}} \in [0,1]$ such that $V_{k^{\textrm{max}}}^* (q^{\textrm{max}}) = V^{\textrm{max}} := \max\limits_{k \in \{1,2,...,n\}, q \in [0,1]} V_k^*(q)$. If $V^{\textrm{max}} = 0$, a seller at most expects an offer of $c$; we are done. So, let $V^{\textrm{max}} > 0$. Then, no buyer can be offering a price above $V^{\textrm{max}} + c$ in equilibrium: any seller accepts such a price, so the buyer would have a profitable deviation to $V^{\textrm{max}} + c$. But then, a seller with a belief $q^{\textrm{max}}$ must be expecting an offer of $V^{\textrm{max}} + c$ with certainty in his $k+1$\textsuperscript{th} visit. This implies that for some quality $\quality \in \{L,H\}$, buyers offer the price $V^{\textrm{max}} + c$ after any signal $\typesig \in \textrm{supp } p_{\quality}$.

\noindent \textit{Step 2:} Consider an equilibrium where for some quality $\quality \in \{L,H\}$, buyers offer the price $V^{\textrm{max}} + c > c$ after any signal $\typesig \in \signalset$ such that $\typesig \in \textrm{supp } p_{\quality}$. 

\begin{enumerate}
    \item If buyers offer this price for any $\typesig \in \signalset$, there is another equilibrium where they instead offer the seller $c$ after every signal. In both equilibria, the seller trades with the first buyer he visits, regardless of his quality.
    \item If buyers offer this price for any $\typesig \in \textrm{supp } p_H$, the seller trades with the first buyer who offers this price whenever he has High quality. Buyers must be offering $0$ after any $\typesig \in \textrm{supp } p_L \setminus \textrm{supp } p_H$. There is then another equilibrium achieving the same total surplus where buyers offer $c$ after any $\typesig \in \textrm{supp } p_H$ but $0$ after any any $\typesig \in \textrm{supp } p_L \setminus \textrm{supp } p_H$. The seller trades with the first buyer who offers $c$.
    \item If buyers offer this price for any $\typesig \in \textrm{supp } p_L$, the seller trades with the first buyer whenever he has Low quality. Buyers cannot be ever offering the seller a price of $0$ following $s \in \textrm{supp } p_H \setminus \textrm{supp } p_L$; otherwise, a buyer would have a strictly positive deviation to an offer of $c$, which guarantees a profitable trade when she is the last buyer to be visited by the seller. So, the seller trades with probability one regardless of his quality. There is then another equilibrium where buyers offer the seller a price of $c$ regardless of their signal, and the seller trades with probability one at this price.  This equilibrium achieves the same total surplus. 
\end{enumerate}
\end{proof}

\newpage

\section{Supplemental Appendix}
\label{section:supplements}

Section \ref{section:morebuyers} claimed that the outcome in a large market might be partially informative about the common value of the asset. This section presents a numerical example to demonstrate this claim.

Let buyers' experiment $\experiment$ be binary; $\signalset = \{0.2, 0.8\}$ and: 
\begin{align*}
    p_L(\typesig) &=
    \begin{cases}
       0.8 & \typesig = 0.2 \\
       0.2 & \typesig = 0.8
    \end{cases}
    &
    p_H(\typesig) &=
    \begin{cases}
        0.2 & \typesig = 0.2 \\
        0.8 & \typesig = 0.8
    \end{cases}
\end{align*}
Furthermore, let buyers' common prior be $\prior = 0.5$ and the seller's reservation value be $c = 0.6$. For any number of buyers, the equilibrium of this game is unique:\footnote{This follows a simple argument. No buyer can accept trade upon the low signal, since $\interim^* \leq 0.5$, so $\Prob_{\interim^*} \left( \quality = H \mid 0.2 \right) \leq 0.2 < c$. As Lemma \ref{lem:poolbelief_approval} shows, the interim belief is strictly decreasing in the probability that buyers approve upon the high signal. Thus, buyers either always accept trade at this signal, or there is a unique interior probability of acceptance that---given the interim belief consistent with those strategies---buyers are indifferent to accept.} buyers always reject upon the low signal, $\strat_n^{*} (0.2) = 0$, but accept with some probability upon the high signal, $\strat_n^{*} (0.8) \in [0,1]$.

Figures \ref{fig:sigma} through \ref{fig:notradebelief} plot (a) total surplus in the market, and the probabilities that (b) a buyer trades upon receiving the high signal, (c and d) some buyer trades with the seller given his quality, (e and f) the probability that a seller has High quality given he trades with some buyer or no buyer, as the number of buyers rises from 1 to 50. 

\begin{center}
    \begingroup
    \setcounter{figure}{0} 
    \renewcommand{\thefigure}{8.\alph{figure}} 


\begin{minipage}[t]{0.48\textwidth}
  \centering
  \includegraphics[width=\textwidth]{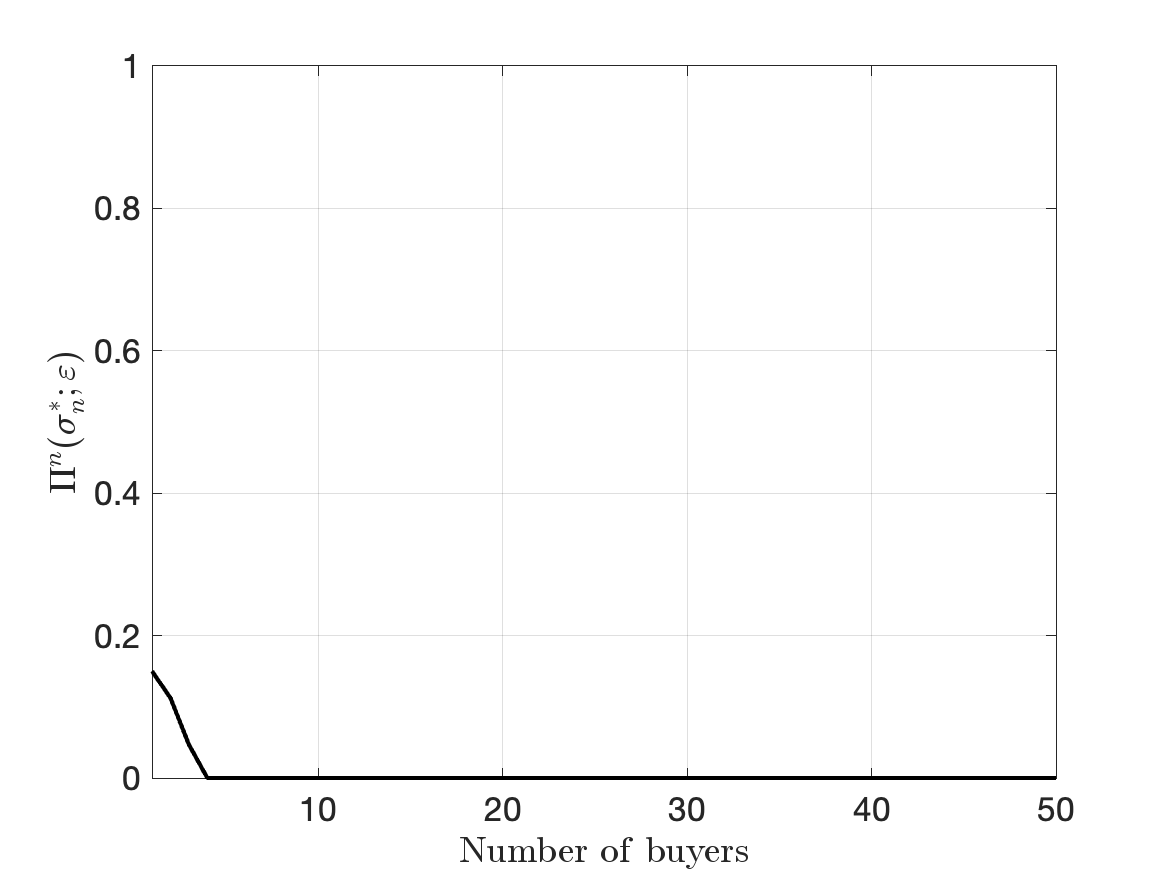}
    \captionof{figure}{Total surplus}
    \label{fig:surplus}
\end{minipage}
\hfill
\begin{minipage}[t]{0.48\textwidth}
  \centering
  \includegraphics[width=\textwidth]{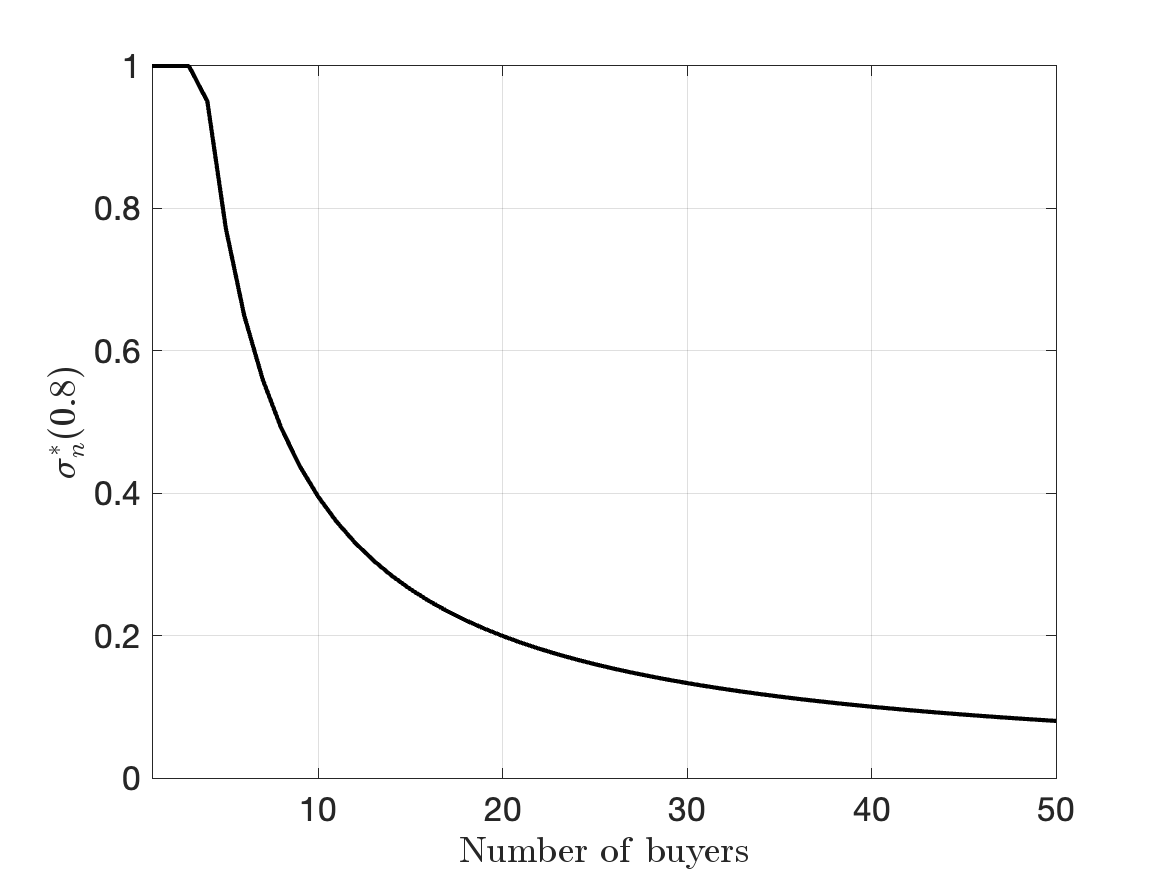}
    \captionof{figure}{Probability that \\ buyer accepts upon $\typesig = 0.8$}
    \label{fig:sigma}
\end{minipage}

\vspace{0.5cm}

\begin{minipage}[t]{0.48\textwidth}
  \centering
  \includegraphics[width=\textwidth]{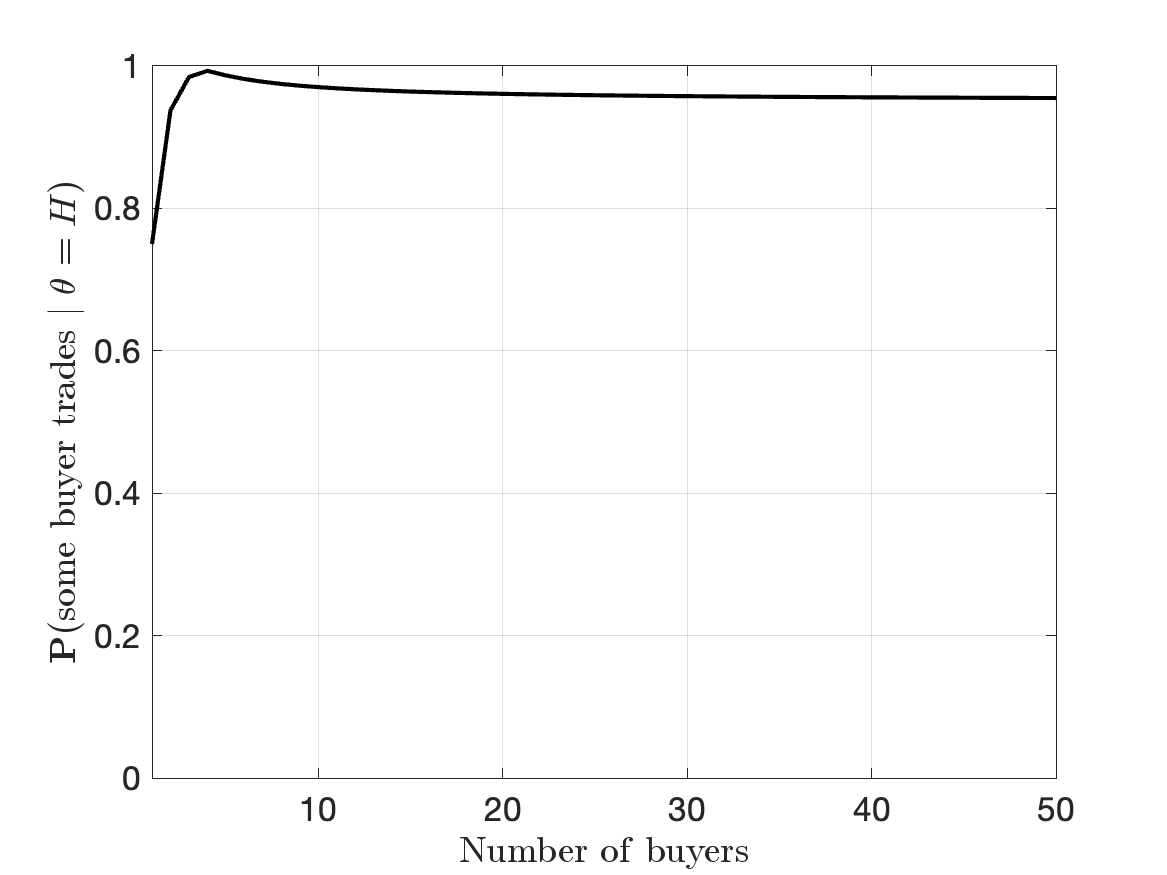}
    \captionof{figure}{Probability that \\ some buyer trades when $\quality = H$}
    \label{fig:ph}
\end{minipage}
\hfill
\begin{minipage}[t]{0.48\textwidth}
  \centering
  \includegraphics[width=\textwidth]{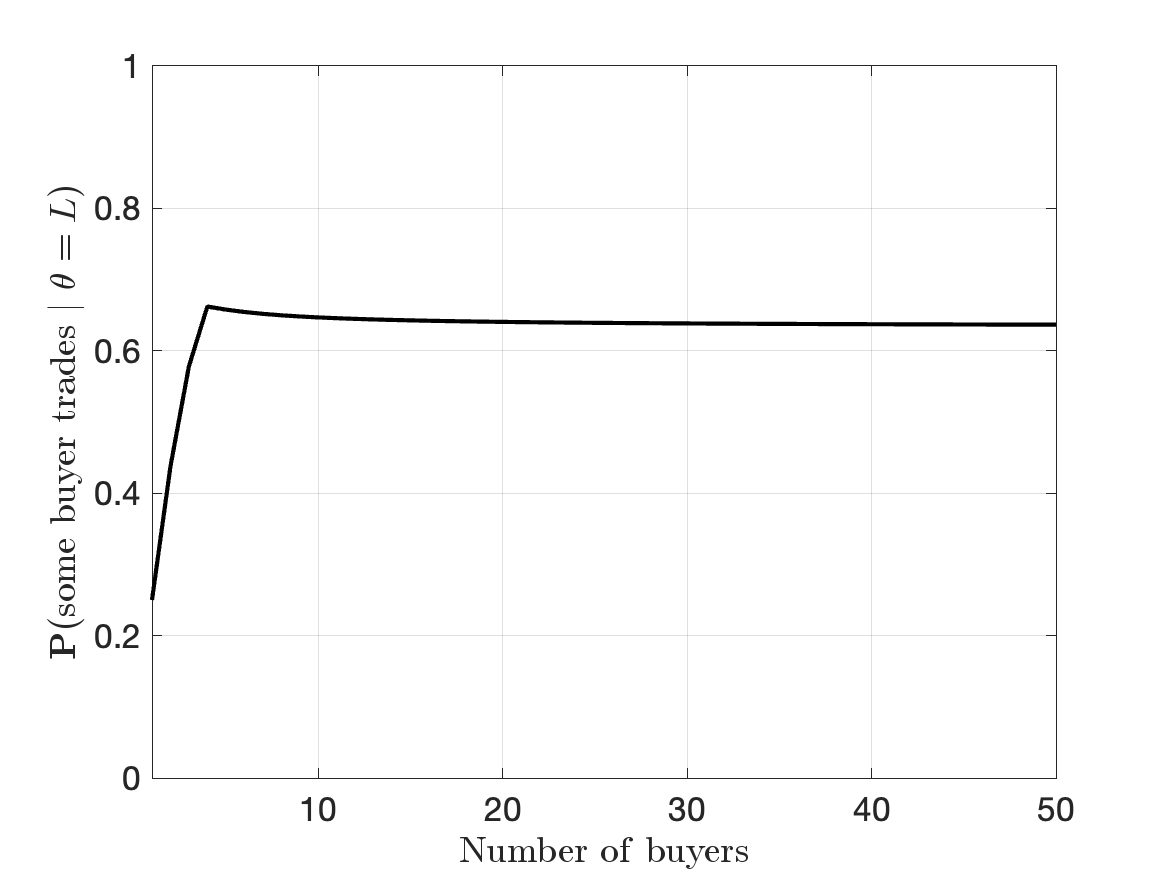}
    \captionof{figure}{Probability that \\ some buyer trades when $\quality = L$}
    \label{fig:pl}
\end{minipage}

\vspace{0.5cm}

\begin{minipage}[t]{0.48\textwidth}
  \centering
  \includegraphics[width=\textwidth]{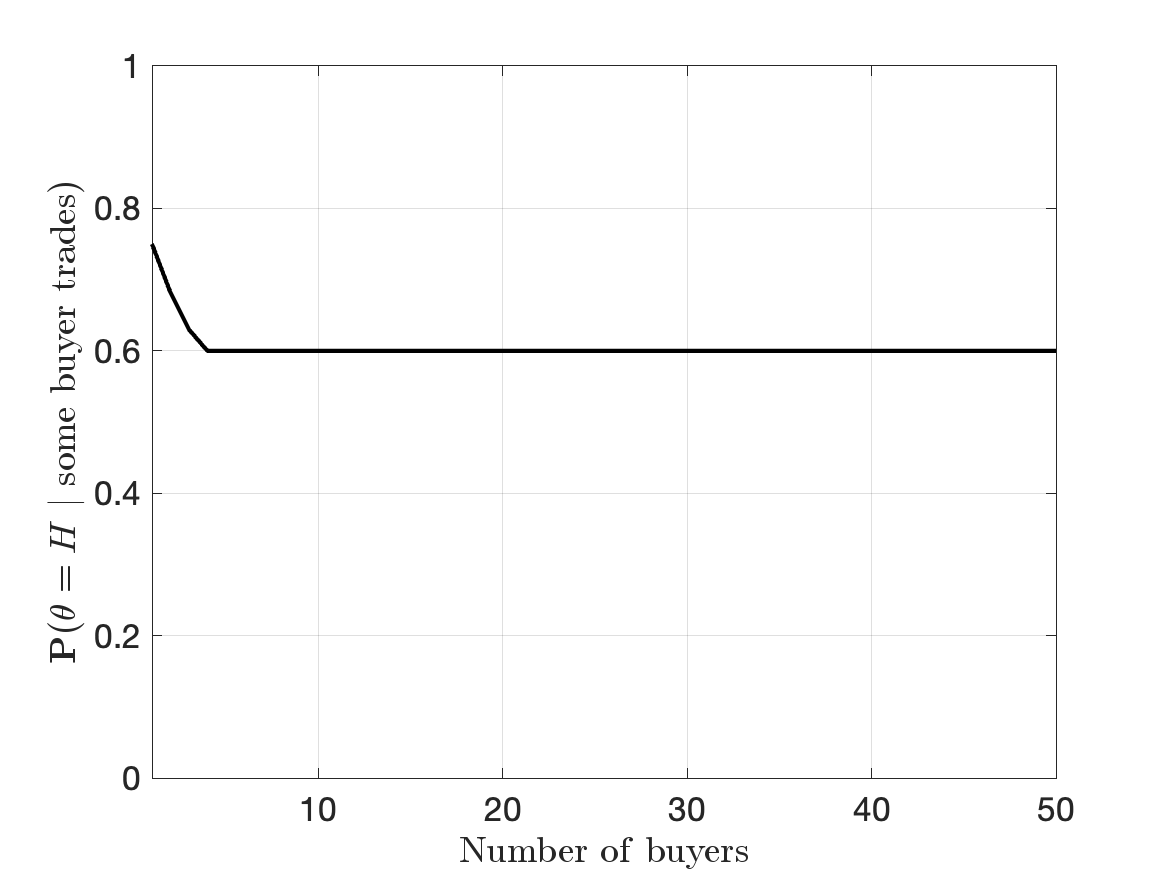}
    \captionof{figure}{Probability that \\ $\quality = H$ given some buyer trades}
    \label{fig:tradebelief}
\end{minipage}
\hfill
\begin{minipage}[t]{0.48\textwidth}
  \centering
  \includegraphics[width=\textwidth]{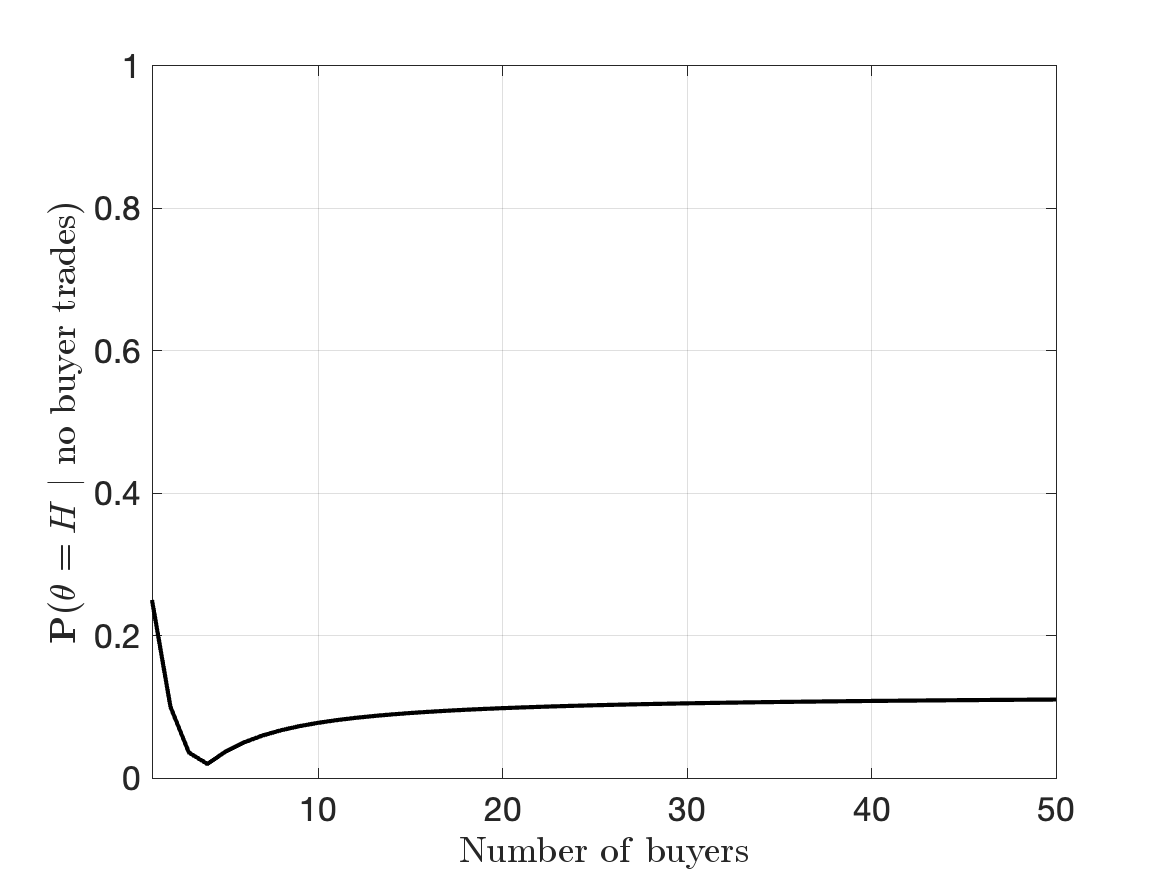}
        \captionof{figure}{{Probability that \\ $\quality = H$ given no buyer trades}}
        \label{fig:notradebelief}
\end{minipage}

    \endgroup
\end{center}

\vspace{0.5cm}

Once the number of buyers in the market exceeds 3, total surplus in the market equals $0$---surplus in the no-information benchmark. A buyer only trades if she receives the high signal. Even then, she rejects the seller with a strictly positive probability. This probability increases with the number of buyers in the market. However, the probability that some buyer trades with the seller reaches a constant level: $0.95$ if he has High quality, and $0.63$ if he has Low quality. Despite having no bearing on market participants' surplus, trade is informative about the seller's quality: the probability that the seller has High quality conditional on trading is $0.6$ not $\prior = 0.5$; and the probability she has High quality conditional on not trading reaches $0.11$.\footnote{For clarity, the figures illustrate results as $n \to 50$; however, these asymptotic values remain valid as $n \to 1000$. As $n \to 1000$, the probability that a buyer accepts trade upon a high signal converges to 0.}

\newpage

\section{Proof Appendix}
\label{section:proofs}

\subsection{Useful Definitions and Notation}
\label{section:useful_definitions}

In what follows, I occasionally operate with the likelihood ratios of beliefs for convenience. The reader can easily verify the identities:
\begin{align*}
    \frac{\interim}{1 - \interim} &=
    \frac{\prior}{1 - \prior} 
    \times
    \frac{ 
        \nu_{H}\left( \strat; \experiment \right) 
    }{
        \nu_{L}\left( \strat; \experiment \right) 
    }
    &
    \frac{
        \Prob_{\interim} \left( \quality = H \mid  s_i \right)
    }{
        1 - \Prob_{\interim} \left( \quality = H \mid  s_i \right)
    }
    &=
    \frac{\interim}{1 - \interim} \times
    \frac{s_i}{1 - s_i}
\end{align*}
Through similar reasoning, the reader can verify that it is optimal for a buyer to accept trade when:
\begin{equation*}
    \frac{
        \Prob_{\interim} \left( \quality = H \mid s_i \right)
    }{
        1 - \Prob_{\interim} \left( \quality = H \mid s_i \right)
    }
    >
    \frac{c}{1-c}
\end{equation*}

Some strategies require buyers to randomise upon observing a particular outcome. To facilitate technical discussion, where it is warranted I assume that each buyer observes the realisation of a \textit{tie-breaking signal} $u \sim U[0,1]$ alongside the outcome of her experiment. This signal is not informative about the asset's quality: it is distributed independently from it conditional on the experiment's outcome. 
I denote the outcome of buyer $i$'s experiment as $s^i$ and her tie-breaking signal as $u^i$. Without loss, buyer $i$ accepts trade if and only if $\strat (s^i) \leq u^i$; where $\strat$ is her strategy. I call the pair  $(s^i, u^i)$ the \textit{score} buyer $i$ observes for the seller. 

\begin{defn}
    \label{defn:score}
    The tuple $Z^i = (\typesig^i, u^i)$, where $u^i \overset{IID}{\sim} U[0,1]$
    is the \textit{score} buyer $i$ observes for the seller. The seller's \textit{score profile} $\mathbf{z}$ is the set of scores each buyer observes; $\mathbf{z} = \{ (\typesig^i, u^i) \}_{i=1}^{n}$. Analogously, the seller's \textit{signal profile} $\mathbf{s} = \{ s^i \}_{i=1}^{n}$ is the set of outcomes of each buyer's experiment.
\end{defn}

Some proofs in Section \ref{section:omitted_proofs} require comparing interim beliefs across pairs of strategies and experiments; $(\strat, \experiment)$. For convenience, I define the mapping from such a pair to the interim belief consistent with them as 
$\Psi (.; \experiment): [0,1]^n \to [0,1]$:
    \begin{equation*}
        \Psi \left( \strat; \experiment \right) 
        :=
        \frac{
            \prior \times \nu_H \left( \strat; \experiment \right)
        }{
            \prior \times \nu_H \left( \strat; \experiment \right)
            +
            ( 1 - \prior ) \times \nu_L \left( \strat; \experiment \right)
        }
    \end{equation*}

    Wherever necessary, I treat each strategy $\strat: \signalset \to [0,1]$ for an experiment $\experiment$ as a vector in the compact set $[0,1]^m \subset \mathbb{R}^n$. This is a finite dimensional vector space, so I endow it with the metric induced by the taxicab norm without loss of generality (see \textcite{kreyszig_functionalanalysis} Theorem 2.4-5):
    \begin{align*}
        \lvert \lvert \strat' - \strat \rvert \rvert &= \sum\limits_{j=1}^{m} \lvert \strat' (s_j) - \strat (s_j) \rvert
        &
        \textrm{for any two strategies } \strat' \textrm{ and } \strat
    \end{align*}
    Note that the interim belief function $\Psi(.;\experiment)$ is thus a continuous function of buyers' strategies.\footnote{$r_{\quality}$ is continuous in $\strat$; so, both the nominator and denominator are strictly positive continuous in $\strat$.}

\begin{defn}
    \label{defn:mute}
    Where the experiment $\experiment$ is binary, $s_L^{\textrm{mute}}$ is the \textit{strongest} level of bad news for which there is an equilibrium where a buyer trades regardless of her signal:
	    \begin{equation*}
    	    \frac{\prior}{1 - \prior} \times 
        		\frac{s_L^{\textrm{mute}}}{1 - s_L^{\textrm{mute}}}
	        =
    	    \frac{c}{1-c}
	    		\end{equation*}
\end{defn}

\subsection{Omitted Results}
\label{section:omitted_results}

\begin{restatable}{lem}{lemselectivebetter}
        \label{lem:selectivebetter}
        Let $\strat^*$ and $\strat$ be two monotone strategies such that (i) $\strat^*$ is more selective than $\strat$, and (ii) $\strat^*$ is an equilibrium strategy. Then, total surplus is higher under $\strat^*$: $\sumpayoff (\strat^*; \signalstr) \geq \sumpayoff (\strat; \signalstr)$.
\end{restatable}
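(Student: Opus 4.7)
The plan is to compute the surplus difference $\sumpayoff(\strat^*; \experiment) - \sumpayoff(\strat; \experiment)$ directly using the identity $\Prob(\textrm{trade}\mid\quality) = 1 - r_\quality(\strat; \experiment)^n$, factorise the resulting expression, and then exploit equilibrium optimality of $\strat^*$ together with an elementary monotonicity argument.

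The first step is the algebraic setup. Directly,
\[
\sumpayoff(\strat^*; \experiment) - \sumpayoff(\strat; \experiment) = (1-\prior)c\bigl[r_L(\strat^*; \experiment)^n - r_L(\strat; \experiment)^n\bigr] - \prior(1-c)\bigl[r_H(\strat^*; \experiment)^n - r_H(\strat; \experiment)^n\bigr].
\]
Writing $r_\quality^* := r_\quality(\strat^*; \experiment)$ and $r_\quality := r_\quality(\strat; \experiment)$, using the factorisation $r_\quality^{*n} - r_\quality^n = (r_\quality^* - r_\quality)\sum_{k=0}^{n-1}r_\quality^{*k}r_\quality^{n-1-k}$ and the identity $r_\quality^* - r_\quality = \sum_{s \in \signalset}p_\quality(s)\Delta(s)$ with $\Delta(s) := \strat(s) - \strat^*(s) \geq 0$, the claim $\sumpayoff(\strat^*;\experiment) \geq \sumpayoff(\strat;\experiment)$ reduces to
\[
\frac{\prior(1-c)}{(1-\prior)c} \cdot \frac{\sum_s p_H(s)\Delta(s)}{\sum_s p_L(s)\Delta(s)} \cdot \frac{\sum_{k=0}^{n-1}r_H^{*k}r_H^{n-1-k}}{\sum_{k=0}^{n-1}r_L^{*k}r_L^{n-1-k}} \leq 1.
\]

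The second step invokes equilibrium optimality of $\strat^*$. For every $s$ with $\Delta(s) > 0$ we have $\strat^*(s) < 1$, so rejection is weakly optimal at $s$ under $\interim^*$, i.e.\ $\tfrac{p_H(s)}{p_L(s)} \cdot \tfrac{\interim^*}{1-\interim^*} \leq \tfrac{c}{1-c}$. Averaging this pointwise bound with non-negative weights $p_L(s)\Delta(s)$ and applying $\tfrac{\interim^*}{1-\interim^*} = \tfrac{\prior\, \nu_H(\strat^*; \experiment)}{(1-\prior)\, \nu_L(\strat^*; \experiment)}$ yields
\[
\frac{\sum_s p_H(s)\Delta(s)}{\sum_s p_L(s)\Delta(s)} \leq \frac{c}{1-c} \cdot \frac{(1-\prior)\, \nu_L(\strat^*; \experiment)}{\prior\, \nu_H(\strat^*; \experiment)}.
\]
Plugging this into the inequality from step one and cancelling using $\nu_\quality(\strat^*; \experiment) = \tfrac{1}{n}\sum_{k=0}^{n-1}r_\quality^{*k}$, everything collapses to the purely analytic claim
\[
\frac{\sum_{k=0}^{n-1}r_H^{*k}r_H^{n-1-k}}{\sum_{k=0}^{n-1}r_H^{*k}} \leq \frac{\sum_{k=0}^{n-1}r_L^{*k}r_L^{n-1-k}}{\sum_{k=0}^{n-1}r_L^{*k}}.
\]

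The final step---which I expect to be the main technical obstacle---is this last inequality. The plan is to define $\varphi(X,y) := \sum_{k=0}^{n-1}X^k y^{n-1-k}/\sum_{k=0}^{n-1}X^k$ on $[0,1]^2$ and show that $\varphi$ is weakly increasing in each argument. Monotonicity in $y$ is immediate by differentiation. For monotonicity in $X$, one writes $X\cdot\partial_X\log\varphi = \Exp_{\mu_N}[k] - \Exp_{\mu_D}[k]$ with $\mu_N(k) \propto X^k y^{n-1-k}$ and $\mu_D(k) \propto X^k$; since $\mu_N(k)/\mu_D(k) \propto (1/y)^k$ is non-decreasing in $k$ whenever $y \leq 1$, the measure $\mu_N$ dominates $\mu_D$ in the monotone likelihood ratio order, and hence in mean. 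Monotonicity of $\strat$ and $\strat^*$ (the latter guaranteed by Proposition \ref{prop:eqmexist}) together with the MLRP ordering of signals gives $r_H \leq r_L$ and $r_H^* \leq r_L^*$; less selectivity gives $r_H \leq r_H^*$ and $r_L \leq r_L^*$; two successive applications of the monotonicity of $\varphi$ then deliver $\varphi(r_H^*, r_H) \leq \varphi(r_L^*, r_H) \leq \varphi(r_L^*, r_L)$, closing the argument.
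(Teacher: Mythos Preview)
Your proof is correct and takes a genuinely different route from the paper's. The paper argues \emph{locally}: it perturbs $\strat^*$ by raising the acceptance probability at the marginal signal $\underaccent{\bar}{s} = \min\{s : \strat^*(s) < 1\}$ by a small $\varepsilon$, shows that the induced change in surplus is non-positive as $\varepsilon \to 0$ (via a probabilistic argument about the ``marginal admit'' whose score profile lies in the set where exactly one buyer flips her decision), and then invokes Lemma~\ref{lem:overapprove} --- a separate quasi-concavity result --- to extend from the infinitesimal step to any less selective $\strat$. Your argument is instead \emph{global and self-contained}: you write the surplus gap in closed form, use equilibrium optimality at every signal with $\Delta(s)>0$ simultaneously (rather than only at $\underaccent{\bar}{s}$), and reduce everything to the monotonicity of $\varphi(X,y)=\sum_k X^k y^{n-1-k}/\sum_k X^k$, which you dispatch with an MLR comparison of the two weighting measures on $\{0,\dots,n-1\}$. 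The paper's route has the advantage that its local computation carries clear economic content (it identifies the conditional quality of the seller who changes status) and that Lemma~\ref{lem:overapprove} is reused elsewhere (e.g.\ in the proofs of Theorems~\ref{thm:intensive_binary} and~\ref{thm:intensive}). Your route has the advantage of not depending on Lemma~\ref{lem:overapprove} at all, and of isolating the purely analytic kernel $\varphi$ cleanly; the MLR step is a nice touch that makes the monotonicity in $X$ transparent. One cosmetic remark: in your final chain you only actually use $r_H^* \le r_L^*$ and $r_H \le r_L$; the inequalities $r_\quality \le r_\quality^*$ that you list are not needed, so you could drop that sentence.
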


\begin{proof}
    \label{proof:prop_selectivebetter}
    Let $\mathbf{z}$ be the seller's \textit{score profile}. Take an equilibrium strategy $\strat^*$ and a less selective strategy $\strat$ such that:
    \begin{equation*}
        \strat(s) - \strat^*(s) =
        \begin{cases}
            \varepsilon & s =  \underaccent{\bar}{s} \\
            0 & \textrm{otherwise}
        \end{cases}
    \end{equation*}
    for some $\varepsilon > 0$, where $\underaccent{\bar}{s} := \min \{ s \in S: \strat^*(s) < 1 \}$. I show that:
    \begin{equation*}
        \lim\limits_{\varepsilon \to 0} 
        \sumpayoff \left( \strat; \varepsilon \right) - \sumpayoff \left( \strat^*; \varepsilon \right) \leq 0
    \end{equation*}
    By Lemma \ref{lem:overapprove}, this establishes the result. 
    
    Now, let $Z \subset (S \times [0,1])^n$ be the set of score profiles under which some buyer trades under $\strat$, but all buyers reject the seller under $\strat^*$:
    \begin{equation*}
        \mathbf{z} \in Z \iff \hspace{0.3cm}
        \begin{aligned}
            \strat^*(s^i) > u^i \quad &\text{for all } i \in \{1, 2, \ldots, n\}, \\
            &\text{and} \\
            \strat(s^i) \leq u^i \quad &\text{for some } i \in \{1, 2, \ldots, n\}.
        \end{aligned}
    \end{equation*}
    Furthermore, for a given score profile $\mathbf{z}$, 
    let $\#$ be the number of buyers whose observed scores are such that $\strat (s^i) \geq u^i > \strat^* (s^i)$. These buyers would accept trade under the strategy $\strat$, but not under $\strat^*$. 

    The seller's eventual outcome differs between the strategy profiles $\strat$ and $\strat^*$ if and only if his score profile $\mathbf{z}$ lies in $Z$. Furthermore, his eventual outcome can only change from a rejection by all buyers in $\strat^*$ to an approval by some buyer in $\strat$. Thus:
    \begin{align*}
        \sumpayoff (\strat; \signalstr) - \sumpayoff (\strat^*; \signalstr) 
        &= \left[ \Prob ( \quality = H \mid \mathbf{z} \in Z ) - c \right] \times \Prob ( \mathbf{z} \in Z )
        \\
        &\propto \Prob ( \quality = H \mid \mathbf{z} \in Z ) - c
    \end{align*}
    Focus therefore, on the probability that $\quality = H$ given the seller's signal profile lies in $Z$:
    \begin{equation*}
        \Prob \left( \quality = H \mid \mathbf{z} \in Z \right) = 
        \sum\limits_{i=1}^{n}
        \Prob \left( \quality = H \mid \# = i \right) \times 
        \frac{
            \Prob \left( \# = i \right)
        }{
            \Prob (\mathbf{z} \in Z)
        }
    \end{equation*}
    Now note:
    \begin{equation*}
        \Prob \left( \# = i \mid \quality \right) = \left( p_{\quality}(\underaccent{\bar}{s}) \right)^i \times \left( 1 - p_{\quality}(\underaccent{\bar}{s}) \right)^{n-i} \times \varepsilon^i
    \end{equation*}
    and thus $\Prob \left( \# = i \right) \propto \varepsilon^i$. Since $\Prob (\mathbf{z} \in A) = \sum\limits_{i=1}^{n} \Prob \left( \# = i \right)$, we have $\lim\limits_{\varepsilon \to 0} \frac{
        \Prob \left( \# = i \right)
    }{
        \Prob (\mathbf{z} \in A)
    } = 0$ for any $i > 1$. Thus:
    \begin{equation*}
        \lim\limits_{\varepsilon \to 0} \Prob \left( \quality = H \mid \mathbf{z} \in A \right) - \Prob \left( \quality = H \mid \# = 1 \right) = 0
    \end{equation*}
    I conclude the proof by showing that $\Prob \left( \quality = H \mid \# = 1 \right) \leq c$ as $\varepsilon \to 0$:
    \begin{align*}
        \lim\limits_{\varepsilon \to 0}
        \frac{
            \Prob \left( \quality = H \mid \# = 1 \right)
        }{
            \Prob \left( \quality = L \mid \# = 1 \right)
        }
        &=
        \lim\limits_{\varepsilon \to 0}
        \frac{ \Prob \left( \quality = H \right) }{ \Prob \left( \quality = L \right) } \times
        \frac{
            \Prob \left( \# = 1 \mid \quality = H \right)
        }{
            \Prob \left( \# = 1 \mid \quality = L \right)
        }
        \\
        &=
        \lim\limits_{\varepsilon \to 0}
        \frac{ \Prob \left( \quality = H \right) }{ \Prob \left( \quality = L \right) } \times
        \left( 
            \frac{
                r_H ( \strat; \signalstr )
            }{
                r_L ( \strat; \signalstr )
            }
        \right)^{n-1}
        \times
        \frac{ p_H (\underaccent{\bar}{s}) }{ p_L (\underaccent{\bar}{s}) }
        \\
        &=
        \frac{ \Prob \left( \quality = H \right) }{ \Prob \left( \quality = L \right) } \times
        \left( 
            \frac{
                r_H ( \strat^*; \signalstr )
            }{
                r_L ( \strat^*; \signalstr )
            }
        \right)^{n-1}
        \times
        \frac{ p_H (\underaccent{\bar}{s}) }{ p_L (\underaccent{\bar}{s}) }
        \\
        &\leq
        \frac{ \interim^* }{ 1 - \interim^* } \times \frac{ p_H (\underaccent{\bar}{s}) }{ p_L (\underaccent{\bar}{s}) }
        \leq
        \frac{c}{1-c}
    \end{align*}
    where $\interim^* = \interimfcn ( \strat^*; \signalstr )$ is the interim belief consistent with $\strat^*$. The penultimate inequality holds due to the straightforward fact that:
    \begin{equation*}
        \frac{
            \interim^*
        }{
            1 - \interim^*
        }
        =
        \frac{\prior}{1 - \prior}
        \times
        \frac{
            1 + r_H^* + ... + (r_H^*)^{n-1}
        }{
            1 + r_L^* + ... + (r_L^*)^{n-1}
        }
        \leq
        \frac{\prior}{1 - \prior} \times
        \left( 
            \frac{r_H^*}{r_L^*}
        \right)^{n-1}
    \end{equation*}
    where $r_{\quality}^* := r_{\quality} (\strat^*; \signalstr)$. The last inequality is due to the fact that $\underaccent{\bar}{s} \in S$ is optimally rejected under $\strat^*$.
    
\end{proof}

\begin{lem}
    \label{lemma:blackwell_necessity}
    Suppose there is a single buyer, $n = 1$. Equilibrium total surplus under experiment $\experiment'$ exceeds that under $\experiment$ \textit{regardless of the seller's reservation value} $c \in [0,1]$ \textit{and buyer's prior belief} $\prior \in [0,1]$ if and only if $\experiment'$ is (Blackwell) more informative than $\experiment$.
\end{lem}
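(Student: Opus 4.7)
The plan is to reduce the statement to a standard characterisation of the Blackwell order via distributions of posteriors, exploiting that the state space $\{L,H\}$ is binary and that $n=1$ eliminates all adverse selection.

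First I would rewrite equilibrium total surplus in a tractable form. With $n=1$, the buyer's interim belief equals the prior $\prior$, so in any equilibrium she accepts exactly those signals $s$ at which the posterior $q_s := \tfrac{\prior\, p_H(s)}{\prior\, p_H(s) + (1-\prior)p_L(s)}$ exceeds $c$. Writing $m_s := \prior p_H(s) + (1-\prior)p_L(s)$ for the marginal probability of $s$, a short calculation yields
\begin{equation*}
    \sumpayoff(\strat^*; \experiment) \;=\; \sum_{s \in \signalset} m_s \, \max\{0, q_s - c\} \;=\; \Exp_{\mu_{\experiment,\prior}}\!\left[ \max\{0, q - c\} \right],
\end{equation*}
where $\mu_{\experiment,\prior}$ is the distribution of the buyer's posterior belief induced by $\experiment$ at prior $\prior$. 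By the martingale property, $\mu_{\experiment,\prior}$ has mean $\prior$ for every $\experiment$.

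Sufficiency (Blackwell $\Rightarrow$ higher surplus) is then immediate: for each fixed $c$, the map $q \mapsto \max\{0, q - c\}$ is convex, and on a binary state space $\experiment'$ Blackwell more informative than $\experiment$ is equivalent to $\mu_{\experiment',\prior}$ being a mean-preserving spread of $\mu_{\experiment,\prior}$ (see e.g.\ \textcite{blackwell_girshick}, Section~12.5). Hence $\Exp_{\mu_{\experiment',\prior}}[\max\{0,q-c\}] \geq \Exp_{\mu_{\experiment,\prior}}[\max\{0,q-c\}]$ for every $c$ and every $\prior$.

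For necessity, fix any interior prior $\prior \in (0,1)$ and assume $\sumpayoff(\strat^*_{\experiment'};\experiment') \geq \sumpayoff(\strat^*_{\experiment};\experiment)$ for every $c \in [0,1]$. Equivalently, $\Exp_{\mu_{\experiment',\prior}}[\max\{0,q-c\}] \geq \Exp_{\mu_{\experiment,\prior}}[\max\{0,q-c\}]$ for every $c \in [0,1]$. Any continuous convex function $\phi:[0,1]\to\mathbb{R}$ admits a representation $\phi(q) = \alpha + \beta q + \int_0^1 \max\{0,q-c\}\, d\lambda(c)$ for some affine part $\alpha + \beta q$ and some positive Borel measure $\lambda$ on $[0,1]$ (this is standard: the second derivative of $\phi$ in the distributional sense is a nonnegative measure). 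Integrating against $\mu_{\experiment',\prior}$ and $\mu_{\experiment,\prior}$ and using that both distributions share the mean $\prior$, the affine parts cancel and I obtain $\Exp_{\mu_{\experiment',\prior}}[\phi(q)] \geq \Exp_{\mu_{\experiment,\prior}}[\phi(q)]$ for every continuous convex $\phi$. Hence $\mu_{\experiment',\prior}$ is a mean-preserving spread of $\mu_{\experiment,\prior}$, and invoking the binary-state characterisation once more gives that $\experiment'$ is Blackwell more informative than $\experiment$.

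The main subtlety is the integral representation of convex functions used in the necessity step; I would either cite the standard result or give a one-line sketch noting that $\lambda$ can be taken as the distributional second derivative of $\phi$. Everything else is bookkeeping around the standard binary-state Blackwell/mean-preserving-spread equivalence.
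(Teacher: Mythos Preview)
Your proof is correct and follows essentially the same approach as the paper: both express single-buyer surplus as $\Exp_{\mu}[\max\{0,q-c\}]$ for the induced posterior distribution $\mu$, and then identify the condition ``surplus under $\experiment'$ weakly exceeds surplus under $\experiment$ for every $c$'' with the convex order between $\mu_{\experiment',\prior}$ and $\mu_{\experiment,\prior}$, which on a binary state space is equivalent to the Blackwell order. The only cosmetic difference is that the paper integrates by parts to obtain the CDF form $\int_c^1 (F(q)-F'(q))\,dq \geq 0$ and cites the CDF characterisation of the convex order directly, whereas you stay in expectation form and pass through the integral representation of convex functions; both routes are standard and equivalent.
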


\begin{proof}
    The sufficiency part of this Lemma follows from Blackwell's Theorem (\textcite{blackwell_girshick}, Theorem 12.2.2). To show necessity, I fix an arbitrary prior belief $\prior$ for the evaluator. 

    Let $q_j$ be the posterior belief the buyer forms about the asset's quality upon observing the outcome $s_j \in \signalset$:
    \begin{equation*}
        q_j = 
        \frac{
            \prior \times s_j
        }{
            \prior \times s_j + (1 - \prior) \times (1 - s_j)
        }
    \end{equation*}
    Furthermore, let $F(.)$ and $F'(.)$ be the CDFs of the posterior beliefs $\experiment$ and $\experiment'$ induce, respectively, for this prior belief $\prior$:
    \begin{align*}
        F (q) &= (1 - \prior) \times \sum\limits_{s \in \signalset: s \leq q} p_L(s)
        +
        \prior \times \sum\limits_{s \in \signalset: s \leq x} p_H(s)
        \\
        F'( q ) &= (1 - \prior) \times \sum\limits_{s \in \signalset: s \leq q} p_L'(s)
        +
        \prior \times \sum\limits_{s \in \signalset: s \leq x} p_H'(s)
    \end{align*}
    Equilibrium total surplus (and the buyer's expected payoff) under $\experiment$ is given by:
    \begin{align*}
        \int\limits_{c}^{1} ( q - c ) dF( q ) = \int\limits_{c}^{1} q dF( q ) - c \times \left( 1 - F(c) \right) = ( 1 - c ) - \int\limits_{c}^{1} F( q ) d q
    \end{align*}
    An analogous expression gives equilibrium total surplus under $\experiment'$. For the former to exceed the latter for any $c \in [0,1]$, we must have:
    \begin{equation*}
        \int\limits_{c}^{1} \left( F(q) - F'(q) \right) d q 
        \geq 0
    \end{equation*}
    which is equivalent to $\experiment'$ being Blackwell more informative than $\experiment$.\footnote{See \textcite{muller_stoyan_comparison}, Theorem 1.5.7. The Blackwell order between signal structures is equivalent to the convex order between the posterior belief distributions they induce; see \textcite{gentzkow_kamenica_2016}.}
    
\end{proof}

Lemma \ref{lem:eqm_algorithm} proves useful when proving Theorem \ref{thm:intensive}, the main result of Section \ref{section:finite}. This Lemma can also be used towards an alternative and direct proof for the equilibrium existence claim of Proposition \ref{prop:eqmexist}.

\begin{lem}
    \label{lem:eqm_algorithm}
    For each $j \in \{ 1,2,...,m \}$, let $\strat_j$ be the strategy defined as:
    \begin{equation*}
        \strat_j (s) = 
        \begin{cases}
            0 & s < s_j \\
            1 & s \geq s_j
        \end{cases}
    \end{equation*}
    and $\interim_j$ be the interim belief consistent with this strategy. Unless $\strat_j$ is itself an equilibrium strategy:
    \begin{enumerate}[label = \roman*]
        \item There is an equilibrium strategy $\strat^*$ that is \textit{more selective than} $\strat_j$ if $\Prob_{\interim_j} \left( \quality = H \mid s_j \right) < c$.
        \item There is an equilibrium strategy $\strat^*$ that is \textit{less selective than} $\strat_j$ if $\Prob_{\interim_j} \left( \quality = H \mid s_{j-1} \right) > c$.
    \end{enumerate}
\end{lem}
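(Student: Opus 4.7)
The plan is to parametrise monotone strategies by a single real index $t \in [0, m]$ via $\strat^t(s_i) = \max\{0, \min\{1, i - t\}\}$, so larger $t$ means more selective and $\strat_j$ corresponds to $t = j - 1$. Write $h(t) := \interimfcn(\strat^t; \experiment)$, which is continuous in $t$. The set of best responses to $\strat^t$ is a sub-interval $[a(h(t)), b(h(t))] \subseteq [0, m]$, where $a(\interim) = \underline{i}(\interim) - 1$ and $b(\interim) = \overline{i}(\interim) - 1$, with $\underline{i}(\interim)$ (resp.\ $\overline{i}(\interim)$) the smallest signal index at which the posterior weakly (resp.\ strictly) exceeds $c$. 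A strategy $\strat^t$ is therefore an equilibrium iff $a(h(t)) \leq t \leq b(h(t))$.

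The first technical step is to pin down the semicontinuity properties of $a$ and $b$ in $\interim$: as $\interim$ rises through a critical value at which the posterior at some $s_i$ equals $c$, $\underline{i}$ jumps downwards while $\overline{i}$ retains its larger left-limit value. Thus $\underline{i}$ is lower semicontinuous and $\overline{i}$ is upper semicontinuous, making $a\circ h$ lower semicontinuous and $b\circ h$ upper semicontinuous in $t$. In particular, both sets $\{t : a(h(t)) \leq t\}$ and $\{t : b(h(t)) \geq t\}$ are closed in $[0, m]$.

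For part (i), the assumption $\Prob_{\interim_j}(\quality = H \mid s_j) < c$ gives $\underline{i}(h(j-1)) > j$ and hence $a(h(j-1)) \geq j > j - 1$, so $j - 1$ lies in a neighbourhood disjoint from the closed set $S := \{t \in [j-1, m] : a(h(t)) \leq t\}$. Put $t^* := \inf S$; since $a(\prior) \leq m$ forces $m \in S$, the infimum exists and satisfies $t^* > j - 1$ and $a(h(t^*)) \leq t^*$. To finish I pick non-integer $t_n \uparrow t^*$ with $a(h(t_n)) > t_n$ (available because each such $t_n \notin S$); integer-valuedness of $a$ forces $a(h(t_n)) \geq \lceil t_n \rceil \to \lceil t^* \rceil \geq t^*$, and combining $b \geq a$ with upper semicontinuity of $b\circ h$ then gives $b(h(t^*)) \geq t^*$. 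Thus $t^* \in [a(h(t^*)), b(h(t^*))]$, i.e., $\strat^{t^*}$ is an equilibrium strategy strictly more selective than $\strat_j$.

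Part (ii) is symmetric: the assumption now forces $b(h(j-1)) \leq j - 2 < j - 1$, and setting $t^* := \sup\{t \in [0, j-1] : b(h(t)) \geq t\}$ delivers an equilibrium strictly less selective than $\strat_j$ via an analogous limit argument (using lower semicontinuity of $a\circ h$ for the remaining inequality). The main obstacle is getting the directions of semicontinuity for $a$ and $b$ correct and handling the integer jumps of $\underline{i}, \overline{i}$ cleanly when passing to limits; once those are in place, everything else is a straightforward infimum/supremum construction on a closed set.
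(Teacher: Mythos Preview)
Your argument is correct and complete. It is, however, a genuinely different route from the paper's.

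The paper proceeds combinatorially: it walks through the pure threshold strategies $\strat_k$ for $k \in \{j,\dots,m+1\}$ (adding a ``reject-all'' strategy at the top), checks whether any of them satisfies the two-sided optimality inequality, and if not, locates the first index $k^*$ at which the relevant inequality flips sign. It then invokes continuity of $\interimfcn$ in the mixing probability at $s_{k^*}$ alone to produce a mixed equilibrium at that specific threshold. The argument for Claim~ii is handled by symmetry on $\{1,\dots,j\}$.

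Your approach instead parametrises all monotone strategies by a single real $t$, identifies the best-response correspondence as an interval $[a(h(t)),b(h(t))]$ with integer endpoints, and uses the lower/upper semicontinuity of $a\circ h$ and $b\circ h$ to show the relevant level sets are closed; the equilibrium is then the infimum (or supremum) of such a closed set. The limit step exploiting integer-valuedness of $a$ to pass from $a(h(t_n))>t_n$ to $b(h(t^*))\geq t^*$ via $b\geq a$ and upper semicontinuity is clean and correct.

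What each approach buys: the paper's proof is more explicit---it tells you \emph{where} the equilibrium sits (at the index $k^*$, with mixing only on $s_{k^*}$)---and avoids any semicontinuity bookkeeping. Your proof is more uniform and arguably more robust: once the semicontinuity of $a$ and $b$ is established, both claims drop out of the same closed-set argument without separately handling ``all pure thresholds fail'' as a case. Both are, at bottom, intermediate-value arguments; yours trades constructiveness for concision.
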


\begin{proof}
    Abusing notation slightly, I add two fully revealing outcomes $s_0$ and $s_{m+1}$ to the set $\signalset$ (duplicating $s_1$ and $s_m$ if either of them are already fully revealing), and denote the strategy which \textit{never} accepts trade as  $\strat_{m+1}$:
    \begin{align*}
        \frac{ s_{m+1} }{ 1 - s_{m+1} } &= \infty
        &
        \frac{s_0}{ 1 - s_0 } &= 0
        &
        \frac{ \interim_{m+1} }{ 1 - \interim_{m+1} } = 
        \frac{\prior}{1 - \prior}
    \end{align*}
    
    \par\noindent
    \hspace*{0cm}%
    \textbf{Claim i.}
    \par\vspace{\medskipamount}

    \noindent
    The strategy $\strat_{m+1}$ is the most selective strategy buyers can adopt, and is an equilibrium strategy unless:
    \begin{equation*}
        \frac{
            s_m
        }{
            1 - s_m
        } \times
        \frac{\interim_{m+1}}{ 1 - \interim_{m+1}} > \frac{c}{1-c}
    \end{equation*}
    So, assume this condition is satisfied. Likewise, the strategy $\strat_k$ for $k \geq j$ is an equilibrium if the following inequality is satisfied:
    \begin{equation}
        \label{eqn:algorithm_ineq1}
        \frac{s_k}{1 - s_k} \times \frac{\interim_k}{1 - \interim_k} \geq \frac{c}{1-c} \geq \frac{s_{k-1}}{ 1 - s_{k-1} } \times \frac{\interim_k}{1 - \interim_k}
    \end{equation}
    So, assume inequality \ref{eqn:algorithm_ineq1} is violated for every $k \geq j$. This gives us:
    \begin{equation}
        \label{eqn:algorithm_ineq2}
        \frac{s_{m+1}}{1 - s_{m+1}} \times
        \frac{\interim_{m+1}}{ 1 - \interim_{m+1}} > \frac{c}{1-c} > \frac{s_j}{1 - s_j} \times \frac{\interim_j}{1 - \interim_j}
    \end{equation}
    where the last part of this inequality is by hypothesis. 

    Now, let $k^* \in \{ j, j+1, ..., m \}$ be the first index for which the following inequality is satisfied:
    \begin{equation*}
        \frac{s_{k^*+1}}{1 - s_{k^*+1}} \times
        \frac{\interim_{{k^*}+1}}{ 1 - \interim_{{k^*}+1}} \geq \frac{c}{1-c} \geq \frac{s_{k^*}}{1 - s_{k^*}} \times \frac{\interim_{k^*}}{1 - \interim_{k^*}}
    \end{equation*}
    such a $k^*$ must exist due to inequality \ref{eqn:algorithm_ineq2}. 
    But since inequality \ref{eqn:algorithm_ineq1} is violated, we must have:
    \begin{equation*}
        \frac{s_{k^*}}{1 - s_{k^*}} \times \frac{\interim_{k^* +1}
        }{1 - \interim_{k^* +1}}
        >
        \frac{c}{1-c}
        \geq
        \frac{s_{k^*}}{1 - s_{k^*}} \times
        \frac{\interim_{k^*}}{1 - \interim_{k^*}}
    \end{equation*}
    But since the function $\interimfcn (\strat; \experiment)$ is continuous in buyers' strategy $\strat$,
    we can then find some strategy $\strat^*$:
    \begin{equation*}
        \strat^* (s) =
        \begin{cases}
            1 & s > s_{k^*}  \\
            \in [0,1] & s = s_{k^*} \\
            0 & s < s_{k^*}
        \end{cases}
    \end{equation*}
    such that:
    \begin{equation*}
        \frac{s_{k^*+1}}{1 - s_{k^*+1}} \times \frac{\interimfcn (\strat^*; \experiment)
        }{1 - \interimfcn (\strat^*; \experiment)}
        \geq
        \frac{c}{1-c}
        =
        \frac{s_{k^*}}{1 - s_{k^*}} \times
        \frac{\interimfcn (\strat^*; \experiment)}{1 - \interimfcn (\strat^*; \experiment)}
    \end{equation*}

    The strategy $\strat^*$ is thus an equilibrium strategy. It is clearly more selective than $\strat_j$; since it is more selective than $\strat_{k^*}$, where $k^* \geq j$.

    \par\noindent
    \hspace*{0cm}%
    \textbf{Claim ii.}
    \par\vspace{\medskipamount}

    \noindent
    For any $k \in \{ 1,2,...,j \}$, the strategy $\strat_k$ is an equilibrium if the inequality \ref{eqn:algorithm_ineq1} is satisfied. So, as earlier, assume \ref{eqn:algorithm_ineq1} is violated for every such $k$. Then, we get:
    \begin{equation*}
        \frac{s_j}{1 - s_j}
        \times
        \frac{\interim_j}{1 - \interim_j} 
        \geq
        \frac{s_{j-1}}{1 - s_{j-1}} \times \frac{\interim_j}{1 - \interim_j}
        >
        \frac{c}{1-c}
        >
        \frac{s_1}{1 - s_1}
        \times
        \frac{\interim_1}{1 - \interim_1}
    \end{equation*}
    where the second inequality in the chain follows by hypothesis
    and the last inequality follows from the violation of inequality \ref{eqn:algorithm_ineq1} for $k = 1$. We can now repeat the argument we constructed after inequality \ref{eqn:algorithm_ineq2} to prove Claim i, to prove the existence of an equilibrium strategy $\strat^*$ that is less selective than $\strat_j$.

\end{proof}

\subsection{Omitted Proofs}
\label{section:omitted_proofs}

\propeqmexist*

\begin{proof}
    \label{proof:prop_eqmexist}

    In what follows, I treat each strategy $\strat: \signalset \to [0,1]$ as a vector in the compact set $[0,1]^m \subset \mathbb{R}^n$, endowed with the taxicab metric (see the end of Section \ref{section:useful_definitions}).
    I start by proving that any equilibrium strategy must be monotone and all equilibria exhibit adverse selection. Using these observations, I prove that the set of equilibrium strategies is non-empty and compact. 

    \begin{enumerate}
        \setcounter{enumi}{1}
        \item Any equilibrium strategy is {monotone.}
    \end{enumerate} 

    \noindent
    Any equilibrium strategy $\strat^*$ must be optimal against the interim belief $\interim^*$ consistent with it. Whenever $\prior \in (0,1)$, $\interim^* = \interimfcn \left( \strat^*; \experiment \right) \in (0,1)$, and so 
    $\Prob_{\interim^*} \left( \quality = H \mid s' \right) > \Prob_{\interim^*} \left( \quality = H \mid S = s \right)$ for $s', s \in \signalset$ such that $s' > s$. 

    \begin{enumerate}
        \setcounter{enumi}{2}
        \item {All equilibria exhibit adverse selection.}
    \end{enumerate}

    \noindent
    \textit{A fortiori}, $\Psi (\strat; \experiment) \leq \prior$ for any monotone strategy $\strat$. To see this, note that $p_H(.)$ first order stochastically dominates $p_L(.)$ since it likelihood ratio dominates it.\footnote{Theorem 1.C.1 in \textcite{shaked_shantikumar_2007}.}
    Therefore, $\nu_L (\strat; \experiment) \geq \nu_H (\strat; \experiment)$.
    The result then follows since $\frac{
        \interimfcn (\strat; \experiment)
    }{
        1 - \interimfcn (\strat; \experiment)
    } = \frac{\prior}{1 - \prior} \times \frac{\nu_H (\strat; \experiment)}{\nu_L (\strat; \experiment)}$.

    \begin{enumerate}
        \item The set of equilibrium strategies is non-empty and compact.
    \end{enumerate}

    \begin{enumerate}[label=\roman*]
        \item The set of equilibrium strategies is non-empty.
    \end{enumerate}

    \noindent 
    Define $\Phi(.): [0,1]^m \to 2^{[0,1]^m}$ to be the buyers' \textit{best response correspondence}. $\Phi(.)$ maps any strategy $\strat$ to the set of strategies that are optimal against the interim belief $\Psi(\strat; \experiment)$ it induces:
    \begin{equation*}
        \Phi(\strat) = \left\{ \strat' \in [0,1]^m: \strat' \textrm{ is optimal against } \Psi(\strat; \experiment) \right\}
    \end{equation*}
    A strategy $\strat^*$ is an equilibrium strategy if and only if it is a fixed point of buyers' best response correspondence; $ \strat^* \in \Phi(\strat^*)$. I establish that the correspondence $\Phi$ has at least such fixed point through Kakutani's Fixed Point Theorem.

    $\Phi$ is trivially non-empty; every interim belief has some strategy optimal against it. It is also convex valued; if two distinct approval probabilities are optimal after some outcome $s \in \signalset$, \textit{any} approval probability is optimal upon that outcome.

    The only task that remains is to prove that $\Phi$ is upper-semi continuous. 
    For this, take an arbitrary sequence of strategies $\{ \strat_n \}$ such that $\strat_n \to \strat_{\infty}$. Denote the interim beliefs consistent with these strategies as $\interim_n := \Psi (\strat_n; \experiment)$. Since $\Psi(.;\experiment)$ is continuous in buyers' strategies, we also have $\interim_n \to \interim_{\infty}$ where $\interim_{\infty} = \Psi \left( \strat_{\infty}; \experiment \right)$. Now, take a sequence of strategies $\left\{\strat_n^* \right\}$ where $\strat_n^* \in \Phi (\strat_n)$. 
    Note that every $\strat_n^*$ is monotone since optimality against any interim belief $\interim \in (0,1)$ requires monotonicity. 
    We want to show that $\Phi$ is upper semicontinuous; i.e.:
    \begin{equation*}
        \strat_n^* \to \strat_{\infty}^* \implies \strat_{\infty}^* \in \Phi (\strat_{\infty})
    \end{equation*}

    By the Monotone Subsequence Theorem, the sequence $\left\{\strat_n^* \right\}$ has a subsequence $\strat_{n_k}^* \to \strat_{\infty}^*$ of strategies whose norms $\lvert \lvert \strat_{n_k}^* \rvert \rvert$ are monotone in their indices $n_k$. Here, I take the case where these norms are increasing, the proof is analogous for the opposite case. Since $\strat^*_{\infty}$ is the limit of a subsequence of monotone strategies, it must be a monotone strategy too. Assuming otherwise leads to a contradiction; for any $s, s' \in \signalset$ such that $s' > s$:
    \begin{equation*}
        \strat^*_{\infty}(s) > 0 \textrm{ \& } \strat^*_{\infty} (s') < 1 
        \hspace{0.2cm}
        \implies 
        \hspace{0.2cm}
        \exists N \in \mathbb{N} \textrm{ s.t. } \forall \textrm{ } n_k \geq N 
        \hspace{0.3cm}
        \strat_{n_k}^* (s) > 0 \textrm{ \& } \strat_{n_k}^* (s') < 1
    \end{equation*}

    Now let $\Bar{s}$ be the highest outcome for which $\strat_{\infty}^* (\Bar{s}) > 0$. I show that:
    \begin{itemize}
        \item If $\strat_{\infty}^* (\Bar{s}) \in (0,1)$, then:
        \begin{equation*}
            \frac{
                \interim_{\infty}
            }{
                1 - \interim_{\infty}
            }
            \times
            \frac{\Bar{s}}{1-\Bar{s}} = \frac{c}{1-c}
        \end{equation*}
        \item If $\strat_{\infty}^* (\Bar{s}) = 1$, then:
        \begin{equation*}
            \frac{
                \interim_{\infty}
            }{
                1 - \interim_{\infty}
            }
            \times
            \frac{s}{1-s}
            \hspace{0.2cm}
            \begin{cases}
                \leq \frac{c}{1-c} & s < \Bar{s} \\
                \geq \frac{c}{1-c} & s \geq \Bar{s}
            \end{cases}
        \end{equation*}
    \end{itemize}
    
    The first case easily follows by noting that:
    \begin{equation*}
        \strat_{\infty}^* (\Bar{s}) \in (0,1)
        \hspace{0.1cm}
        \implies
        \hspace{0.1cm}
        \strat_{n_k}^* (\Bar{s}) \in (0,1)
        \hspace{0.1cm}
        \implies
        \hspace{0.1cm}
        \frac{
                \interim_{n_k}
            }{
                1 - \interim_{n_k}
            }
            \times
            \frac{\Bar{s}}{1-\Bar{s}} = \frac{c}{1-c}
            \hspace{0.1cm}
        \implies
        \hspace{0.1cm}
        \frac{
                \interim_{\infty}
            }{
                1 - \interim_{\infty}
            }
            \times
            \frac{\Bar{s}}{1-\Bar{s}} = \frac{c}{1-c}
    \end{equation*}
    for all $n_k \geq N' \in \mathbb{N}$.
    The second case follows similarly, by noting that $\strat_{\infty}^* (\Bar{s}) = 1$ and $\strat_{\infty}^* (s') = 0$ for all $s' < \Bar{s}$ implies $\strat_{n_k}^* (\Bar{s}) > 0$ and $\strat_{n_k}^* (s') = 0$ for all $n_k \geq N'' \in \mathbb{N}$.

    \begin{enumerate}[label=\roman*]
        \setcounter{enumi}{1}
        \item The set of equilibrium strategies is compact. 
    \end{enumerate}

    $\Sigma$ is a subset of $[0,1]^m$ and therefore bounded, hence it suffices to show that is closed. Let $\left\{ \strat_n^{**} \right\}$ be a sequence of equilibrium strategies. Note that this means $\strat_n^{**} \in \Phi (\strat_n^{**})$. Since $\Phi(.)$ is upper semicontinuous, $\strat_n^{**} \to \strat_{\infty}$ implies $\strat_{\infty} \in \Phi (\strat_{\infty})$, and therefore $\strat_{\infty}$ is an equilibrium strategy itself. 
    
\end{proof}

\propselectivebetter*

\begin{proof}
    This is an immediate corollary to Lemmas \ref{lem:selectivebetter} and \ref{lem:overapprove} below; both of independent interest. 
\end{proof}

\begin{restatable}{lem}{lemoverapprove}
    \label{lem:overapprove}
    Take three monotone strategies $\strat'', \strat'$ and, $\strat$, ordered from the least selective to the most. If $\sumpayoff ( \strat'; \experiment ) \leq \sumpayoff ( \strat; \experiment )$, then $\sumpayoff ( \strat''; \experiment ) \leq \sumpayoff ( \strat'; \experiment )$.
\end{restatable}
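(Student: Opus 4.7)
My plan is to reduce the three-point inequality to a one-dimensional quasiconcavity statement. Enumerate signals so that $\signalset = \{s_1, \ldots, s_m\}$ with $p_H/p_L$ increasing in the index. Every monotone strategy can be written uniquely as $\strat_\theta$ for some $\theta \in [0,m]$, where $\strat_\theta(s_j) := \max\{0,\min\{1, \theta - m + j\}\}$; larger $\theta$ corresponds to a less selective strategy. Let $\theta,\theta',\theta'' \in [0,m]$ be the parameters of $\strat,\strat',\strat''$, so that $\theta \leq \theta' \leq \theta''$, and set $g(\theta) := \sumpayoff(\strat_\theta;\experiment)$. It suffices to prove $g$ is quasiconcave on $[0,m]$: the hypothesis $g(\theta') \leq g(\theta)$ then forces $\theta'$ to lie (weakly) past the peak of $g$, and the conclusion $g(\theta'') \leq g(\theta')$ follows from $g$ being non-increasing beyond its peak.

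To establish quasiconcavity, I would compute $g'$ piecewise. On each open piece $\theta \in (m-k,\,m-k+1)$ only the coordinate $\strat_\theta(s_k)$ varies, linearly in $\theta$, so $r_Q(\theta)$ is linear with slope $-p_Q(s_k)$, and differentiating $\sumpayoff$ directly gives
\begin{equation*}
    g'(\theta) \;=\; n\bigl[\prior(1-c)\,p_H(s_k)\,r_H(\theta)^{n-1} \;-\; (1-\prior)c\,p_L(s_k)\,r_L(\theta)^{n-1}\bigr],
\end{equation*}
whose sign agrees with that of $\Phi(\theta) - C$, where $\Phi(\theta) := \lambda(s_k)\,(r_H(\theta)/r_L(\theta))^{n-1}$, $\lambda(s) := p_H(s)/p_L(s)$, and $C := (1-\prior)c/[\prior(1-c)]$. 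I would then argue that $\Phi$ is (weakly) non-increasing on $[0,m]$. Within a piece $\lambda(s_k)$ is constant, and a short differentiation gives
\begin{equation*}
    \frac{d}{d\theta}\!\left(\frac{r_H(\theta)}{r_L(\theta)}\right) \;\propto\; p_L(s_k)\,P_H(s\leq s_k) - p_H(s_k)\,P_L(s\leq s_k) \;\leq\; 0,
\end{equation*}
the inequality being the MLRP fact that the average likelihood ratio over $\{s \leq s_k\}$ does not exceed $\lambda(s_k)$. At a breakpoint $\theta = m-k$ the strategy $\strat_\theta$ and hence $r_H/r_L$ are continuous, but the marginal signal drops from $s_{k+1}$ to $s_k$, so $\lambda(s_k)$ steps down. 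Combining, $\Phi$ is non-increasing on $[0,m]$, $\{\theta : g'(\theta) \geq 0\}$ is an initial interval $[0,\theta^*]$, and $g$ is quasiconcave.

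The main obstacle is the MLRP inequality that delivers the within-piece monotonicity of $r_H/r_L$: this is the only step where the monotonicity of $\strat,\strat',\strat''$ is used essentially, and without it $\Phi$ could cross $C$ more than once and quasiconcavity would fail. The edge case $g(\theta') = g(\theta)$ is handled by noting that the within-piece strict monotonicity of $\Phi$ (whenever $n \geq 2$ and the piece overlies more than one distinct likelihood ratio) or its piecewise-constant structure (when $n = 1$) forces any flat region of $g$ to already lie at its peak, in which case the conclusion also holds.
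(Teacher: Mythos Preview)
Your approach is correct and takes a genuinely different route from the paper's. The paper argues directly with score-profile events: letting $Z'$ (resp.\ $Z''$) be the set of profiles that trade under $\strat'$ (resp.\ $\strat''$) but not under the more selective comparator, it writes the two surplus differences as $\Prob(Z')\bigl[\Prob(H\mid Z')-c\bigr]$ and $\Prob(Z'')\bigl[\Prob(H\mid Z'')-c\bigr]$, and then deduces $\Prob(H\mid Z')\leq c\Rightarrow\Prob(H\mid Z'')\leq c$ via the chain $\Prob(H\mid Z'')\leq\Prob(H\mid Z)\leq\Prob(H\mid Z')$, where $Z=Z'\cup Z''$. The inequality $\Prob(H\mid Z)\geq\Prob(H\mid Z'')$ is asserted rather than derived; it is equivalent to $\Prob(H\mid Z')\geq\Prob(H\mid Z'')$ and ultimately rests on the same MLRP fact you invoke explicitly for the within-piece monotonicity of $r_H/r_L$.

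Your route---parametrising monotone strategies by a single real $\theta$ and showing $g(\theta)=\sumpayoff(\strat_\theta)$ is single-peaked because $g'(\theta)\propto\Phi(\theta)-C$ with $\Phi$ non-increasing---makes the MLRP dependence transparent and actually delivers a bit more: single-peakedness of $\sumpayoff$ over \emph{all} monotone strategies, not just the three-point comparison. One tightening: your last paragraph leans on strict monotonicity of $\Phi$ to dispose of the equality case $g(\theta')=g(\theta)$, but that is unnecessary and slightly misdirected (the equality can also arise when the peak sits strictly inside $(\theta,\theta')$, with no flat region). The clean fix is: since $g$ is absolutely continuous and $\int_\theta^{\theta'} g'\leq 0$ with $\theta<\theta'$, there is an interior $\theta_0\in(\theta,\theta')$ with $g'(\theta_0)\leq 0$, hence $\Phi(\theta_0)\leq C$; weak monotonicity of $\Phi$ then gives $\Phi\leq C$ and $g'\leq 0$ on $(\theta_0,m)$, so $g$ is non-increasing on $[\theta',m]$ and $g(\theta'')\leq g(\theta')$ follows directly.
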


\begin{proof}
    For the three strategies $\strat'', \strat'$, and $\strat$, consider three sets $Z, Z', Z'' \subset \left( S \times [0,1] \right)^n$ where the seller's score profile $\mathbf{z}$ might lie:
    \begin{equation*}
        \mathbf{z} \in
        \begin{cases}
            Z & \textrm{if } \mathbf{z} \textrm{ trades with some buyer under } \strat'' \textrm{ but not } \strat \\
            Z' & \textrm{if } \mathbf{z} \textrm{ trades with some buyer under } \strat' \textrm{ but not } \strat \\
            Z'' & \textrm{if } \mathbf{z} \textrm{ trades with some buyer under } \strat'' \textrm{ but not } \strat' 
        \end{cases}
    \end{equation*}
    Notice that $Z' \cap Z'' = \emptyset$ and $Z' \cup Z'' = Z$.
    We can write the difference between total surplus under different strategies as:
        \begin{equation*}
            \sumpayoff ( \strat'; \signalstr ) - \sumpayoff ( \strat; \signalstr )
            =
            \Prob \left( 
                \mathbf{z} \in Z'
            \right)
            \times
            \left[ 
                \Prob \left( 
                    \quality = H \mid
                    \mathbf{z} \in Z'
                \right)
                - c
            \right]
        \end{equation*}
        and:
        \begin{equation*}
            \sumpayoff ( \strat''; \signalstr ) - \sumpayoff ( \strat'; \signalstr )
            =
            \Prob \left( 
                \mathbf{z} \in Z''
            \right)
            \times
            \left[ 
                \Prob \left( 
                    \quality = H \mid
                    \mathbf{z} \in Z''
                \right)
                - c
            \right]
        \end{equation*}
        Therefore we want to prove that:
        \begin{equation*}
            \Prob \left( \quality = H \mid \mathbf{z} \in Z' \right) \leq c
            \hspace{0.1cm}
            \implies
            \hspace{0.1cm}
            \Prob \left( \quality = H \mid \mathbf{z} \in Z'' \right) \leq c
        \end{equation*}

        Now, note that $\Prob \left( \quality = H \mid \mathbf{z} \in Z \right)$ is a convex combination of $\Prob \left( \quality = H \mid \mathbf{z} \in Z' \right)$ and $\Prob \left( \quality = H \mid \mathbf{z} \in Z'' \right)$. Furthermore:
        \begin{equation*}
            \Prob \left( \quality = H \mid \mathbf{z} \in Z \right) \geq \Prob \left( \quality = H \mid \mathbf{z} \in Z \cap Z'' \right) = \Prob \left( \quality = H \mid \mathbf{z} \in Z'' \right)
        \end{equation*}
        which then implies:
        \begin{equation*}
            \Prob \left( \quality = H \mid \mathbf{z} \in Z'' \right) \leq \Prob \left( \quality = H \mid \mathbf{z} \in Z \right) \leq \Prob \left( \quality = H \mid \mathbf{z} \in Z' \right) \leq c
        \end{equation*}
        
\end{proof}

\thmextensive*

\begin{proof}
    For each $j \in \left\{ 1,2,...,m \right\}$, define $\strat_j$ to be the strategy:
    \begin{equation*}
        \strat_j (s) := 
        \begin{cases}
            0 & \typesig < \typesig_j \\
            1 & \typesig \geq \typesig_j
        \end{cases}
    \end{equation*}
    Moreover, let $F_{\quality} (x) = \sum\limits_{\typesig \in \signalset: \typesig \leq x } p_{\quality} (x) $. In a market with $n$ buyers, the interim belief $\interim_{j:n}$ consistent with buyers using the strategies $\strat_j$ is then implicitly given by:
    \begin{equation*}
        \frac{\interim_{ j:n }}{1 - \interim_{ j:n }} = \frac{ \sum_{k=0}^{n-1} F_H(s_{j-1})^k }{ \sum_{k=0}^{n-1} F_L(s_{j-1})^k }
    \end{equation*}
    Note that for all $j > 1$, the RHS is bounded and strictly decreasing in $n$, so the sequence 
    $\left\{ \interim_{ j:n } \right\}$ is convergent. 

    \noindent \textbf{Case 1:} $s_m = 1$.

    To prove the Theorem's statement for this case, I first show that $\interim_{m:n} \overset{n}{\to} 0$. 
    Let $X_n$ be the random variable that is uniformly distributed over the set $\left\{ F_H(s_{m-1})^k \right\}_{k=0}^{n-1}$. Then, note that:
    \begin{equation*}
        \frac{\interim_{ m:n }}{1 - \interim_{ m:n }} = \frac{ \sum_{k=0}^{n-1} F_H(s_{m-1})^k }{ \sum_{k=0}^{n-1} F_L(s_{m-1})^k } = \Exp \left[ X_n \right]
    \end{equation*}
    Now, fix any $x > 0$. Since $F_H(s_{m-1})^k$ is strictly decreasing in $k$, for any $\delta < 1$ of our choice, we can find some $N_{x;\delta} \in \mathbb{N}$ such that for all $n \geq N_{x;\delta}$ implies $\Prob \left( X_n \leq x \right) \geq \delta$ and $\Exp \left[ X_n \right] \leq \delta x + (1 - \delta)$. 
    Fixing $x = \frac{\varepsilon}{2 \delta}$ for some $\varepsilon > 0$ small, we have $\Exp \left[ X_n \right] \leq \frac{\varepsilon}{2} + 1 - \delta$. Since we can take $\delta$ arbitrarily close to 1, this shows that $\Exp \left[ X_n \right] \to 0$, proving this first claim. 

    Since buyers must always accept to trade upon observing $s_m = 1$ in equilibrium, this implies that there is some $N \in \mathbb{N}$ for which $\strat_m$ is the most selective equilibrium strategy for all $n \geq N$. So, for $n \geq N$,  a seller with a Low quality asset never trades. Moreover, every additional buyer increases the probability that a seller with a High quality asset trades. As $n \to \infty$, such a seller trades almost surely. We thus prove that $\sumpayoff^n \left( \hat{\strat}; \experiment \right) \to \sumpayoff^f$.

    \noindent \textbf{Case 2:} $s_m < 1$

    The case where buyers' experiment $\experiment$ is uninformative is trivial; it always yields the no-information benchmark. So, I assume that $s_m \neq s_1$. 
    
For any $j \in \left\{1,2,...,m\right\}$, the sequence $\left\{ \frac{ \interim_{j;n} }{ 1 - \interim_{j;n} } \times \frac{s_j}{1 - s_j} \right\}_{n=1}^{\infty}$ is bounded and monotone decreasing, thus convergent. Let $\mathcal{L}_j$ be the limit of this sequence. If $\mathcal{L}_m < \frac{c}{1-c}$, by Lemma \ref{lem:eqm_algorithm}, there is some $N' \in \mathbb{N}$ such that for all $n \geq N'$, the most selective equilibrium is more selective than $\strat_m$. So, buyers must be indifferent when they trade---expected trade surplus must be 0. Since total surplus is bounded below by $\sumpayoff^{\emptyset}$, we conclude that $\sumpayoff^n \left( \hat{\strat}; \experiment \right) = \sumpayoff^{\emptyset} = 0$ for all $n \geq N'$. 

    Now consider the case $\mathcal{L}_m \geq \frac{c}{1-c}$. Since $\interim_{j:n}$ is decreasing in $n$, the most selective equilibrium with $n$ buyers, $\hat{\strat}_n$, must get weakly more selective with $n$. So, the sequence $\left\{ r_{\quality} (\hat{\strat}_n; \experiment) \right\}_{n=1}^{\infty}$ is weakly increasing in $n$, convergent, and Cauchy. If for any $N \in \mathbb{N}$, $\hat{\strat}_N$ is more selective than $\strat_m$, we are done by the argument in the preceding paragraph. Otherwise, the sequence $\left\{ r_{\quality} (\hat{\strat}_n; \experiment) \right\}_{n=1}^{\infty}$ converges to a number below $1$.
    If are both constant at 0, we are done---total surplus is equal to that under the no-information benchmark along the sequence. Otherwise, $r_{L} (\hat{\strat}_n; \experiment) > r_{H} (\hat{\strat}_n; \experiment)$ along the sequence. Since both sequences are Cauchy, there exists some $N \in \mathbb{N}$ and $M \geq N$ such that for all $m \geq M$, we have:
    \begin{align*}
        &\prior \times (1 - c) \times \left[ 1 -  r_H (\hat{\strat}_m; \experiment) \right]^{m}  - ( 1 - \prior ) \times c \times \left[ 1 - r_L (\hat{\strat}_m; \experiment)^{m} \right] 
        \\
        \approx
        &\prior \times (1 - c) \times \left[ 1 -  r_H (\hat{\strat}_N; \experiment) \right]^{m}  - ( 1 - \prior ) \times c \times \left[ 1 - r_L (\hat{\strat}_N; \experiment)^{m} \right]
        \\
        >
        &\prior \times (1 - c) \times \left[ 1 -  r_H (\hat{\strat}_N; \experiment) \right]^{m+1}  - ( 1 - \prior ) \times c \times \left[ 1 - r_L (\hat{\strat}_N; \experiment)^{m+1} \right]
        \\
        \approx
        &\prior \times (1 - c) \times \left[ 1 -  r_H (\hat{\strat}_{m+1}; \experiment) \right]^{m+1}  - ( 1 - \prior ) \times c \times \left[ 1 - r_L (\hat{\strat}_{m+1}; \experiment)^{m+1} \right]
    \end{align*}
    This proves that total surplus is eventually decreasing. 

    Furthermore, since there is at least one outcome of the experiment $\experiment$ where a buyer surely trades with the seller, as $n \to \infty$, the seller trades almost surely regardless of quality. Hence, total surplus converges to $\prior - c$. Since total surplus can never be negative, it must be that $\sumpayoff^{\emptyset} = \prior - c$ in this case. 

\end{proof}

\thmintensivebinary*

I will use 
Lemmas \ref{lem:poolbelief_approval}, \ref{lem:binary_nolowmixing}, and \ref{lem:binary_eqmregions} below, of independent interest, to prove Theorem \ref{thm:intensive_binary}. Throughout, I denote the most and least selective equilibrium strategies under the experiment $\signalstr$ as $\hat{\strat_{\experiment}^*}$ and $\check{\strat_{\experiment}^*}$, respectively. I drop the subscript whenever the experiment in question is obvious. 

\begin{lem}
    \label{lem:poolbelief_approval}
    Let $\signalstr$ be a binary experiment, with outcomes in $\signalset = \{s_L, s_H\}$; $s_L \leq s_H$. 
    $\interimfcn(\strat; \signalstr)$ is:
    \begin{enumerate}[label=\roman*]
        \item strictly increasing in $\strat(s_L)$, whenever $\strat(s_H) = 1$,
        \item strictly decreasing in $\strat(s_H)$ whenever $\strat(s_L) = 0$.
    \end{enumerate}
\end{lem}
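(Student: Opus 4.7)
My plan is to reduce both monotonicity claims to monotonicity of a single real-valued function of the per-visit rejection probability $r$, which can then be verified by either a first-order stochastic dominance argument or a direct derivative computation.

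First, since $\frac{\interim}{1-\interim} = \frac{\prior}{1-\prior}\cdot\frac{\nu_H(\strat;\experiment)}{\nu_L(\strat;\experiment)}$ and $\prior$ is fixed, the sign of any variation of $\interimfcn(\strat;\experiment)$ coincides with that of the likelihood ratio $\nu_H/\nu_L$. Writing $P(r):=\sum_{k=0}^{n-1} r^k$, the definition of $\nu_\quality$ in Section \ref{section:equilibria} gives $\nu_\quality(\strat;\experiment) = P(r_\quality)/n$, so logarithmic differentiation yields
\[
    \frac{d\log\nu_\quality}{d\strat(s_i)}
    \;=\;
    \frac{P'(r_\quality)}{P(r_\quality)}\cdot\frac{d r_\quality}{d\strat(s_i)}
    \;=\;
    -\,p_\quality(s_i)\,\frac{P'(r_\quality)}{P(r_\quality)}
\]
for $i\in\{L,H\}$. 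The sign of $\frac{d}{d\strat(s_i)}\log(\nu_H/\nu_L)$ is thus determined by comparing $p_H(s_i)\,P'(r_H)/P(r_H)$ with $p_L(s_i)\,P'(r_L)/P(r_L)$.

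The crucial step is to eliminate $p_\quality(s_i)$ using the boundary condition that defines each case. In case (i), $\strat(s_H)=1$ gives $r_\quality = p_\quality(s_L)(1-\strat(s_L))$, so $p_\quality(s_L) = r_\quality/(1-\strat(s_L))$ and the comparison reduces to the sign of $f(r_L)-f(r_H)$, where $f(r):=rP'(r)/P(r)$. In case (ii), $\strat(s_L)=0$ gives $1-r_\quality = p_\quality(s_H)\,\strat(s_H)$, so $p_\quality(s_H)=(1-r_\quality)/\strat(s_H)$ and the comparison reduces to the sign of $g(r_H)-g(r_L)$, where $g(r):=(1-r)P'(r)/P(r)$. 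In both cases $r_H<r_L$ on the interior of the relevant parameter range: in case (i) because $p_H(s_L)<p_L(s_L)$, and in case (ii) because $p_H(s_H)>p_L(s_H)$. It therefore suffices to show that $f$ is strictly increasing and $g$ is strictly decreasing on $(0,1)$; the strict monotonicity of $\interim$ on the closed interval then follows by continuity of $\interimfcn(\,\cdot\,;\experiment)$.

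For $f$: note $f(r) = \sum_{k=0}^{n-1} k r^k \big/ \sum_{k=0}^{n-1} r^k$ is the mean of $K$ under the distribution $\pi_r(k)\propto r^k$ on $\{0,1,\dots,n-1\}$, and for $r'>r$ the ratio $\pi_{r'}(k)/\pi_r(k)\propto (r'/r)^k$ is increasing in $k$, so $\pi_{r'}$ first-order stochastically dominates $\pi_r$ and $f(r')>f(r)$. For $g$: the identity $(1-r)P(r)=1-r^n$ rewrites $g(r)=1 - n r^{n-1}/P(r)$; the numerator of the derivative of $nr^{n-1}/P(r)$ simplifies to $n r^{n-2}\bigl[(n-1)P(r)-rP'(r)\bigr] = n r^{n-2}\sum_{k=0}^{n-1}(n-1-k)r^k$, which is strictly positive for $r\in(0,1)$, so $nr^{n-1}/P(r)$ is strictly increasing and $g$ is strictly decreasing. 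The main conceptual obstacle is spotting the substitutions that reduce the two-variable comparison (in $p_\quality$ and $r_\quality$) to a one-variable monotonicity of $f$ or $g$; without them, one is stuck comparing a product of two $\quality$-dependent factors that move in opposite directions.
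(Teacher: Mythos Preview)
Your proof is correct. The route differs from the paper's in an interesting way.

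The paper treats the two parts by separate arguments. For part (i) it writes $\interimfcn$ as a weighted average of the posteriors $\Prob(\quality=H\mid i\text{ prior }s_L\text{ signals})$, observes that the weights are proportional to $\Prob(i\text{ }s_L)\,(1-\strat(s_L))^i$, and argues that raising $\strat(s_L)$ shifts mass toward smaller $i$ in the likelihood-ratio (hence FOSD) sense, which raises the average. For part (ii) it first simplifies $\nu_H/\nu_L$ to a constant times $\tfrac{1-r_H^n}{1-r_L^n}$ using $(1-r_\quality)=p_\quality(s_H)\strat(s_H)$, differentiates that ratio directly, and reduces the sign to the elementary inequality $\tfrac{P(r_H)}{P(r_L)}>(r_H/r_L)^{n-1}$.

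Your approach is more unified: the substitutions $p_\quality(s_L)=r_\quality/(1-\strat(s_L))$ and $p_\quality(s_H)=(1-r_\quality)/\strat(s_H)$ collapse both comparisons to monotonicity of the single-variable functions $f(r)=rP'(r)/P(r)$ and $g(r)=(1-r)P'(r)/P(r)$. The FOSD step you use for $f$ (on the distribution $\pi_r(k)\propto r^k$) is a different stochastic-dominance argument from the paper's (which is on the distribution of the number of prior rejections), and your identity $g(r)=1-nr^{n-1}/P(r)$ gives a cleaner derivative than the paper's direct computation. The trade-off is that your argument is more algebraic and less interpretable, whereas the paper's part (i) makes the mechanism transparent: more acceptance at $s_L$ means the seller is likely to have been rejected fewer times before reaching a given buyer. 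Both proofs implicitly require $n\geq 2$ and $s_L<s_H$ for strictness, which the lemma statement does not make explicit.
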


\begin{proof}
    \textit{Part i:}

    \noindent
    Let $\strat (s_L) \in (0,1)$ and $\strat (s_H) = 1$. 
    The interim belief $\interimfcn (\strat; \experiment)$ is then given by:
    \begin{align*}
        \interimfcn ( \strat; \signalstr ) &= 
        \Prob \left( \quality = H \mid
        \textrm{visit received} \right)
        \\
        &= \sum\limits_{i=0}^{n-1} \Prob ( \textrm{visited after i\textsuperscript{th} rejection} \mid \textrm{visit received} ) \times \Exp \left[ \quality = H \mid \textrm{i } s_L \textrm{ signals} \right]
        \\
        &= \sum\limits_{i=0}^{n-1} 
        \frac{
            \Prob ( \textrm{visited after i\textsuperscript{th} rejection})
        }{
            \Prob ( \textrm{visit received} )
        }
        \times \Exp \left[ \quality = H \mid \textrm{i } s_L \textrm{ signals}
        \right]
    \end{align*}

    \noindent
    Note that $\Exp \left[ \quality = H \mid \textrm{i } s_L \textrm{ signals} \right] < \Exp \left[ \quality = H \mid \textrm{i+1 } s_L \textrm{ signals} \right]$; since every $s_L$ signal is further evidence for $\quality = L$. We have:
    \begin{align*}
        \Prob ( \textrm{ visited after i\textsuperscript{th} rejection} ) = 
        &\Prob \left( 
            \textrm{buyer was (i+1)\textsuperscript{st} in order } \mid \textrm{seller got i rejections}
        \right)
        \\
        &\times
        \Prob \left( \textrm{seller got i rejections} \right)
        \\
        = &\frac{1}{n} \times \Prob ( \textrm{i } s_L \textrm{ signals} ) \times \left[ 1 - \strat(s_L) \right]^{i}
    \end{align*}
    The proof is completed by noting that:
    \begin{equation*}
        \frac{
            \Prob ( \textrm{ visited after (i+1)\textsuperscript{st} rejection} ) 
        }{
            \Prob ( \textrm{ visited after i\textsuperscript{th} rejection} ) 
        } =
        \frac{
            \Prob ( \textrm{i+1 } s_L \textrm{ signals} )
        }{
            \Prob ( \textrm{i } s_L \textrm{ signals} ) 
        }
        \times [1 - \strat(s_L) ]
    \end{equation*}
    decreases, and thus $\interimfcn(\strat; \experiment)$ increases, in $\strat (s_L)$.

    \noindent
    \textit{Part ii:}

    \noindent
    Now take $\strat(s_L) = 0$. We then have:
    \begin{align*}
        r_H (\strat; \signalstr) &= 1 - p_H(s_H) \strat(s_H)
        &
        r_L (\strat; \signalstr) &= 1 - p_L(s_H) \strat(s_H)
    \end{align*}
    and:
    \begin{align*}
        \interimfcn(\strat; \signalstr) &\propto \frac{1 + r_H + ... + r_H^{n-1}}{1 + r_L + ... + r_L^{n-1}}
        \\
        &=
        \frac{
            1 - r_H^{n}
        }{
            1 - r_L^{n}
        }
        \times
        \frac{
            1 - r_L
        }{
            1 - r_H
        } = 
        \frac{
            1 - r_H^{n}
        }{
            1 - r_L^{n}
        }
        \times
        \frac{
            p_L(s_H)
        }{
            p_H(s_H)
        }
        \\
        &\propto \frac{
            1 - r_H^{n}
        }{
            1 - r_L^{n}
        } = 
        \frac{
            1 - \left( 1 - p_H(s_H) \strat(s_H) \right)^n
        }{
            1 - \left( 1 - p_L(s_H) \strat(s_H) \right)^n
        }
    \end{align*}
    Differentiating the last expression with respect to $\strat(s_H)$ and rearranging its terms reveals that this derivative is proportional to:
    \begin{equation*}
        \frac{s_H}{1-s_H}
        \times
        \left( 
            \frac{
                 r_H 
            }{
                 r_L 
            }
        \right)^{n-1}
        -
        \frac{
            1 - \left( r_H \right)^n
        }{
            1 - \left( r_L \right)^n
        }
    \end{equation*}
    The positive term is
    the likelihood ratio of one $s_H$ signal and $n-1$ rejections, and the negative term is the likelihood ratio from \textit{at most} $n-1$ rejections. Since acceptances only happen with $s_H$ signals, the negative term strictly exceeds the positive term. This can be verified directly, too:
    \begin{align*}
        \frac{
            1 - (r_H)^n
        }{
            1 - (r_L)^n
        }
        >
        \frac{s_H}{1-s_H}
        \times
        \left(
            \frac{r_H}{r_L}
        \right)^{n-1}
        &\iff
        \frac{
            1 - (r_H)^n
        }{
            1 - (r_L)^n
        }
        \times
        \frac{1 - r_L}{1 - r_H}
        >
        \left( \frac{r_H}{r_L} \right)^{n-1}
        \\
        &\iff
        \frac{
            1 + ... + (r_H)^{n-1}
        }{
            1 + ... + (r_L)^{n-1}
        }
        >
        \left( \frac{r_H}{r_L} \right)^{n-1}
    \end{align*}
    The last inequality can be verified easily. Thus, $\interimfcn(\strat; \signalstr)$ decreases in $\strat(s_H)$. 
    
\end{proof}

The Corollary below follows from Lemma \ref{lem:poolbelief_approval}. Let both $\experiment'$ and $\experiment$ are binary experiments, where the former is Blackwell more informative than the latter. If, under both experiments, every buyer accepts upon ``good news'' and rejects upon ``bad news'', the interim belief under $\experiment'$ is lower. 

\begin{cor}
    \label{cor:poolbelief_information}
    Let $\experiment'$ and $\experiment$ be two binary experiments, where the former is Blackwell more informative than the latter. Let the strategies $\strat'$ and $\strat$ for these respective experiments be defined as:
    \begin{align*}
        \strat' (s') &:= 
        \begin{cases}
            0 & s' = s_L' \\
            1 & s' = s_H'
        \end{cases}
        &
        \strat (s) &:= 
        \begin{cases}
            0 & s = s_L \\
            1 & s = s_H
        \end{cases}
    \end{align*}
    Then, $\interimfcn \left( \strat'; \signalstr' \right) \leq \interimfcn \left( \strat; \signalstr \right)$.
    
\end{cor}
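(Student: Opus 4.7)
The plan is to reduce the general Blackwell improvement to two pure steps---a ``stronger bad news'' step ($s_L$ decreases while $s_H$ stays fixed) and a ``stronger good news'' step ($s_H$ increases while $s_L$ stays fixed)---by introducing an intermediate binary experiment $\experiment^{\dagger}$ whose outcomes carry the labels $(s_L', s_H)$. Since $s_L' \leq s_L \leq s_H \leq s_H'$, the experiment $\experiment^{\dagger}$ is well-defined and, by Blackwell's characterisation for binary experiments, sits between $\experiment$ and $\experiment'$ in the informativeness order. Because the claim of the corollary is transitive along Blackwell chains, it suffices to prove it for the two pure cases; the associated strategy on $\experiment^{\dagger}$ is defined in the obvious way.

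In each pure case, the approach is to express $\experiment$ as a Blackwell garbling of $\experiment'$ via some Markov matrix $T = (t_{ij})_{i,j \in \{L,H\}}$ satisfying $p_{\quality}(s_j) = \sum_i t_{ij}\, p_{\quality}'(s_i')$ for both $\quality \in \{L, H\}$. Lifting $\strat$ to the strategy $\tilde{\strat}(s_i') := \sum_j t_{ij}\, \strat(s_j)$ on $\experiment'$ yields equal acceptance probabilities conditional on quality, hence $r_{\quality}(\strat; \experiment) = r_{\quality}(\tilde{\strat}; \experiment')$ for $\quality \in \{L, H\}$, and consequently $\interimfcn(\strat; \experiment) = \interimfcn(\tilde{\strat}; \experiment')$. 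In the stronger-bad-news case, the identity $s_H = s_H'$ forces $t_{LH} = 0$: writing $s_H$ as a weighted average of $s_L'$ and $s_H'$ through the garbling shows that any positive weight on $s_L' < s_H'$ would drop the average below $s_H'$. Hence $t_{LL} = 1$ and $\tilde{\strat}(s_L') = 0$, $\tilde{\strat}(s_H') = t_{HH} \leq 1 = \strat'(s_H')$, so Lemma \ref{lem:poolbelief_approval} Part ii delivers $\interimfcn(\tilde{\strat}; \experiment') \geq \interimfcn(\strat'; \experiment')$. The stronger-good-news case is symmetric: $s_L = s_L'$ forces $t_{HL} = 0$, yielding $\tilde{\strat}(s_H') = 1$ and $\tilde{\strat}(s_L') = t_{LH} \geq 0 = \strat'(s_L')$, to which Lemma \ref{lem:poolbelief_approval} Part i applies.

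The main obstacle I expect to face is verifying cleanly that matching one of the two likelihood ratios across $\experiment$ and $\experiment'$ pins down the relevant column of the garbling matrix, so that the induced strategy $\tilde{\strat}$ falls into one of the two special configurations (either $\strat(s_L) = 0$ or $\strat(s_H) = 1$) handled by Lemma \ref{lem:poolbelief_approval}---direct monotonicity of $\interimfcn$ in selectivity on arbitrary monotone strategies is not supplied by that lemma and fails to admit an elementary proof in general. The decomposition through $\experiment^{\dagger}$ circumvents this, ensuring that such a matched likelihood ratio is available at each step, after which transitivity closes the argument.
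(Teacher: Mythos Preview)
Your proof is correct and follows essentially the same route as the paper: decompose into the two pure cases (stronger good news, stronger bad news), construct on $\experiment'$ a strategy $\tilde{\strat}$ that replicates the rejection probabilities of $(\strat,\experiment)$ and lands in one of the special configurations ($\tilde\strat(s_H')=1$ or $\tilde\strat(s_L')=0$), then invoke the relevant part of Lemma~\ref{lem:poolbelief_approval}. The only cosmetic difference is how the replicating strategy is obtained: the paper argues by continuity that some $\tilde\strat(s_L')\in(0,1)$ matches the acceptance likelihood ratio, whereas you read $\tilde\strat$ off the garbling matrix directly and pin down its structure by the mediant observation that a matched extreme likelihood ratio forces the corresponding off-diagonal entry of $T$ to vanish.
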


\begin{proof}
    Establishing that this holds for a pair $(\experiment', \experiment)$ for which either (i) $s_H' > s_H$ and $s_L = s_L'$, or (ii) $s_H' = s_H$ and $s_L > s_L'$ suffices. I will only prove the first case, the second is analogous. 
    Below I show that the outcome induced by $\strat$ under experiment $\signalstr$ can be replicated by some strategy $\Tilde{\strat}$ under experiment $\signalstr'$, where $\Tilde{\strat} (s_L) > 0$ and $\Tilde{\strat}(s_H) = 1$. Then, the desired conclusion follows from Lemma \ref{lem:poolbelief_approval}. 

    Take the pair $(\strat, \signalstr)$. The probabilities that a buyer accepts or rejects trade, conditional on $\quality$, is given by:
    \begin{align*}
        \frac{
            \Prob_{\strat} \left( \textrm{rejected}  \mid \quality = H \right)
        }{
            \Prob_{\strat} \left( \textrm{rejected}  \mid \quality = L \right)
        } &= \frac{s_L}{1-s_L}
        &
        \frac{
            \Prob_{\strat} \left( \textrm{accepted}  \mid \quality = H \right)
        }{
            \Prob_{\strat} \left( \textrm{accepted}  \mid \quality = L \right)
        }
        &=
        \frac{s_H}{1-s_H}
    \end{align*}
    For the pair $( \Tilde{\strat}, \signalstr' )$ where $\Tilde{\strat}(s_H') = 1$, we have:
    \begin{align*}
        \frac{
            \Prob_{\Tilde{\strat}} \left( \textrm{rejected}  \mid \quality = H \right)
        }{
            \Prob_{\Tilde{\strat}} \left( \textrm{accepted} \mid \quality = L \right)
        } &= \frac{s_L}{1-s_L}
        &
        \frac{
            \Prob_{\Tilde{\strat}} \left( \textrm{accepted} \mid \quality = H \right)
        }{
            \Prob_{\Tilde{\strat}} \left( \textrm{accepted}  \mid \quality = L \right)
        }
        &=
        \frac{p_H'(s_H) + \Tilde{\strat} (s_L) p_H'(s_L)
        }{p_L'(s_H) + \Tilde{\strat} (s_L) p_L'(s_L)}
    \end{align*}
    where $\{ p_L', p_H' \}$ are the distributions for the experiment $\signalstr'$. It is easy to verify that the expression on the right falls from $\frac{s_H'}{1-s_H'}$ to 1 monotonically and continuously as $\Tilde{\strat} (s_L)$ rises from 0 to 1. Thus, there is a unique interior value of $\Tilde{\strat} (s_L)$ that replicates the outcome of $(\strat; \signalstr)$. 
    
\end{proof}

\begin{lem}
    \label{lem:binary_nolowmixing}
    Let $\signalstr$ be a binary experiment, with outcomes in $\signalset = \{s_L, s_H\}$; $s_L \leq s_H$. Let $\strat^{\star}$ be buyers' equilibrium strategy, either in the most or least selective equilibrium. Then, $\strat^{\star} (s_L) \in \{0,1\}$.
\end{lem}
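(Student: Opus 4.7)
The plan is to argue by contradiction: assume $\strat^{\star}(s_L) \in (0,1)$, and construct another equilibrium that is strictly less selective than $\strat^{\star}$ (ruling out $\strat^{\star}$ being least selective) or strictly more selective (ruling out $\strat^{\star}$ being most selective). By monotonicity (\Cref{prop:eqmexist}, Part 2), $\strat^{\star}(s_L) > 0$ forces $\strat^{\star}(s_H) = 1$, and mixing at $s_L$ forces indifference: $\frac{\interim^{\star}}{1-\interim^{\star}} \cdot \frac{s_L}{1-s_L} = \frac{c}{1-c}$, where $\interim^{\star} = \interimfcn(\strat^{\star}; \experiment)$.

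For the \emph{least selective} direction, consider the ``always accept'' strategy $\strat_1 \equiv 1$. When all buyers always accept, the seller always trades in his first visit, so the receiving buyer knows she is first and her interim belief equals $\prior$. By \Cref{prop:eqmexist}, Part 3, $\interim^{\star} \leq \prior$, and since $\strat^{\star}(s_L) \in (0,1)$ presumes strict adverse selection\footnote{If $\interim^{\star} = \prior$, one can check that equality in \Cref{prop:eqmexist} Part 3 forces $r_H = r_L$, which is incompatible with a monotone strategy that separates $s_L$ from $s_H$.}, we have $\interim^{\star} < \prior$, so the posterior at $s_L$ under $\prior$ strictly exceeds $c$; a fortiori so does the posterior at $s_H$. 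Thus $\strat_1$ is itself an equilibrium strategy, and since $\strat_1(s_L) = 1 > \strat^{\star}(s_L)$, it is strictly less selective than $\strat^{\star}$, contradicting the hypothesis.

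For the \emph{most selective} direction, consider the one-parameter family $\strat_{0,x}$, $x \in [0,1]$, with $\strat_{0,x}(s_L) = 0$ and $\strat_{0,x}(s_H) = x$. Each such strategy is strictly more selective than $\strat^{\star}$ at $s_L$. By \Cref{lem:poolbelief_approval} Part i, $\interimfcn(\strat_{0,1}; \experiment) < \interim^{\star}$, so the posterior at $s_L$ at interim $\interimfcn(\strat_{0,1}; \experiment)$ is strictly below $c$; the only way $\strat_{0,1}$ fails to be an equilibrium is therefore that the posterior at $s_H$ falls strictly below $c$. Define the continuous function
\[
g(x) \ := \ \frac{\interimfcn(\strat_{0,x}; \experiment)}{1-\interimfcn(\strat_{0,x}; \experiment)} \cdot \frac{s_H}{1-s_H} \ - \ \frac{c}{1-c}.
\]
If $g(1) \geq 0$, then $\strat_{0,1}$ is itself an equilibrium, and we are done. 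Otherwise $g(1) < 0$, while at $x = 0$ no buyer ever accepts so the interim belief is $\prior$, giving $g(0) > 0$ (since the posterior at $s_L$ under $\prior$ already exceeds $c$, and $s_H \geq s_L$). By the intermediate value theorem, there is $x^{*} \in (0,1)$ with $g(x^{*}) = 0$; at this value, buyers are indifferent at $s_H$ and strictly reject at $s_L$, so $\strat_{0,x^{*}}$ is an equilibrium strategy strictly more selective than $\strat^{\star}$, again a contradiction.

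The main obstacle is making sure that the candidate alternative equilibrium genuinely exists: the easy direction (constructing $\strat_1$) only uses monotonicity of the posterior in the interim belief, but the harder direction requires both \Cref{lem:poolbelief_approval} (to get $\interimfcn(\strat_{0,1}; \experiment) < \interim^{\star}$) and a continuity/IVT step, and one must carefully verify that at the constructed $x^{*}$ the low-signal rejection remains strictly optimal (which follows because the posterior at $s_H$ equals $c$ and $s_L < s_H$).
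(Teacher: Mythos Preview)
Your proof is correct and follows essentially the same route as the paper: both directions hinge on \Cref{lem:poolbelief_approval}, and your least-selective argument is the contrapositive of the paper's case split on $s_L^{\textrm{mute}}$. Your most-selective argument is actually more careful than the paper's: where the paper simply asserts that the fall in the interim belief ``implies there must be another, more selective equilibrium strategy $\strat^{*}$ such that $\strat^{*}(s_L)=0$ and $\strat^{*}(s_H)=1$,'' you explicitly handle the possibility that $\strat_{0,1}$ fails to be an equilibrium (because the posterior at $s_H$ may drop below $c$) via the IVT on $g$.

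One small wrinkle: your footnote claim that $r_H=r_L$ is ``incompatible with a monotone strategy that separates $s_L$ from $s_H$'' is not quite right. With $\strat^{\star}(s_H)=1$ and $\strat^{\star}(s_L)\in(0,1)$ one has $r_\theta=p_\theta(s_L)\bigl(1-\strat^{\star}(s_L)\bigr)$, so $r_H=r_L$ exactly when $s_L=\tfrac{1}{2}$; the strategy still separates the two signals. In that degenerate case $\interim^{\star}=\prior$, indifference forces $\prior=c$, and the always-accept strategy is still an equilibrium (now with weak rather than strict optimality at $s_L$), so your conclusion survives. The paper's case split on $s_L^{\textrm{mute}}$ handles this boundary cleanly without needing the strict inequality $\interim^{\star}<\prior$.
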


\begin{proof}
    
    I start by proving this for the least selective equilibrium; i.e. $\check{\strat^*}(s_L) \in \{ 0,1 \}$.  For $s_L^{\textrm{mute}}$ defined in Definition \ref{defn:mute}, observe that when $s_L \geq s_L^{\textrm{mute}}$, $\strat(s_L) = \strat(s_H) = 1$ is an equilibrium; so we must have $\check{\strat^*}(s_L) = 1$. The strategy $\strat$ defined by $\strat(s_H) = \strat(s_L) = 1$ gives rise to the interim belief $\interimfcn (\strat; \experiment) = \prior$, which in turn renders approving upon the outcome $s_L$ optimal. In turn, if $s_L < s_L^{\textrm{mute}}$, we must have $\strat^* (s_L) = 1$ for any equilibrium strategy; since the equilibrium interim belief always lies below the prior belief (Proposition \ref{prop:eqmexist}).
    
    Now consider the most selective equilibrium strategy; $\hat{\strat^*}$.
    For contradiction, let $1 > \hat{\strat^*}(s_L) > 0$ and $\hat{\strat^*}(s_H) = 1$. Lemma \ref{lem:poolbelief_approval} establishes that the interim belief falls as $\strat(s_L)$ falls; which implies there must be another, more selective equilibrium strategy $\strat^*$ such that $\strat^* (s_L) = 0$ and $\strat^* (s_H) = 1$.
    
\end{proof}

\noindent
Lemma \ref{lem:binary_nolowmixing} establishes that
when their experiment $\experiment$ is binary,
buyers \textit{never} mix upon seeing ``bad news'', $s = s_L$, neither in the most nor the least selective equilibrium. Following up, Lemma \ref{lem:binary_eqmregions} establishes that a more informative binary experiment pushes buyers to reject upon bad news in both equilibria. 

\begin{lem}
    \label{lem:binary_eqmregions}
    Let $\signalstr$ be a binary experiment, with outcomes in $\signalset = \{ s_L, s_H \}$; $s_L \leq s_H$. Buyers' acceptance probabilities upon ``bad news'', $s = s_L$, in the least and most selective equilibrium strategies are given by:
    \begin{align*}
        \check{\strat^*}(s_L) &=
            \begin{cases}
                1 & s_L \geq s_L^{\textrm{mute}} \\
                0 & s_L < s_L^{\textrm{mute}}
            \end{cases}
        &
        \hat{\strat^{*}} (s_L) &=
            \begin{cases}
                1 & s_L < s_L^{\dag} (s_H) \\
                0 & s_L \geq s_L^{\dag} (s_H)
            \end{cases}
    \end{align*}
    where $s_L^{\dag} (.)$ is an increasing function of $s_H$, and $s_L^{\dag} (s_H) \geq s_L^{\textrm{safe}}$.
\end{lem}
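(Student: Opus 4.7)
The least selective case is immediate from the bound $\interim^* \leq \prior$ established in Proposition \ref{prop:eqmexist} together with Definition \ref{defn:mute}. If $s_L \geq s_L^{\textrm{mute}}$, the strategy $\strat \equiv 1$ induces $\interim = \prior$ (the seller always trades with the first buyer) and is an equilibrium, hence coincides with $\check{\strat^*}$. If $s_L < s_L^{\textrm{mute}}$, the bound $\interim^* \leq \prior$ forces the posterior upon $s_L$ strictly below $c$ in every equilibrium, so $\strat^*(s_L) = 0$ for every equilibrium strategy $\strat^*$, and in particular $\check{\strat^*}(s_L) = 0$.

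For the most selective equilibrium, Lemma \ref{lem:binary_nolowmixing} restricts $\hat{\strat^*}(s_L)$ to $\{0, 1\}$. By monotonicity (Proposition \ref{prop:eqmexist}) combined with Lemma \ref{lem:binary_nolowmixing}, the only equilibrium strategies strictly more selective than $(1, 1)$ take the form $\strat_x := (0, x)$ for $x \in [0, 1]$. The plan is to characterise, for each $s_H$, the set of $s_L$ for which some such $\strat_x$ is an equilibrium: on one side of a threshold this set is empty (so $\hat{\strat^*} = (1,1)$ and $\hat{\strat^*}(s_L) = 1$), and on the other it is non-empty (so $\hat{\strat^*}$ has the form $\strat_x$ and $\hat{\strat^*}(s_L) = 0$). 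The existence of such a $\strat_x$ equilibrium reduces to the consistency/indifference equation
\[
\frac{\prior}{1 - \prior} \times \frac{\nu_H(\strat_x; \experiment)}{\nu_L(\strat_x; \experiment)} \times \frac{s_H}{1-s_H} = \frac{c}{1-c}
\]
having a solution $x \in [0, 1]$ (combined, when $x = 1$, with the rejection incentive at $s_L$). Continuity of $\interimfcn$ in buyers' strategy (proof of Proposition \ref{prop:eqmexist}) and the monotonicity supplied by Lemma \ref{lem:poolbelief_approval}(ii) make the LHS strictly monotone and continuous in $x$, and an intermediate-value argument then pins $s_L^{\dag}(s_H)$ down as the boundary value of $s_L$ at which this set first becomes non-empty.

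Monotonicity of $s_L^{\dag}(\cdot)$ in $s_H$ follows by comparative statics on the boundary condition: increasing $s_H$ raises the LHS through both the direct likelihood-ratio term $\frac{s_H}{1-s_H}$ and the induced ratio $\frac{\nu_H(\strat_x; \experiment)}{\nu_L(\strat_x; \experiment)}$ (via Corollary \ref{cor:poolbelief_information}), forcing $s_L$ to adjust in the corresponding direction. The lower bound $s_L^{\dag}(s_H) \geq s_L^{\textrm{safe}}$ is then obtained by comparing the averaged adverse-selection ratio $\frac{\nu_H(\strat_1; \experiment)}{\nu_L(\strat_1; \experiment)}$ with its worst-case analogue $\left(\frac{s_L}{1-s_L}\right)^{n-1}$ (under $\strat_1 = (0,1)$ one has $\frac{r_H}{r_L} = \frac{s_L}{1-s_L}$): since the averaged ratio strictly dominates this worst case, the boundary equality is strictly slack at $s_L = s_L^{\textrm{safe}}$, so the threshold $s_L^{\dag}(s_H)$ must lie on the side of $s_L^{\textrm{safe}}$ asserted by the statement.

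The main obstacle will be bookkeeping the joint dependence of $\frac{\nu_H(\strat_x; \experiment)}{\nu_L(\strat_x; \experiment)}$ on both $x$ and $s_L$ while executing the comparative statics. I plan to exploit the identity $\frac{r_H(\strat_x; \experiment)}{r_L(\strat_x; \experiment)} = \frac{s_L}{1-s_L}$, valid for every $x \in (0, 1]$ because $\strat_x$ rejects only upon $s_L$. This identity decouples the likelihood ratio of rejection probabilities from $x$, reducing the comparative-statics analysis to a single-parameter calculation in $x$ for fixed $s_L$ and, separately, in $s_L$ for fixed $x$.
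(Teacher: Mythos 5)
Your treatment of the least selective equilibrium is correct and essentially the paper's own argument (always-accepting induces $\interim=\prior$ and is an equilibrium iff $\frac{\prior}{1-\prior}\frac{s_L}{1-s_L}\ge\frac{c}{1-c}$; below $s_L^{\textrm{mute}}$ the bound $\interim^*\le\prior$ forces rejection upon $s_L$ in every equilibrium). The most selective case, however, has genuine gaps. First, the identity you plan to lean on is false: $\frac{r_H(\strat_x;\experiment)}{r_L(\strat_x;\experiment)}=\frac{s_L}{1-s_L}$ holds only at $x=1$. For $x<1$ the strategy $\strat_x=(0,x)$ also rejects with probability $1-x$ upon $s_H$, so $\frac{r_H}{r_L}=\frac{1-x\,p_H(s_H)}{1-x\,p_L(s_H)}$, which tends to $1$ as $x\to 0$; the decoupling from $x$ is not available, which is exactly why the paper routes the monotonicity through Lemma \ref{lem:poolbelief_approval}(ii). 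Second, existence of a rejection equilibrium does not reduce to the indifference equation at $s_H$ having a root in $[0,1]$: you omit the corner $\strat_0$ (rejection at $s_H$ strictly optimal, no indifference anywhere), and, more importantly, you never isolate the single condition whose monotonicity in $(s_L,s_H)$ generates the threshold. The paper identifies it as $\frac{\interimfcn(\strat_1;\experiment)}{1-\interimfcn(\strat_1;\experiment)}\cdot\frac{s_L}{1-s_L}\le\frac{c}{1-c}$, necessary because every $(0,x)$ induces a weakly higher interim than $\strat_1$ (Lemma \ref{lem:poolbelief_approval}(ii)), and sufficient because at any interior $x$ with indifference at $s_H$, rejection at $s_L$ is automatically optimal since $s_L\le s_H$. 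Without this, "the boundary value of $s_L$ at which the set first becomes non-empty" is not known to be a threshold: you have not shown the existence region is one-sided in $s_L$.

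Third, your comparative static in $s_H$ has the wrong sign: Corollary \ref{cor:poolbelief_information} says a Blackwell \emph{more} informative experiment \emph{lowers} the interim belief, so raising $s_H$ lowers $\frac{\nu_H(\strat_1;\experiment)}{\nu_L(\strat_1;\experiment)}$ rather than raising it. In the indifference equation you work with, the direct term $\frac{s_H}{1-s_H}$ and the induced term then move in opposite directions and the net effect is ambiguous; the paper instead does the comparative static on the rejection condition at $s_L$, which has no direct $s_H$ term and therefore moves unambiguously (down in $s_H$, down in $s_L$), delivering the threshold structure and the monotonicity of $s_L^{\dag}$ simultaneously via Blackwell dominance. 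Finally, the lower bound: the comparison $\frac{\nu_H}{\nu_L}\ge\left(\frac{s_L}{1-s_L}\right)^{n-1}$ points the wrong way (a larger adverse-selection ratio makes the rejection condition \emph{harder} to satisfy) and invokes the wrong benchmark --- that power-of-$n-1$ expression belongs to $s_L^{\textrm{as}}$, not to $s_L^{\textrm{mute}}$ (which is what the undefined $s_L^{\textrm{safe}}$ must mean, given how Proposition \ref{prop:thresholdbinary}'s proof uses it). The argument you actually need is the one your own first paragraph supplies for free: for $s_L<s_L^{\textrm{mute}}$ every equilibrium rejects upon $s_L$, hence so does the most selective one, whence $s_L^{\dag}(s_H)\ge s_L^{\textrm{mute}}$. (Note also that the displayed cases for $\hat{\strat^*}$ in the statement appear transposed --- rejection upon $s_L$ obtains for $s_L$ \emph{below} the threshold, consistent with the paper's own proof --- so your hedge about "which side" leaves the substantive direction unverified.)
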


\begin{proof}
    Note that there exists an equilibrium where $\strat(s_L) = 1$ if and only if:
    \begin{equation*}
        \frac{\prior}{1-\prior} \times \frac{s_L}{1-s_L} \geq \frac{c}{1-c}
    \end{equation*}
    which, combined with Lemma \ref{lem:binary_nolowmixing}, proves the part of the Lemma for the selective equilibrium. 

    Now, define the strategies $\strat_0$ as $\strat_1$ as:
    \begin{align*}
        \strat_0(s)
        &=
        \begin{cases}
            0 & s = s_L \\
            0 & s = s_H
        \end{cases}
        &
        \strat_1(s)
        &=
        \begin{cases}
            0 & s = s_L \\
            1 & s = s_H
        \end{cases}
    \end{align*}
    A necessary and sufficient condition for an equilibrium $\strat^*$ where $\strat^*(s_L) = 0$ to exist is:
    \begin{equation*}
        \frac{
            \interimfcn (\strat_{1} ; \signalstr)
        }{
            1 - \interimfcn (\strat_{1} ; \signalstr)
        } \times
        \frac{s_L}{1 - s_L} \leq \frac{c}{1-c}
    \end{equation*}
    Sufficiency follows since either:
    \begin{equation*}
        \frac{
            \interimfcn (\strat_{0} ; \signalstr)
        }{
            1 - \interimfcn (\strat_{0} ; \signalstr)
        } \times
        \frac{s_H}{1 - s_H} \leq \frac{c}{1-c}
    \end{equation*}
    which implies $\strat_0$ is an equilibrium, or there is an equilibrium strategy $\strat^*$ such that $\strat^* (s_L) = 0$ and $\strat^* (s_H) > 0$ by Lemma \ref{lem:poolbelief_approval}. The condition is necessary, since any strategy that is less selective than $\strat_1$ induces a higher interim belief, by Lemma \ref{lem:poolbelief_approval}. 

    By Corollary \ref{cor:poolbelief_information}, whenever this necessary and sufficient condition holds for an experiment $\experiment$, it also holds for a (Blackwell) more informative experiment $\experiment'$. Moreover, whenever the low signals are rejected in the least selective equilibrium, they must be in the most selective equilibrium.
    This concludes the proof. 
    
\end{proof}

\begin{proof}[Proof, Theorem \ref{thm:intensive_binary}:]
    By Lemma \ref{lem:binary_eqmregions}, Blackwell improving buyers' experiment shifts
    both their least selective and most selective equilibrium strategies once from \textit{always} accepting trade to rejecting upon the low signal. By Lemma \ref{lem:overapprove}, this shift in buyers' strategy increases efficiency---and therefore each buyer's expected surplus. 

    Let $\left\{ \strat_{\alpha} \right\}_{\alpha \in [0,1]}$ be the family of strategies where buyers reject upon the low signal:
    \begin{equation*}
        \strat_{\alpha} (s) = 
        \begin{cases}
            0 & s = s_L \\
            1 & s = s_H
        \end{cases}
    \end{equation*}

    By Lemma \ref{lem:poolbelief_approval}, the interim belief $\interim_{\alpha}$ that the strategy $\strat_{\alpha}$ induces is strictly decreasing in $\alpha$. Thus, at most one of these can be an equilibrium strategy for a given experiment. Furthermore, whenever buyers' expected surplus  from $\strat_1$ is weakly positive, this must be the equilibrium strategy; decreasing $\alpha$ can only make approving upon the high signal \textit{more} profitable. Hence, whenever buyers reject upon the low signal in equilibrium, efficiency is given by: $\sumpayoff ( \strat^*; \experiment ) = \max \left\{ 0 , \sumpayoff (\strat_1; \experiment ) \right\}$. 
    The Theorem then follows from the Claim below:

    \begin{claim*}
        $\max \left\{ 0, \sumpayoff ( \strat_{1}; \signalstr ) \right\}$ is:
        \begin{enumerate}[label=\roman*]
            \item weakly increasing in $s_H$ whenever there is some equilibrium strategy $\strat^*$ s.t. $\strat^* (s_L) = 0$.
            \item hump-shaped in $s_L$. As $s_L$ falls, it: 
                \begin{itemize}
                    \item weakly increases when $s_L \geq s_L^{as}$,
                    \item weakly decreases when $s_L \leq s_L^{as}$
                \end{itemize}
        \end{enumerate}
        where $s_L^{as}$ is defined implicitly as:
        \begin{equation*}
        \frac{
            \prior
        }{
            1 - \prior
        }
        \times
        \left( 
            \frac{
                s_L^{\textrm{as}}
            }{
                1 - s_L^{\textrm{as}}
            }
        \right)^{n-1}
        \times
        \frac{s_H}{1 - s_H} = \frac{c}{1-c}
    \end{equation*}
    \end{claim*}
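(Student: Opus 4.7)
My strategy is to parameterize the binary experiment via the likelihood ratios $a := s_L/(1-s_L) \in [0,1]$ and $b := s_H/(1-s_H) \in [1,\infty)$. The normalization constraints combined with the likelihood identities give $p_L(s_L) = (b-1)/(b-a)$ and $p_H(s_L) = a \cdot p_L(s_L)$. Under $\strat_1$ (accept on $s_H$, reject on $s_L$), I have $r_L := r_L(\strat_1;\experiment) = p_L(s_L)$ and $r_H := r_H(\strat_1;\experiment) = a \cdot r_L$, so
\begin{equation*}
\sumpayoff(\strat_1; \experiment) = \prior(1-c)\bigl[1 - (a r_L)^n\bigr] - (1-\prior)c\bigl[1 - r_L^n\bigr].
\end{equation*}
Differentiating and using $r_H^{n-1} = a^{n-1} r_L^{n-1}$ to collect powers, both partial derivatives factor as a strictly positive prefactor times a simple linear expression. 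I anticipate that the sign of $\partial \sumpayoff(\strat_1;\experiment)/\partial b$ coincides with that of $(1-\prior)c - \prior(1-c) a^n$, and the sign of $\partial \sumpayoff(\strat_1;\experiment)/\partial a$ with that of $(1-\prior)c - \prior(1-c) a^{n-1} b$. These two expressions drive the entire claim.

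\textbf{Part (ii).} The second sign expression vanishes precisely when $\frac{\prior}{1-\prior} a^{n-1} b = \frac{c}{1-c}$, which is by definition the condition $s_L = s_L^{as}$. Since $a^{n-1} b$ is strictly increasing in $a$ (i.e., in $s_L$), the partial is positive for $s_L < s_L^{as}$ and negative for $s_L > s_L^{as}$; equivalently, $\sumpayoff(\strat_1;\experiment)$ rises as $s_L$ falls toward $s_L^{as}$ from above and falls as $s_L$ drops below $s_L^{as}$. This is the hump shape with peak at $s_L^{as}$.

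\textbf{Part (i).} For the first sign expression, I need to show $(1-\prior)c \geq \prior(1-c) a^n$ whenever there exists an equilibrium with $\strat^*(s_L) = 0$. By Lemma \ref{lem:binary_eqmregions}, such an equilibrium exists if and only if
\begin{equation*}
\frac{\interimfcn(\strat_1;\experiment)}{1 - \interimfcn(\strat_1;\experiment)} \cdot a \leq \frac{c}{1-c}.
\end{equation*}
Expanding $\interimfcn(\strat_1;\experiment)/(1-\interimfcn(\strat_1;\experiment)) = \frac{\prior}{1-\prior} \cdot \sum_{k=0}^{n-1} a^k r_L^k \big/ \sum_{k=0}^{n-1} r_L^k$, I would lower-bound the ratio of sums by $a^{n-1}$, using that $a \leq 1$ implies $a^k \geq a^{n-1}$ termwise for every $k \in \{0, 1, \ldots, n-1\}$. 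Chaining yields $\frac{\prior}{1-\prior} a^n \leq \frac{c}{1-c}$, which is exactly non-negativity of the first sign expression, so $\sumpayoff(\strat_1;\experiment)$ is weakly increasing in $s_H$. Monotonicity in $s_H$ and single-peakedness in $s_L$ then carry over from $\sumpayoff(\strat_1;\experiment)$ to $\max\{0, \sumpayoff(\strat_1;\experiment)\}$ by continuity.

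\textbf{Main obstacle.} The crux is the algebraic inequality $\sum_{k=0}^{n-1} a^k r_L^k \big/ \sum_{k=0}^{n-1} r_L^k \geq a^{n-1}$, which is what translates the interim-belief-based existence condition into the cleaner condition on $a^n$. The rest is routine bookkeeping of the $(b-a)^{-2}$ factors and the $a^{n-1}$ that appears from $r_H^{n-1}$ when differentiating.
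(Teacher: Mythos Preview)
Your approach is correct and genuinely different from the paper's. The paper never differentiates $\sumpayoff(\strat_1;\experiment)$ directly; instead it builds an auxiliary experiment that appends a second binary signal after $s_H$ (for Part i) or after $s_L$ (for Part ii), identifies the ``marginal admit'' or ``marginal reject'' whose outcome flips, and computes the expected quality of that marginal seller via a limiting argument as the added signal becomes arbitrarily weak. Your route---parameterize by likelihood ratios $a,b$, differentiate, and read off the sign from $(1-\prior)c-\prior(1-c)a^n$ and $(1-\prior)c-\prior(1-c)a^{n-1}b$---is shorter and more elementary for the binary case. Interestingly, both approaches bottom out at the same algebraic step: the inequality $\sum_{k=0}^{n-1} a^k r_L^k \big/ \sum_{k=0}^{n-1} r_L^k \geq a^{n-1}$ (equivalently, $\frac{\prior}{1-\prior}\bigl(\frac{s_L}{1-s_L}\bigr)^n \leq \frac{\interimfcn(\strat_1;\experiment)}{1-\interimfcn(\strat_1;\experiment)}\cdot\frac{s_L}{1-s_L}$), which is exactly what the paper invokes at the end of its Part i argument. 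The paper's construction is more roundabout here, but it is the template that generalises to arbitrary finite experiments in Theorem \ref{thm:intensive}, where a clean parameterization like yours is unavailable; your calculus argument buys transparency in the binary case at the cost of that extensibility.
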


    \begin{proof}[Proof of the Claim.]\leavevmode\\

    \vspace{-20pt}

    \par\noindent
    \hspace*{0cm}%
    \textbf{Part i.} Increasing the strength of good news; i.e. $s_H$.
    \par\vspace{\medskipamount}

    \noindent
    Let $\signalstr$ and $\signalstr'$ be two binary experiments with outcome sets $\signalset = \{ s_L, s_H \}$ and $\signalset' = \{ s_L', s_H' \}$.
    The experiment $\experiment'$ carries \textit{marginally stronger good news} than experiment $\experiment$:
    \begin{align*}
        s_L' &= s_L & s_H' &= s_H + \delta
    \end{align*}
    for some small $\delta$ such that $1 - s_H \geq \delta > 0$. I show that $\sumpayoff ( \strat_{1}' ; \experiment') > \sumpayoff ( \strat_{1}; \experiment )$; where $\strat_1'$ is defined analogously to $\strat_1$ for experiment $\experiment'$.

    \par\noindent
    \hspace*{0cm}%
    \textbf{Step 1.} Replicating $\signalstr'$ with a signal pair $(\typesig, \hat{\typesig})$.
    \par\vspace{\medskipamount}

    Rather than observing the outcome of experiment $\signalstr'$,
    say a buyer initially observes her original signal $\typesig$, and then potentially an additional auxilliary signal $\hat{\typesig}$. The first signal she receives, $s$, records the outcome of $\experiment$. If the low outcome $s_L$ materialises, the buyer observes no more information. 
    If, however, the high outcome $s_H$ materialises, she then observes the additional auxiliary signal $\hat{s}$. This auxiliary signal records the outcome of \textit{another} binary experiment, $\hat{\experiment}$. The outcome of $\hat{\experiment}$ is independent both from $\typesig$ and anything else any other buyer observes. Conditional on the asset's quality $\quality$, the distribution over its outcomes is given by the pmf $p_{\quality}(.)$:
    \begin{align*}
        \hat{p}_{H} (\hat{s_H})
        &= 1 - \varepsilon \times
        \frac{s_L}{1-s_L}
        &
        \hat{p}_{L} (\hat{s_H})
        &= 1 - \varepsilon \times \frac{s_H}{1-s_H}
    \end{align*}
    The evolution of the buyer's beliefs when she observes this signal pair is determined by the two likelihood ratios:
    \begin{align}
        \dfrac{
            \Prob ( 
            (\typesig, \hat{\typesig}) = (s_H, \hat{s_H})
            \mid \quality = H )
        }{
            \Prob ( 
            (\typesig, \hat{\typesig}) = (s_H, \hat{s_H})
            \mid \quality = L )
        }
        &=
        \dfrac{s_H}{1-s_H} \times 
        \dfrac{
            1 - \varepsilon \times \dfrac{s_L}{1-s_L}
        }{
            1 - \varepsilon \times \dfrac{s_H}{1-s_H}
        }
        \label{eqn:likelihood_high}
        \\
         \dfrac{
            \Prob ( (\typesig, \hat{\typesig}) = (s_H, \hat{s_L})
            \mid \quality = H )
        }{
            \Prob ( 
            (\typesig, \hat{\typesig}) = (s_H, \hat{s_L})
            \mid \quality = L )
        }
        &=
        \dfrac{s_L}{1-s_L}
        \label{eqn:likelihood_2_high}
    \end{align}
    Note that the likelihood ratio \ref{eqn:likelihood_high} increases continuously with $\varepsilon$. 
    
    The information from observing the pair $(s, \hat{s})$ as such is equivalent to observing the outcome of experiment $\experiment'$, when:
    \begin{equation}
    \label{eqn:epsiloneqvhigh}
        \dfrac{s_H}{1-s_H} \times 
        \dfrac{
            1 - \varepsilon \times \dfrac{s_L}{1-s_L}
        }{
            1 - \varepsilon \times \dfrac{s_H}{1-s_H}
        }
        =
        \frac{s_H + \delta}{1 - (s_H + \delta)}
    \end{equation}
    for our chosen $(\delta, \varepsilon)$. I choose $\varepsilon$ to satisfy this equality for our $\delta$. As such, $\varepsilon$ becomes a continuously increasing function of $\delta$.
    Furthermore, note that by varying $\varepsilon$ between 0 and $\frac{1 - s_H}{s_H}$, we can replicate \textit{any} experiment $\signalstr'$ with $s_L' = s_L$ and $1 \geq s_H' \geq s_H$.

    \par\vspace{\medskipamount}
    \par\noindent
    \hspace*{0cm}%
    \textbf{Step 2.} $\payoff (\strat_{1}'; \signalstr' ) \geq \payoff (\strat_{1}; \signalstr ) $.
    \par\vspace{\medskipamount}

    The buyer who observes the signal pair $(s, \hat{s})$ obtains equivalent information to that from $\experiment'$. We now must identify the strategy $\Tilde{\strat}: \{ s_L, (s_H, \hat{s_H}), (s_H, \hat{s_L}) \} \to [0,1]$ for this signal pair that replicates the outcome of the strategy $\strat_1'$ for experiment $\experiment'$. This strategy is defined as:
    \begin{align*}
        \Tilde{\strat} ( s_H, \hat{s_H} ) &= 1
        &
        \Tilde{\strat} (s_L) = \Tilde{\strat} ( s_H, \hat{s_L} ) &= 0
    \end{align*}
    and replicates the likelihood ratios of an acceptance and rejection signal under $\experiment'$.

    Now, fix the seller's \textit{signal profile} $\mathbf{s} = \{ (\typesig^i, \hat{\typesig}^i) \}_{i=1}^{n}$ (defined in Section \ref{section:useful_definitions}). I call a seller a \textit{marginal admit} if his score profile is such that:
    \begin{enumerate}[label=\roman*]
        \item for at least one $i \in \{ 1,2,...,n \}$, $\typesig^i = \typesig_H$, and
        \item for \textit{every} $i \in \{ 1,2,...,n \}$, either $\typesig^i = \typesig_L$, or $\hat{\typesig}^i = \hat{\typesig}_L$.
    \end{enumerate}
    These marginal admits drive the wedge between efficiency under $\experiment'$ and $\signalstr$: while some buyer trades under $\experiment$, they \textit{all} reject him under $\hat{\experiment}$. So:
    \begin{equation*}
        \sumpayoff (\strat_{1}'; \signalstr') - \sumpayoff (\strat_{1}; \signalstr)
        =
        \Prob \left( \textrm{marginal admit} \right)
        \times
        \underbrace{
            \left[ 
                c -
            \Prob \left( \quality = H \mid \textrm{marginal admit} \right)
        \right]
        }_{(1)}
    \end{equation*}
    A marginal admit only has signal realisations $(\typesig, \hat{\typesig}) = (s_H, \hat{s_L})$ or $\typesig = s_L$. These carry equivalent information about $\quality$. Thus, the expression (1) above equals:
    \begin{equation*}
        c - \Prob \left[ \quality = H \mid \typesig^1 = ... = \typesig^n = s_L \right]
    \end{equation*}
    In the relevant region where there is an equilibrium strategy that leads to rejections after the low outcome $s_L$, the expression above must be weakly positive:
    \begin{align*}
        c - \Prob \left[ \quality = H \mid \typesig^1 = ... = \typesig^n = s_L \right]
        &\propto
        \frac{c}{1-c} \times \frac{\prior}{1 - \prior} \times \left( 
            \frac{s_L}{1 - s_L}
        \right)^n
        \\
        &\leq 
        \frac{c}{1-c} - \frac{\prior}{1 - \prior} \times 
        \frac{\sum\limits_{k=0}^{n-1} p_H(s_L)^k}{\sum\limits_{k=0}^{n-1} p_L(s_L)^k} \times \frac{s_L}{1 - s_L}
        \\
        &= \frac{c}{1-c} - \frac{\interimfcn (\strat_1; \experiment)}{1 - \interimfcn (\strat_1; \experiment)} \times \frac{s_L}{1 - s_L} \leq 0
    \end{align*}
    where the last inequality follows from the necessary and sufficient condition the proof of Lemma \ref{lem:binary_eqmregions} introduced for such an equilibrium to exist. 

    \par\vspace{\medskipamount}
    \par\noindent
    \hspace*{0cm}%
    \textbf{Part ii.} Increasing the strength of bad news; i.e. decreasing $s_L$.
    \par\vspace{\medskipamount}

    Now, let the experiment $\experiment'$ carry \textit{marginally stronger bad news} than experiment $\experiment$ instead; for some arbitrarily small $\delta \in [0, s_L]$:
    \begin{align*}
        s_L' &= s_L - \delta & s_H' &= s_H
    \end{align*}
    Where $\strat_1'$ and $\strat_1$ are defined as before, I show that:
    \begin{enumerate}[label=\roman*]
        \item $\sumpayoff ( \strat'_{1}; \experiment) - \sumpayoff (\strat_1; \experiment) \geq 0$ when $s_L \geq s_L^{as}$, and
        \item $\sumpayoff ( \strat'_{1}; \experiment) - \sumpayoff (\strat_1; \experiment) \leq 0$ when $s_L \leq s_L^{as}$
    \end{enumerate}

    \noindent

    \par\noindent
    \hspace*{0cm}%
    \textbf{Step 1.} Replicating $\experiment'$ with a signal pair $(s, \hat{s})$.
    \par\vspace{\medskipamount}
    
    As before, let each buyer observe \textit{two} signals, potentially: $\typesig$ and $\hat{\typesig}$. She first observes $\typesig$, which records the outcome of $\experiment$. If the high outcome $\typesig_H$ materialises, she receives no further information. If, however, the low outcome $\typesig_L$ materialises, she then observes the additional auxiliary signal $\hat{\typesig}$, which records the outcome of \textit{another} binary experiment, $\hat{\experiment}$. As before, the outcome of this experiment is independent both from $\typesig$ and anything observed by any other buyer. Its distribution conditional on the asset's quality $\quality$ is given by the pmf $p_{\quality} (.)$:
    \begin{align*}
        \hat{p}_H \left( \hat{\typesig}_H \right)
        &=
        \varepsilon \times \frac{s_H}{1 - s_H}
        &
        \hat{p}_L \left( \hat{\typesig}_H \right) &= \varepsilon \times \frac{s_L}{1 - s_L}
    \end{align*}
    The evolution of the buyer's beliefs upon seeing the signal pair $(\typesig, \hat{\typesig})$ is then determined by the two likelihood ratios:
    \begin{align}
        \dfrac{
            \Prob \left( (\typesig, \hat{\typesig}) = ( s_L, \hat{s_H}) \mid \quality = H \right)
        }{
            \Prob \left( (\typesig, \hat{\typesig}) = ( s_L, \hat{s_H}) \mid \quality = H \right)
        }
        &=
        \dfrac{s_H}{1 - s_H}
        \\
        \dfrac{
            \Prob \left( (\typesig, \hat{\typesig}) = ( s_L, \hat{s_L}) \mid \quality = H \right)
        }{
            \Prob \left( (\typesig, \hat{\typesig}) = ( s_L, \hat{s_L}) \mid \quality = H \right)
        }
        &=
        \dfrac{
            s_L
        }{
            1 - s_L
        }
        \times
        \dfrac{
            1 - \varepsilon \times \dfrac{s_H}{1 - s_H}
        }{
            1 - \varepsilon \times \dfrac{s_L}{1 - s_L}
        }
        \label{eqn:likelihoodlow}
    \end{align}
    Note that \ref{eqn:likelihoodlow} is continuously and strictly decreasing with $\varepsilon$, taking values between $\frac{s_L}{1 - s_L}$ and 0 as $\varepsilon$ varies between 0 and $\frac{s_H}{1 - s_H}$. The signal pair $(\typesig, \hat{\typesig})$ is informationally equivalent to $\signalstr'$ when:
    \begin{equation*}
        \dfrac{
            s_L
        }{
            1 - s_L
        }
        \times
        \dfrac{
            1 - \varepsilon \times \dfrac{s_H}{1 - s_H}
        }{
            1 - \varepsilon \times \dfrac{s_L}{1 - s_L}
        }
        =
        \frac{s_L - \delta}{1 - \left( s_L - \delta \right)}
    \end{equation*}
    I choose $\varepsilon$ to satisfy this equality. As before, $\varepsilon$ then becomes a continuously increasing function of $\delta$. 

    \par\noindent
    \hspace*{0cm}%
    \textbf{Step 2.} $\sumpayoff ( \strat_{1}'; \experiment' ) - \sumpayoff ( \strat_{1}; \experiment )
    \hspace{0.2cm}
    \begin{cases}
        \geq 0 & s_L \geq s_L^{\textrm{as}} \\
        \leq 0 & s_L \leq s_L^{\textrm{as}}
    \end{cases}
    $
    \par\vspace{\medskipamount}

    The buyer who observes the signal pair $(s, \hat{s})$ obtains equivalent information to that from $\experiment'$. We now must identify the strategy $\Tilde{\strat}: \{ (s_L, \hat{s_H}), (s_L, \hat{s_L}) , s_H \} \to [0,1]$ for this signal pair that replicates the outcome of the strategy $\strat_1'$ for experiment $\experiment'$. This strategy is defined as:
    \begin{align*}
        \Tilde{\strat} ( s_L, \hat{s_H} ) = \Tilde{\strat} ( s_H ) &= 1
        &
        \Tilde{\strat} (s_L, \hat{s_L}) &= 0
    \end{align*}
    and replicates the likelihood ratios of an approval and rejection signal under $\experiment'$.

    Now, fix the seller's \textit{score profile}: $\mathbf{s} = \{ (\typesig^i, \hat{\typesig}^i) \}_{i=1}^{n}$. I call a seller a \textit{marginal reject} if:
    \begin{enumerate}[label=\roman*]
        \item for every $i \in \{ 1,2,...,n \}$, $\typesig^i = \typesig_L$, and
        \item for at least one $i \in \{ 1,2,...,n \}$, $\hat{\typesig}^i = \hat{\typesig}_H$.
    \end{enumerate}
    Marginal rejects drive the wedge between efficiency under $\experiment'$ and $\experiment$: while \textit{no} buyer trades under $\experiment$, \textit{at least one} buyer does under $\experiment'$. So:
    \begin{equation*}
        \sumpayoff (\strat_{1}'; \signalstr') - \sumpayoff (\strat_{1}; \signalstr)
        =
        \Prob \left( \textrm{marginal reject} \right)
        \times
        \underbrace{
            \left[ 
            \Prob \left( \quality = H \mid \textrm{marginal reject} 
            \right)
            -
            c
        \right]
        }_{(2)}
    \end{equation*}
    For a marginal reject, buyers observe either $(\typesig^i, \hat{\typesig}^i) = (\typesig_L, \hat{\typesig}_L)$, or $(\typesig^i, \hat{\typesig}^i) = (\typesig_L, \hat{\typesig}_H)$. Denote the number of buyers who observed the latter as $\#$. Since the seller is a marginal reject, $\# \geq 1$. Then, (2) equals:
    \begin{equation*}
        \sum\limits_{i=1}^{n}
        \underbrace{
            \frac{
                \Prob \left( i \textrm{ } \hat{s_H} \textrm{ signals } \mid \typesig^1 = ... = \typesig^n = s_L \right)  
            }{
                \sum\limits_{j=1}^{n} 
                \Prob \left( j \textrm{ } \hat{s_H} \textrm{ signals } \mid \typesig^1 = ... = \typesig^n = s_L \right)
            }
        }_{(3)}
        \times 
        \Prob \left( \quality = H \mid \# = i \right) - c
    \end{equation*}
    where:
    \begin{align*}
        \Prob \left( i \textrm{ } \hat{s_H} \textrm{ signals } \mid \typesig^1 = ... = \typesig^n = s_L \right)
        =
        k \times &\binom{n}{i} \times
        \left( \frac{s_H}{1 - s_H} \times \varepsilon \right)^i \times 
        \left(
            1 - \frac{s_H}{1 - s_H} \times \varepsilon
        \right)^{n-i}
        \\
        +
        (1 - k) \times &\binom{n}{i} \times
        \left( \frac{s_L}{1 - s_L} \times \varepsilon \right)^i \times 
        \left(
            1 - \frac{s_L}{1 - s_L} \times \varepsilon
        \right)^{n-i}
    \end{align*}
    and $k = \Prob \left( \quality = H \mid \typesig^1 = ... = \typesig^n = s_L \right)$. Thus, the limit of expression (3) as $\varepsilon \to 0$ (and therefore, $\delta \to 0$) for any $i > 1$ is:
    \begin{equation}
        \label{eqn:thmbinary_limit}
        \lim\limits_{\varepsilon \to 0}
            \frac{
                \frac{1}{\varepsilon} \times
                \Prob \left( i \textrm{ } \hat{\typesig} = \hat{s_H} \textrm{ signals } \mid \typesig^1 = ... = \typesig^n = s_L \right)  
            }{
                \frac{1}{\varepsilon} \times
                \sum\limits_{j=1}^{n} 
                \Prob \left( j \textrm{ } \hat{\typesig} = \hat{s_H} \textrm{ signals } \mid \typesig^1 = ... = \typesig^n = s_L \right)
            }
            = 0
    \end{equation}
    Therefore, we get:
    \begin{align*}
        &\lim\limits_{\varepsilon \to 0}
        \sum\limits_{i=1}^{n}
            \frac{
                \Prob \left( i \textrm{ } \hat{s_H} \textrm{ signals } \mid \typesig^1 = ... = \typesig^n = s_L \right)  
            }{
                \sum\limits_{j=1}^{n} 
                \Prob \left( j \textrm{ } \hat{s_H} \textrm{ signals } \mid \typesig^1 = ... = \typesig^n = s_L \right)
            }
        \times 
        \Prob \left( \quality = H \mid \# = i \right) - c
        \\
        &= 
        \Prob \left( \quality = H \mid \# = 1 \right)
        - c
        \\
        &\propto
        \frac{\prior}{1 - \prior}
        \times 
        \left( 
            \frac{
                s_L
            }{
                1 - s_L
            }
        \right)^{n-1}
        \times
        \frac{s_H}{1 - s_H} - \frac{c}{1-c}
    \end{align*}
    proving the claim.
    \end{proof}

\end{proof}

\propthresholdbinary*

\begin{proof}\leavevmode\\

    \vspace{-25pt}

    \begin{enumerate}[label=\roman*]
        \item The least selective equilibrium:
    \end{enumerate}

    By Lemma \ref{lem:binary_nolowmixing}, the probability that the seller trades upon the low outcome in
    the least selective equilibrium is:
    \begin{equation*}
        \check{\strat}^* (s_L)
        =
        \begin{cases}
            1 & s_L \geq s_L^{\textrm{mute}} \\
            0 & s_L < s_L^{\textrm{mute}}
        \end{cases}
    \end{equation*}
    Thus, efficiency equals (i) the expected surplus from always approving the applicant when $s_L \geq s_L^{\textrm{mute}}$, and (ii) $\max \left\{ 0, \sumpayoff \left( \strat_1; \experiment \right) \right\}$ when $s_L < s_L^{\textrm{mute}}$ (established in the proof of Theorem \ref{thm:intensive_binary}):
    \begin{equation*}
        \sumpayoff (\check{\strat}^*; \experiment) = 
        \begin{cases}
            \prior - c & s_L \geq s_L^{\textrm{mute}} \\ 
            \max \left\{ 0, \sumpayoff \left( \strat_1; \experiment \right) \right\}
            & s_L < s_L^{\textrm{mute}}
        \end{cases}
    \end{equation*}

    Since always trading is always feasible, we have $\max \left\{ 0, \sumpayoff \left( \strat_1; \experiment \right) \right\} \geq \prior - c$ when $s_L < s_L^{\textrm{mute}}$ by Lemma \ref{lem:overapprove}. Furthermore, the final Claim in Theorem \ref{thm:intensive_binary}'s proof establishes that as $s_L$ falls, the expression $\max \left\{ 0, \sumpayoff \left( \strat_1; \experiment \right) \right\}$ weakly increases (decreases) when $s_L \geq s_L^{\textrm{as}}$ ($s_L \leq s_L^{\textrm{as}}$).
    Thus the desired conclusion is established. 

    \begin{enumerate}[label=\roman*]
        \setcounter{enumi}{1}
        \item The most selective equilibrium:
    \end{enumerate}

    \noindent
    By Lemma \ref{lem:binary_eqmregions}, the most selective equilibrium shifts from one where a buyer always trades to one where she rejects upon the low signal when $s_H \geq s_H^{\dag} (s_L)$, where $s_H^{\dag} (.)$ is an increasing function of $s_L$. Following the arguments made for the least selective equilibrium then, efficiency:
    \begin{itemize}
        \item weakly increases as $s_L$ decreases, when $s_L \geq \min \left\{ s_L^{\textrm{as}}, s_L^{\dag} (s_H) \right\}$
        \item weakly decreases as $s_L$ decreases, when $s_L \leq \min \left\{ s_L^{\textrm{as}}, s_L^{\dag} (s_H) \right\}$.
    \end{itemize}
    The desired result follows by noting that $s_L^{\dag}(s_H) \geq s_L^{\textrm{safe}}$, and therefore 
    $\min \left\{ s_L^{\dag}, s_L^{\textrm{as}} (s_H) \right\}
    \geq
    \min \left\{ s_L^{\textrm{safe}}, s_L^{\textrm{as}} (s_H) \right\}$.
    
\end{proof}

\thmintensive*

\begin{proof}
    The Theorem focuses either on the least, or the most selective equilibrium strategies under both experiments. In the discussion below, I let $\strat^*$ and $\strat^{*'}$ denote whichever equilibria we are focusing on under the respective experiments $\experiment$ and $\experiment'$. When I need to distinguish between the least and most selective equilibria, I denote them as $ (\check{\strat}, \check{\strat}')$ and $ (\hat{\strat}, \hat{\strat}')$, respectively.
    Following the notation introduced in Definition \ref{defn:local_mps}, let $\signalset \cup \signalset' = \left\{ s_1, s_2, ..., s_M \right\}$ be the joint support of the experiments $\signalstr$ and $\signalstr'$, with elements increasing in their indices as usual. 
    Since $\experiment'$ is obtained by a \textit{local} mean preserving spread of $\experiment$, there is a monotone strategy $\strat': \signalset' \to [0,1]$ whose outcome under $\experiment'$ replicates the outcome of $\strat^*$ under $\experiment$:
    \begin{equation*}
        {\strat}'(s) = 
            \begin{cases}
                \strat^*(s_j) & s \in \{ s_{j-1}, s_{j+1} \} \\
                \strat^*(s) & s \notin \{ s_{j-1}, s_{j+1} \}
            \end{cases}
    \end{equation*}

    \noindent
    \textbf{Claim 1.} Efficiency under the most (least) selective equilibrium of $\experiment'$ weakly exceeds that under $\experiment$ when $\hat{\strat} (s_j) = 1$ ($\check{\strat} (s_j) = 1$).

    Now suppose $s_j$ leads to trade under $\strat^*$; $\strat^*(s_j) = 1$. Therefore, $\strat'(s_{j-1}) = {\strat}' (s_{j+1}) = 1$. Below, I show that $\strat^{*'}$ is \textit{more selective than} $\strat'$. By Lemma \ref{lem:selectivebetter}, it follows that $\sumpayoff \left( \strat^{*'}; \experiment' \right) \geq \sumpayoff \left( \strat'; \experiment' \right) = \sumpayoff \left( \strat; \experiment \right)$. 
    
    If $s_{j-1} = \min \signalset \cup \signalset'$ or $\strat^{*'} (s_{j-2}) = 0$, $\strat^{*'}$ must necessarily be more selective than $\strat'$; and we are done. So, for contradiction, I assume the following:
    \begin{itemize}
        \item $s_{j-1} > \min \signalset \cup \signalset'$
        \item $\strat^{*'} (s_{j-2}) > 0$
        \item $\strat^{*'}$ is \textit{less} selective than $\strat'$, where the two strategies differ.
    \end{itemize}

    \par\noindent
    \hspace*{0cm}%
    \textbf{Case i.} $\strat^*$ and $\strat^{*'}$ are the least selective equilibrium strategies; i.e. $\strat^* = \check{\strat}$ and $\strat^{*'} = \check{\strat}'$.
    \par\vspace{\medskipamount}

    I will prove the contradiction by constructing a strategy $\Tilde{\strat}: \signalset \to [0,1]$ for experiment $\experiment$, such that:
    \begin{enumerate}[label=\roman*]
        \item $\Tilde{\strat}$ replicates the outcome $\check{\strat}'$ induces in $\experiment'$,
        \item That $\check{\strat}'$ is an eqm. strategy under $\experiment'$ implies that $\Tilde{\strat}$ is an eqm. strategy under $\experiment$,
        \item But $\Tilde{\strat}$ is less selective than $\check{\strat}$, contradicting that $\check{\strat}$ is the least selective equilibrium strategy under $\experiment$.
    \end{enumerate}

    I define the strategy $\Tilde{\strat}: \signalset \to [0,1]$ for $\signalstr$ as:
    \begin{equation*}
        { \Tilde{ \strat } } ( s ) := 
            \begin{cases}
                1 & s = s_i \\
                \strat' (s) & s \neq s_i
            \end{cases}
    \end{equation*}
    it is seen easily that $\Tilde{\strat}$ replicates the outcome of $\check{\strat}'$. Furthermore, $\check{\strat}'$ is an equilibrium under $\experiment'$ if and only if $\Tilde{\strat}$ is an equilibrium under $\experiment$: they induce the same interim belief $\interim$, and share the following necessary and sufficient condition for optimality:
    \begin{equation*}
        \Prob_{\interim} \left( 
            \quality = H \mid s_{j-2}
        \right)
        \begin{cases}
             = c & \strat'(s_{j-2}) < 1
             \\
             \geq c & \strat'(s_{j-2}) = 1
        \end{cases}
    \end{equation*}
    The strategy
    $\Tilde{\strat}$ under experiment $\experiment$ replicates the outcome of $\check{\strat}'$ under experiment $\experiment'$, and $\strat'$ under $\experiment'$ replicates the outcome of 
    $\check{\strat}$ under experiment $\experiment$. Since we assumed that $\check{\strat}'$ is less selective than $\strat'$, it must be that $\Tilde{\strat}$ is less selective than $\check{\strat}$.

    \par\vspace{\medskipamount}
    \par\noindent
    \hspace*{0cm}%
    \textbf{Case ii.} $\strat^*$ and $\strat^{*'}$ are the most selective equilibrium strategies; i.e. $\strat^* = \hat{\strat}$ and $\strat^{*'} = \hat{\strat}'$.
    \par\vspace{\medskipamount}

    Since strategy $\strat'$ for experiment $\experiment'$ replicates the outcome of $\hat{\strat}$ for experiment $\experiment$, the two strategies induce the same interim belief $\interim$. Therefore, if $\Prob_{\interim} \left( \quality = H \mid s_{j-1} \right) \geq c$, $\strat'$ is an equilibrium under $\experiment'$; meaning $\hat{\strat}'$ must be more selective than $\strat'$. 

    Otherwise, say $\Prob_{\interim} \left( \quality = H \mid s_{j-1} \right) < c$. Then, by Lemma \ref{lem:eqm_algorithm}, there must be an equilibrium strategy that is more selective than $\strat'$ under $\experiment'$.

    \noindent
    \textbf{Claim 2.} Efficiency under the most (least) selective equilibrium of $\experiment'$ falls weakly below that under $\experiment$ if:
    \begin{enumerate}[label= \roman*.]
        \item $s_j$ leads to rejections under $\experiment$; i.e. $\hat{\strat} (s_j) = 0$ ($\check{\strat} (s_j) = 0$), and
        \item the following condition holds:
        \begin{equation*}
            \frac{\prior}{1 - \prior} \times \left( \frac{ r_H ( \strat; \experiment ) }{  r_L ( \strat; \experiment )  } \right)^{n-1} \times \frac{s_{j+1}}{1 - s_{j+1}} \leq \frac{c}{1-c}
        \end{equation*}
    \end{enumerate}

    Now, suppose $s_j$ leads to rejections under $\strat^*$; $\strat^* (s_j) = 0$. Consequently, 
    we have $\strat' ( s_{j-1} ) = \strat' ( s_{j+1} ) = 0$. 
    I establish Claim 2 in two steps:
    \par\vspace{\medskipamount}
    \noindent Step 1: $\strat^{*'}$ is less selective than $\strat'$; trade is likelier when when $s_j$ is locally spread.
    \par\vspace{\medskipamount}
    \noindent Step 2: This efficiency when the condition in Claim 2 is met; $ \sumpayoff ( \strat^{*'}; \experiment' ) \leq \sumpayoff ( \strat' ; \experiment' ) = \sumpayoff (\strat^*; \experiment)$.

    \par\vspace{\medskipamount}
    \noindent
    \textbf{Step 1.}
    \par\vspace{\medskipamount}

    \noindent
    If $s_{j+1} = \max \signalset \cup \signalset'$ or $\strat^{*'}(s_{j+1}) > 0$, it must be the case that $\strat^{*'}$ is less selective than $\strat'$, and we are done. So instead, I assume that $s_{j+1} < \max \signalset \cup \signalset'$ and $\strat^{*'}(s_{j+1}) = 0$.

    \par\vspace{\medskipamount}
    \noindent
    Case i. $\strat^*$ and $\strat^{*'}$ are the least selective equilibrium strategies; i.e. $\strat^* = \check{\strat}$ and $\strat^{*'} = \check{\strat}'$
    \par\vspace{\medskipamount}

    Since $\strat'$ replicates the outcome of $\check{\strat}$, we have $\interimfcn ( \check{\strat}; \experiment ) = \interimfcn ( \strat'; \experiment' ) = \interim$. Thus, $\strat'$ must be an equilibrium strategy under $\experiment'$ if $\Prob_{\interim} \left( \quality = H \mid s_{j+1} \right) \leq c$: the optimality conditions for all signals below $s_{j+1}$ are satisfied \textit{a fortiori}, and those for the signals above $s_{j+1}$ are satisfied 
    since $\check{\strat}$ has the same optimality conditions under $\experiment$. So, $\check{\strat}'$ must be less selective than $\strat'$, since the former is the least selective equilibrium.
    If on the other hand, $\Prob_{\interim} \left( \quality = H \mid 
    s_{j+1} \right) > c$, there must be an equilibrium strategy under experiment $\experiment'$ that is \textit{less} selective than  $\strat'$, by Lemma \ref{lem:eqm_algorithm}.
    
    \par\vspace{\medskipamount}
    \noindent
    Case ii. $\strat^*$ and $\strat^{*'}$ are the most selective equilibrium strategies; i.e. $\strat^* = \hat{\strat}$ and $\strat^{*'} = \hat{\strat}'$.
    \par\vspace{\medskipamount}

    $\hat{\strat}'$ is the most selective equilibrium strategy under experiment $\experiment'$, and we assumed that $\hat{\strat}' (s_{j+1}) = 0$. The strategy $\Tilde{\strat}$ defined below for experiment $\experiment$ replicates the outcome $\hat{\strat}'$ generates under experiment $\experiment'$:
    \begin{equation*}
        \Tilde{\strat} (s) =
        \begin{cases}
            0 & s \leq s_j \\
            \hat{\strat}' (s) & s > s_j
        \end{cases}
    \end{equation*}
    Note that $\Tilde{\strat}$ must be an equilibrium under experiment $\experiment$, since the interim belief it induces is the same as the one $\hat{\strat}'$ does, and its optimality constraints are a subset of the latter's. But since $\hat{\strat}$ is the \textit{most} selective equilibrium strategy under $\experiment$, $\Tilde{\strat}$ must be less selective than it. 

    \par\vspace{\medskipamount}
    \noindent
    \textbf{Step 2.}
    \par\vspace{\medskipamount}

    The statement is trivially true when $\strat' = \strat^{*'}$, so I focus on the case where these two strategies differ. As Step 1 established, $\strat^{*'}$ must be less selective than $\strat'$. This implies that $\strat^{*'} ( s_{j+1} ) > 0$. To see why, say we had $\strat^{*'} (s_{j+1}) = 0$ instead. We can then construct a strategy $\Tilde{\strat}$ for experiment $\experiment$, which replicates the outcome $\strat^{*'}$ generates under experiment $\experiment'$:
    \begin{equation*}
        \Tilde{\strat} (s) =
        \begin{cases}
            0 & s \leq s_j \\
            \strat^{*'} (s) & s > s_j
        \end{cases}
    \end{equation*}
    As they induce the same interim belief and the optimality constraints of the latter are a subset of the former's, $\Tilde{\strat}$ must be an equilibrium under $\experiment$. This contradicts with $\strat^{*}$ and $\strat^{*'}$ being the least selective strategies; since $\strat^{*'}$ being less selective than $\strat'$ implies that $\Tilde{\strat}$ must be less selective than $\strat^*$. It also contradicts with $\strat^*$ and $\strat^{*'}$ being the most selective strategies; since it would imply that $\strat'$, more selective than $\strat^{*'}$, should be an equilibrium under $\experiment'$.

    Given that $\strat^{*'} (s_{j+1}) > 0$, I now take another strategy $\strat^{\delta}_{\experiment'}: \signalset' \to [0,1]$ for experiment $\experiment'$: 
    \begin{align*}
        \strat^{\delta}_{\experiment'} (s) =
        \begin{cases}
            1 & s > s_{j+1} \\
            \delta & s = s_{j+1} \\
            0 & s < s_{j+1}
        \end{cases}
    \end{align*}
    where $\delta > 0$ is small enough so that $\strat^{\delta}_{\experiment'}$ is more selective than $\strat^{*'}$, but less selective than $\strat'$. I will show that, when the condition stated in Claim 2 holds, we have $\sumpayoff \left( \strat^{\delta}_{\experiment'}; \experiment' \right) \leq 
    \sumpayoff \left( \strat'; \experiment' \right)
    $ for $\delta \to 0$. Lemma \ref{lem:overapprove} then implies that $ \sumpayoff \left( \strat^{*'}; \experiment' \right) \leq \sumpayoff \left( \strat^{\delta}_{\experiment'}; \experiment' \right)$, which coins the result. 

    To show this, I construct another experiment $\experiment^{\textrm{re}}$ under which I will use compare two strategies, $\strat_{\textrm{re}}$ and $\strat_{\textrm{re}}^{\delta}$, that replicate the outcomes of the strategies $\strat'$ and $\strat^{\delta}_{\experiment'}$, respectively. The experiment $\experiment^{\textrm{re}}$ has three possible outcomes, $\left\{ s_L^{\textrm{re}}, s_{\delta}^{\textrm{re}}, s_H^{\textrm{re}} \right\}$. Conditional on the applicant's quality $\quality$, its outcome distribution is independent from any other information any evaluator sees, and is given by the following pmf $p_{\quality}^{\textrm{re}}$:
    \begin{align*}
        p_{\quality} (s^{\textrm{re}}) = 
        \begin{cases}
            1 - r_{\quality} ( \strat^*; \experiment ) & s = s^{\textrm{re}}_H \\
            \delta \times p_{\quality}' ( s_{j+1} ) & s = s^{\textrm{re}}_{\delta} \\
            r_{\quality} \left( \strat^*; \experiment \right) - \delta \times p_{\quality}' ( s_{j+1} )
            & s = s^{\textrm{re}}_L
        \end{cases}
    \end{align*}
    Define the strategies $\strat_{\textrm{re}}$ and $\strat_{\textrm{re}}^{\delta}$ for this experiment as follows:
    \begin{align*}
        \strat_{\textrm{re}}(s) &= 
        \begin{cases}
            1 & s = s^{\textrm{re}}_H \\
            0 & s = s^{\textrm{re}}_{\delta} \\
            0 & s = s^{\textrm{re}}_L
        \end{cases}
        &
        \strat_{\textrm{re}}^{\delta} &= 
        \begin{cases}
            1 & s = s^{\textrm{re}}_H \\
            1 & s = s^{\textrm{re}}_{\delta} \\
            0 & s = s^{\textrm{re}}_L
        \end{cases}
    \end{align*}

    Now note that these two strategies replicate the outcomes of the strategies $\strat'$ and $\strat^{\delta}_{\experiment'}$, respectively. Under $\strat_{\textrm{re}}(s)$, the probability that a buyer trades, conditional on the seller's quality, is the same as it is under strategy $\strat'$ (or $\strat^{*}$, which it replicates), and under $\strat_{\textrm{re}}^{\delta}$, it is the same as it is under $\strat^{\delta}_{\experiment'}$. 

    So, the difference between efficiency under these two strategies is determined by the \textit{marginal reject} who:
    \begin{itemize}
        \item is rejected by \textit{every} buyer under the strategy $\strat_{\textrm{re}}$.
        \item is accepted by \textit{at least one} buyer under the strategy $\strat_{\textrm{re}}^{\delta}$.
    \end{itemize}
    Where $\mathbf{s^{\textrm{\textbf{re}}}} = \left\{ s^1, ..., s^n \right\}$ is the seller's signal profile under the experiment $\experiment^{\textrm{re}}$, he has:
    \begin{itemize}
        \item \textit{no} $s_H^{\textrm{re}}$ signals; $s^i \neq s_H^{\textrm{re}}$ for all $i \in \{1,2,...,n\}$ and
        \item \textit{at least one} $s_{\delta}^{\textrm{re}}$ signal; there exists some $i \in \{1,2,...,n\}$ such that $s^i = s_H^{\textrm{re}}$.
    \end{itemize}
    Thus we have:
    \begin{align*}
        \sumpayoff ( \strat^{\delta}_{\experiment'} ; \experiment') - 
        \sumpayoff ( \strat'; \experiment' )
        &=
        \sumpayoff ( \strat^{\delta}_{\textrm{re}} ; \experiment^{\textrm{re}}) - 
        \sumpayoff ( \strat_{\textrm{re}}; \experiment^{\textrm{re}} ) 
        \\
        &=
        \Prob \left( \textrm{marginal reject} \right) \times
        \underbrace{
            \left[ 
            \Prob \left( \quality = H \mid \textrm{marginal reject} 
            \right)
            -
            c
        \right]
        }_{(2)}
    \end{align*}

    The expression labelled (2) above equals:
    \begin{align*}
        \sum\limits_{i=1}^{n}
        \frac{
            \Prob \left( i \textrm{ } s_{\delta}^{\textrm{re}} \textrm{ and } n - i \textrm{ } s_{L}^{\textrm{re}} \textrm{ signals} \right)
        }{
            \sum\limits_{k = 1}^{n} \Prob \left( k \textrm{ } s_{\delta}^{\textrm{re}} \textrm{ and } n - k \textrm{ } s_{L}^{\textrm{re}} \textrm{ signals} \right)
        }   
        \times
        \Prob \left( \quality = H \mid i \textrm{ } s_{\delta}^{\textrm{re}} \textrm{ and } n-i \textrm{ }s_{L}^{\textrm{re}} \textrm{ signals} \right) - c
    \end{align*}

    Since the probability that a buyer observes the $s_{\delta}^{\textrm{re}}$ signal is proportional to $\delta$, we have\footnote{See expression \ref{eqn:thmbinary_limit} and the surrounding discussion in the proof of Theorem \ref{thm:intensive_binary} for a more detailed explanation of this.}:
    \begin{equation*}
        \lim\limits_{\delta \to 0}
        \frac{
            \Prob \left( i \textrm{ } s_{\delta}^{\textrm{re}} \textrm{ and } n - i \textrm{ } s_{L}^{\textrm{re}} \textrm{ signals} \right)
        }{
            \sum\limits_{k = 1}^{n} \Prob \left( k \textrm{ } s_{\delta}^{\textrm{re}} \textrm{ and } n - k \textrm{ } s_{L}^{\textrm{re}} \textrm{ signals} \right)
        }
        = 0 
    \end{equation*}
    Therefore, we get:
    \begin{align*}
        &\lim\limits_{\delta \to 0}
        \sum\limits_{i=1}^{n}
        \frac{
            \Prob \left( i \textrm{ } s_{\delta}^{\textrm{re}} \textrm{ and } n - i \textrm{ } s_{L}^{\textrm{re}} \textrm{ signals} \right)
        }{
            \sum\limits_{k = 1}^{n} \Prob \left( k \textrm{ } s_{\delta}^{\textrm{re}} \textrm{ and } n - k \textrm{ } s_{L}^{\textrm{re}} \textrm{ signals} \right)
        }   
        \times
        \Prob \left( \quality = H \mid i \textrm{ } s_{\delta}^{\textrm{re}} \textrm{ and } n-i \textrm{ }s_{L}^{\textrm{re}} \textrm{ signals} \right) - c
        \\[0.2cm]
        &\lim\limits_{\delta \to 0} \Prob \left( \quality = H \mid \textrm{one } s_{\delta}^{\textrm{re}} \textrm{ signal and } n - 1 \textrm{  } s_L^{\textrm{re}} \textrm{ signals} \right) - c
        \\[0.2cm]
        \propto 
        &\lim\limits_{\delta \to 0} 
        \frac{\prior}{1 - \prior} \times 
        \frac{ p_H' (s_{j+1}) }{ p_L' (s_{j+1}) } \times 
        \left( 
            \frac{
                r_H (\strat^*; \experiment) - \delta \times p_H' (s_{j+1})
            }{
                r_L (\strat^*; \experiment) - \delta \times p_L' (s_{j+1})
            }
        \right)^{n-1} - \frac{c}{1-c}
        \\[0.2cm]
        =
        &\frac{\prior}{1 - \prior} \times \frac{s_{j+1}}{1 - s_{j+1}} \left( \frac{r_H (\strat^*; \experiment)}{r_L (\strat^*; \experiment)} \right)^{n-1} - \frac{c}{1-c}
    \end{align*}
    
\end{proof}

\propsufficientharm*

\begin{proof}
    First, I let $\hat{\strat} (s_{j+2}) < 1$. I show that this implies $\hat{\strat}$ and $\hat{\strat}'$ induce equivalent outcomes under their respective experiments. The strategy $\strat': \signalset' \to [0,1]$ which replicates the outcome of $\hat{\strat}'$ under experiment $\experiment'$:
    \begin{align*}
        \strat'(s) = 
        \begin{cases}
            \hat{\strat} (s) & s \geq s_{j+2} \\
            0 & s < s_{j+2}
        \end{cases}
    \end{align*}
    must then be an equilibrium strategy under experiment $\experiment'$. This is because these strategies induce the same interim belief, that $\hat{\strat}$ is an equilibrium strategy under $\experiment$ ensures that the optimality conditions of $\strat'$ for signals below $s_{j+2}$ are satisfied, and for signals above $s_{j+2}$, the optimality conditions are the same as those for $\strat'$. This means that $\hat{\strat}'$ must be more selective than $\strat'$. However, when proving Theorem \ref{thm:intensive}, we established that $\strat'$ must be more selective than $\hat{\strat}'$. So it must be that $\strat' = \hat{\strat}'$, and we are done.

    So instead, let $\hat{\strat} (s_{j+2}) = 1$. But then, it is easily established that:
    \begin{equation*}
        \frac{
            r_H ( \hat{\strat}; \experiment )
        }{
            r_L ( \hat{\strat}; \experiment )
        }
        \leq
        \frac{s_{j}}{1 - s_{j}}
    \end{equation*}
    since $r_{\quality} ( \hat{\strat}; \experiment ) = \sum\limits_{k=1}^{j} p_{\quality} (s_k)$. So, the condition Proposition \ref{prop:sufficient_harm} supplies is sufficient for the one Theorem \ref{thm:intensive} does. 
    
\end{proof}

\lemdesignmonotonebinary*

\begin{proof}
    To prove this statement, I 
    take some garbling $\experiment^G$ and an equilibrium $\strat^G: \signalset^G \to [0,1]$ it supports. I then construct a monotone binary garbling $\experiment^{G*}$ which produces IC recommendations, and show that efficiency under $\experiment^{G*}$ and the strategy which obeys its recommendations, $\strat^{G*}$, are higher than those under $\experiment^G$ and $\strat^G$.

    For the monotone binary garbling $\experiment^{G*} = (\signalset^{G}, \mathbf{P}^{G*})$ and the garbling $\experiment^G = (\signalset^G, \mathbf{P}^G)$  in question:
    \begin{align*}
        \mathbf{P} \times \mathbf{T} &= \mathbf{P}^{G}
        &
        \mathbf{P} \times \mathbf{T}^* &= \mathbf{P}^{G*}
    \end{align*}
    define the expressions:
    \begin{align*}
        f^*(s) &:= p_L (s) \times t_{i1}^*
        &
        f (s) &:= p_L (s) \times
        \sum\limits_{s_j^G \in \signalset^G} t_{ij} \times \left( 1 - \strat^G (s_j^G) \right)
    \end{align*}
    for each $s \in \signalset$. Given the asset has Low quality,
    $f^*(s)$ is the probability that
    (i) a buyer would have observed the signal $s \in \signalset$ in her original experiment, \textit{and} (ii) the garbling $\experiment^{G*}$ issues her a ``rejection recommendation''. Similarly, $f(s)$ is the probability that (i) a buyer would observe the signal $\typesig \in \signalset$ in her original experiment, \textit{and} (ii) he would be rejected under the equilibrium strategies $\strat^G$. For this Low quality seller,  $r_L^{G*}$ below is the probability that the buyer receives a rejection recommendation under $\experiment^{G*}$; and $r_L^G$ is the probability that the buyer rejects him under $(\experiment^G, \strat^{G})$:
    \begin{align*}
        r_L^{G*} &:= \sum\limits_{s \in \signalset} f^*(s)
        &
        r_L^{G} &:= 
        \sum\limits_{s \in \signalset} 
        f (s)
    \end{align*}

    Now, take the least selective monotone binary garbling $\experiment^{G*}$ such that $r_L^{G*} = r_L^G$. Evidently, this garbling exists. 
    
    Clearly, one can treat $f^*$ and $f$ as probability density functions over $\signalset$ when normalised. Furthermore, the distribution the former describes is first order stochastically dominated by the one the latter does; $\frac{f^*(s_j)}{ \sum\limits_{s \in \signalset} f^*(s) }$ crosses $\frac{ f(s) }{ \sum\limits_{s \in \signalset} f(s) }$ once from below. Therefore we get:
        \begin{align*}
            r_H^*
            :=
            \sum\limits_{s \in \signalset} \frac{p_H (s)}{p_L (s)} \times \frac{f^*(s)}{\sum\limits_{s \in \signalset} f^*(s)}&
            \\
            &\leq 
            \sum\limits_{s \in \signalset} \frac{p_H (s)}{p_L (s)} \times \frac{f(s)}{\sum\limits_{s \in \signalset} f(s)} =: r_H
        \end{align*}
        where $r_H^*$ and $r_H$ are the probabilities that a High quality seller is rejected from a visit under the strategies $\strat^{G*}$ and $\strat^G$, respectively. 

        Since $r_H^* \geq r_H$ and $r_L^* = r_L$, efficiency is higher under $\strat^*$ than it is under $\strat$. It only remains to show that the strategy $\strat^*$ is optimal against the interim belief $\interim^*$ consistent with it. 

        The interim belief $\interim^*$ consistent with $\experiment^{G*}$ and $\strat^{G*}$ lies below $\interim$---the interim belief consistent with $\experiment^{G}$ and $\strat^{G}$:
        \begin{align*}
            \frac{
                \interim^*
            }{
                1 - \interim^*
            }
            \hspace{0.2cm}
            =
            \hspace{0.2cm}
            \frac{
                \sum\limits_{k=0}^{n-1} \left( r_H^{*} \right)^k
            }{
                \sum\limits_{k=0}^{n-1} \left( r_L^{*} \right)^k
            }
            \hspace{0.2cm}
            =
            \hspace{0.2cm}
            \frac{
                \sum\limits_{k=0}^{n-1} (r_H)^k
            }{
                \sum\limits_{k=0}^{n-1} (r_L)^k
            }
            \hspace{0.2cm}
            \leq
            \hspace{0.2cm}
            \frac{
                \interim
            }{
                1 - \interim
            }
        \end{align*}
        Under the interim belief $\interim^*$,
        it is optimal for a buyer upon the signal $s_L^{G*}$ if and only if:
        \begin{equation*}
            \frac{\interim^*}{1 - \interim^*} \times \frac{r_H^*}{r_L^*} \leq \frac{c}{1-c}
        \end{equation*}
        But this inequality must hold; since $\frac{r_H^*}{r_L^*} \leq \frac{r_H}{r_L}$, $\interim^* \leq \interim$, and $\strat^G$ is optimal against $\interim$:
        \begin{equation*}
            \frac{\interim^*}{1 - \interim^*} \times \frac{r_H^*}{r_L^*} \leq
            \frac{\interim}{1 - \interim} \times \frac{r_H}{r_L} \leq \frac{c}{1-c}
        \end{equation*}

        Furthermore, that $\sumpayoff ( \strat^{G*}; S^{G*} ) \geq \sumpayoff \left( \strat; S^G \right) \geq 0$ suggests that the expected suprlus from accepting a seller upon the ``approve'' recommendation must be weakly positive; hence optimal. Thus, the strategy $\strat^{G*}$ is optimal against $\interim^*$.

\end{proof}

\propgarbling*

\begin{proof}
      \noindent
      \textbf{Step 1:} The following are well-defined:
      \begin{itemize}
          \item the least selective monotone binary garbling under which adverse selection is irrelevant,
          \item the least (most) selective monotone binary garbling under which adverse selection is (not) irrelevant among those which produce IC recommendations. 
      \end{itemize}

      I first show the least selective monotone binary garbling under which adverse selection is irrelevant is well defined. For any monotone binary garbling $\experiment^G = (\signalset^G, \mathbf{P}^G)$, let $\mathbf{P} \times \mathbf{T} = \mathbf{P}^G$ and define $d (\experiment^G) := \sum\limits_{i=1}^{m} t_{i2}$.
      Evidently, $d(.)$ is a bijection between the space of monotone binary garblings of $\experiment$ and $[0,m]$. Also, where both are monotone binary garblings of $\experiment$, $\experiment^G$ is more selective than $\experiment^{G'}$ if and only if $d (\experiment^G) \leq d (\experiment^{G'})$. Thus, we seek the monotone binary garbling $d^{-1} ( D^* )$ where $D^* := \max \left\{ D \in [0,m]: \textrm{ a.s. is irrelevant under }d^{-1} (D)  \right\}$.
      We must only show that $D^*$ is well defined. To that end, define the Real valued function $F$ over the space of monotone binary garblings, where:
      \begin{align*}
          F(S^G) = 
          \begin{cases}
            \frac{\prior}{1 - \prior}
              \frac{p_H(s^*)}{p_L(s^*)} \times \left( 
                \frac{r_H^G}{r_L^G}
              \right)^{n-1} -
              \frac{c}{1-c} & d(\experiment^G) \in (0,m) \\
              \lim\limits_{D \downarrow 0} F \circ d^{-1} ( D ) & d(\experiment^G) = 0 \\
              + \infty & d (\experiment^G) = m
          \end{cases}
          \hspace{0.4cm}
          &
          \hspace{0.4cm}
          r_{\quality}^G := \sum\limits_{s \in \signalset} p_s^G(s_L^G) \times p_{\quality} (s)
      \end{align*}
    where $s^*$ is the threshold signal of this garbling. 
    
    So, we seek $D^* := \max \left\{ D \in [0,m]: F \circ d^{-1} (D) \geq 0 \right\}$.
    But this maximiser exists because the function $F \circ d^{-1}$ is upper semicontinuous: $F \circ d^{-1}$ is a decreasing function, and for any $\Bar{D} \in [0,m]$ and $\varepsilon > 0$, we can find some $\delta_{\varepsilon}$ such that $D \in ( \Bar{D} - \delta_{\varepsilon} , \Bar{D}) \cap [0,m]$ implies $d^{-1} (\Bar{D})$ and $d^{-1} (D)$ have the same threshold signal $s^*$ and thus $F \circ d^{-1} (D) < F \circ d^{-1} ( \Bar{D} ) + \varepsilon$ since $\sfrac{r_H^G}{r_L^G}$ is continuous in $d(S^G)$.

    Now say this garbling does not provide IC recommendations.
    Denote the interim belief that is consistent with evaluators following $\experiment^G$'s recommendations as $\interim^G$. The garbling $\experiment^G$ provides IC recommendations if:
    \begin{align*}
        \underbrace{\frac{r_H^G}{r_L^G} \times \frac{\interim^G}{1 - \interim^G} }_{:= f_1 (d (\experiment^G)) }
        &\leq \frac{c}{1-c}
        &
        \underbrace{\frac{1 - r_H^G}{1 - r_L^G} \times \frac{\interim^G}{1 - \interim^G}}_{:= f_2 ( d (\experiment^G) )}
        &\geq \frac{c}{1-c}
    \end{align*}
    As defined above, both $f_1(.)$ and $f_2(.)$ are continuous. Therefore, the set of monotone binary garblings with optimal recommendations---$\left\{ D \in [0,m] : f_1(D) \leq \frac{c}{1-c} \textrm{ and } f_2(D) \geq \frac{c}{1-c} \right\}$---is compact. Thus, both objects below are well-defined:
    \begin{align*}
        &\max \left\{ D \in [0,m] \textrm{ and } d^{-1}(D) \textrm{ has IC rec.s } : F \circ d^{-1} (D) \geq 0 \right\}
        \\
        &\min \left\{ D \in [0,m] \textrm{ and } d^{-1}(D) \textrm{ has IC rec.s } : F \circ d^{-1} (D) \leq 0 \right\}
    \end{align*}
    Among those with IC recommendations, 
    the former gives us the least selective garbling under which adverse selection is irrelevant. The latter gives us the most selective garbling under which adverse selection is not irrelevant among such garblings, since the least-selective garbling under which adverse selection is irrelevant does \textit{not} have IC recommendations (the minimiser of this set \textit{must} have $F \circ d^{-1} (D) < 0$).

    \par\vspace{\medskipamount}
    \noindent
    \textbf{Step 2:} Proving Proposition \ref{prop:optimalgarbling} as stated. 
    \par\vspace{\medskipamount}

    Efficiency under a monotone binary garbling $S^G$ and strategies $\strat^G$ that obey its recommendations is given by:
    \begin{equation*}
        \sumpayoff \left( \strat^G; S^G \right)
        =
        \prior - c - \prior \times \left( r_H^G \right)^{n} \times ( 1-c ) + ( 1 - \prior ) \times \left( r_L^G \right)^{n} \times c
    \end{equation*}
    As a function of $d^{-1}(.)$, efficiency is continuous and therefore attains its maximum over the set $[0,m]$. I show that this maximum is attained with the least selective garbling under which adverse selection is irrelevant.

    For the garbling $\experiment^G$, define $\experiment^{G}_{+\varepsilon} := d^{-1} ( d (\experiment^G) + \varepsilon )$ and $S^{G}_{-\varepsilon} := d^{-1} ( d (S^G) - \varepsilon )$. 
    Likewise, let $s^*_{+\delta}$ and $s^*_{-\delta}$ be the threshold signals of these experiments, and $r_{\quality;+\delta}^*$, $r_{\quality;-\delta}^*$ be the probability that a seller of quality $\quality$ is rejected in a visit, under each garbling. From our earlier reasoning about the impact of making evaluators strategies marginally more (less) selective, we observe that:
     \begin{equation*}
        \lim\limits_{\delta \to 0} \sumpayoff \left( \strat^G_{+\delta} ; \experiment^{G}_{+\delta} \right) - \sumpayoff \left( \strat^G ; \experiment^{G} \right)
        \propto
        \lim\limits_{\delta \to 0}
        \frac{\prior}{1 - \prior} \times \frac{p_H (\typesig^*_{+\delta})}{p_L (\typesig^*_{+\delta})} \times \left( \frac{r_{H; +\delta}^G}{r_{L; +\delta}^G} \right)^{n-1} - \frac{c}{1-c}
        \leq 0
    \end{equation*}
    where the last inequality follows since $\experiment^G$ is the least selective garbling under which adverse selection is irrelevant. We conclude that giving evaluators a marginally less selective garbling, and therefore (Lemma \ref{lem:overapprove}) any garbling that is less selective than $S^G$, cannot improve their payoffs. Likewise, for a marginally more selective garbling we have:
    \begin{equation*}
        \lim\limits_{\delta \to 0} \sumpayoff \left( \strat^G_{+\delta} ; \experiment^{G}_{-\delta} \right) - \sumpayoff \left( \strat^G ; \experiment^{G} \right)
        \propto
        -
        \lim\limits_{\delta \to 0}
        \frac{\prior}{1 - \prior} \times \frac{p_H (\typesig^*)}{p_L (\typesig^*)} \times \left( \frac{r_{H; +\delta}^G}{r_{L; +\delta}^G} \right)^{n-1} - \frac{c}{1-c}
        \geq 0
    \end{equation*}
    where the term on the RHS is now negative because the probability of trade \textit{decreases} when strategies become more selective. By a reasoning similar to that behind Lemma \ref{lem:overapprove}, this reveals that \textit{no} garbling that is more selective can improve efficiency either. 

    This also proves that among those with optimal recommendations, the least selective garbling under which adverse selection is irrelevant cannot be improved with a more selective garbling and the most selective garbling under which adverse selection is not irrelevant cannot be improved with a less selective garbling. 
    
\end{proof}

\newpage

\printbibliography

\end{document}